\documentclass[letterpaper,12pt]{amsart}
\usepackage[margin=1in]{geometry}
\usepackage{ragged2e}

\usepackage{tikz} 
\usetikzlibrary{calc} 
\usetikzlibrary{arrows.meta} 

\usepackage{latexsym,array,delarray,amsthm,amssymb,epsfig,tabu}
\usepackage{caption}
\usepackage{diagbox}
\usepackage{hyperref} 
\usepackage[capitalize]{cleveref} 
\usepackage{caption}

\usepackage{thmtools,thm-restate} 
\newtheorem{theorem}{Theorem}[section]  
\newtheorem{lemma}[theorem]{Lemma} 
\newtheorem{corollary}[theorem]{Corollary} 

\theoremstyle{definition} 
\newtheorem{example}[theorem]{Example} 

\theoremstyle{remark} 
\newtheorem{remark}{Remark} 
\newtheorem{observation}{Observation} 

\crefname{ineq}{Ineq.}{Ineqs.} 
\Crefname{ineq}{Inequality}{Inequalities} 
\crefalias{ineq}{equation} 
\creflabelformat{ineq}{(#2#1#3)}  
\makeatletter 
\AddToHook{cmd/appendix/before}{\def\cref@section@alias{appendix}} 
\makeatother 

\newcounter{claimcount}
\newenvironment{claim}{\refstepcounter{claimcount}\par\addvspace{\bigskipamount}\noindent\textbf{Claim \arabic{claimcount}:}}{}
\usepackage{etoolbox}
\AtBeginEnvironment{proof}{\setcounter{claimcount}{0}}
\newenvironment{claimproof}{\par\addvspace{\medskipamount}\noindent\textit{Proof of Claim  \arabic{claimcount}.}}{\hfill\ensuremath{{\scriptstyle\square}} \tiny{Claim}

  \medskip}
\crefname{claimcount}{Claim}{Claims}

\newcommand{\im}{\text{im}} 

\usepackage{natbib}

\usepackage{lineno}

\usepackage{setspace}

\begin{document}
\title[Identifying Triangles in Phylogenetic Networks]{Semialgebraic Conditions for Identifying Triangles in Phylogenetic Networks}

\author[]{Bryan Currie}
\address{New Jersey Institute of Technology}
\email{bc479@njit.edu}

\author[]{Aviva K. Englander}
\address{University of Wisconsin Madison}
\email{akenglander@wisc.edu}

\author[]{Jose A. Esparza-Lozano}
\address{University of Hawai`i at M\={a}noa}
\email{joseael@hawaii.edu}

\author[]{Elizabeth Gross}
\address{University of Hawai`i at M\={a}noa}
\email{egross@hawaii.edu}

\author[]{Max Hill}
\address{University of Hawai`i at M\={a}noa}
\email{max6@hawaii.edu}

\author[]{Colby Long}
\address{The College of Wooster}
\email{clong@wooster.edu}

\author[]{Devon Olds}
\address{North Carolina State University}
\email{dolds@ncsu.edu}

\author[]{Kawika O'Connor}
\address{University of Hawai`i at M\={a}noa}
\email{oconnor3@hawaii.edu}

\author[]{Udani Ranasinghe}
\address{University of Hawai`i at M\={a}noa}
\email{udanir@hawaii.edu}

\author[]{Christin Sum}
\address{University of Hawai`i at M\={a}noa}
\email{csum@hawaii.edu}

\begin{abstract}
  An important consideration for a model-based method of phylogenetic
  network inference is the identifiability of the network parameter of the
  model. A recurring theme in previous works exploring this issue is that it
  is often difficult to identify the orientation of edges in a triangle of the
  network. In fact, it has been shown that for some models it is impossible to
  determine the orientation of triangle edges utilizing the standard algebraic
  technique of phylogenetic invariants. In this work, we consider one such
  model with a Jukes-Cantor site-substitution process and no coalescence. We
  give a complete semialgebraic description of three, 3-leaf Jukes-Cantor
  phylogenetic network models with embedded triangles. By describing these
  base cases, we resolve several questions about the identifiability of
  networks with embedded triangles. We show that for any pair of models, the
  intersection and set differences of the models are full-dimensional regions
  of the space of site-pattern probability distributions. Thus, despite being
  algebraically indistinguishable, these network models are not identical, nor
  are they identifiable (or generically identifiable). Our results also yield
  a straightforward biological interpretation--that the signal from a
  hybridization event may be immediately detectable but decays over time until it
  is impossible to identify the orientation of edges in the triangle of a
  network.
\end{abstract}

\maketitle

\section{Introduction}

A phylogenetic network model represents non-tree-like evolution along a
directed acyclic graph. In such a model, network nodes of in-degree two
correspond to reticulation events, such as hybridization or lateral gene
transfer, where a certain proportion of the information at the node is assumed
to have been inherited along each of the incoming edges. Recognizing that
reticulation events are likely common in the evolutionary history of many
species, there has been increasing interest in the theory of phylogenetic
network models of DNA sequence evolution and their application to phylogenetic
inference \citep{barley2022evolutionary, Cui2013-ps, Gambette2017-kf,
hibbins2022phylogenomic, mallet2016reticulated, Pardi2015-ix, lobeloids,
solislemus2016snaq, vanderHeijden2025, Zhang2018-zk}.

One important consideration for developing model-based methods of phylogenetic
inference is the identifiability of the model parameters. In general, a model
parameter is identifiable if it can be uniquely recovered from some output of
the model. In the ideal case for phylogenetic inference, the observed data
will perfectly fit a phylogenetic model and the corresponding model parameters
are inferred. However, for this inference to be consistent, there must be a
unique set of parameters corresponding to the data, that is, the parameters
must be identifiable. As we consider the specific setting of networks, the
same holds: to develop a consistent model-based method of network inference,
the network parameter of the model must be identifiable. Several works have
studied identifiability in phylogenetic network models with and without a
coalescent process \citep{allman2025beyond, allman2024identifiability,
allman2022identifiability, banos2019identifying, englander2025identifiability,
gross2018distinguishing,gross2021distinguishing, holtgrefe2025distinguishing,
rhodes2025identifying,xu2023identifiability}.

One recurring theme in these works is that while many network features are
identifiable from data, it is often not possible to identify network
\emph{triangles} or to determine the orientation of edges in a triangle of the
network. For example, the results of \cite{gross2018distinguishing} establish
only the identifiability of \emph{triangle-free} networks, and the standard
methods of algebraic statistics employed therein are unable to distinguish
networks such as those shown in \Cref{fig:3cycles}. Likewise,
\cite{barton2022statistical,englander2025identifiability,gross2021distinguishing}
and \cite{holtgrefe2025distinguishing} also consider only triangle-free
networks due to the inadequacy of algebraic methods for distinguishing
triangles. It is thought that hybridization is negatively correlated with
genetic distance between species \citep{mallet2005hybridization}. Since
reticulation is expected to be most common among closely related species,
triangles are likely an important part of the evolutionary history of many
sets of taxa, making the gap in the current results especially unfortunate.
One notable exception to this trend are the results of
\cite{allman2024identifiability}, who showed that under the network multispecies
coalescent model, 3-cycles can sometimes be detected from gene tree quartet
concordance factors.

The results of \cite{englander2025identifiability,gross2018distinguishing} and
\cite{gross2021distinguishing} apply to DNA sequence models without a
coalescent process. Here, each choice of parameters produces a probability
distribution on the possible DNA site-patterns that may be observed in the
aligned DNA sequences of the species under consideration. The algebraic
approach used to prove the identifiability of the network parameter is based
on finding \emph{phylogenetic invariants} for each network. These are
polynomial relationships that are always satisfied by the site-pattern
probability distributions coming from a model on the network. The key idea is
to show that every pair of networks is \emph{algebraically distinguishable},
that is, there is a polynomial invariant for each that is not an invariant for
the other. Thus, given a site-pattern probability distribution from a network
model, one can evaluate the invariants at the distribution to determine the
network parameter that produced the data. This is a standard technique in
algebraic statistics and phylogenetic invariants have been used to establish
many identifiability results \citep[Chapter 15,
p.335-370]{sullivant2023algebraic}.
Often, the approach is to first find phylogenetic invariants for trees or
networks with just a few taxa using computational tools, and then, to use
combinatorial arguments and restrictions to subsets of taxa to show that trees
or networks with an arbitrary number of leaves are distinguishable.

As noted above, despite the power of the algebraic approach, for certain
models these methods are simply insufficient to establish the identifiability
of networks with embedded triangles. For example, under the Jukes-Cantor
model, there are no phylogenetic invariants for the three networks shown in
\Cref{fig:3cycles} and so they are not algebraically distinguishable.

\begin{figure*}
  \centering \includegraphics[width=13cm]{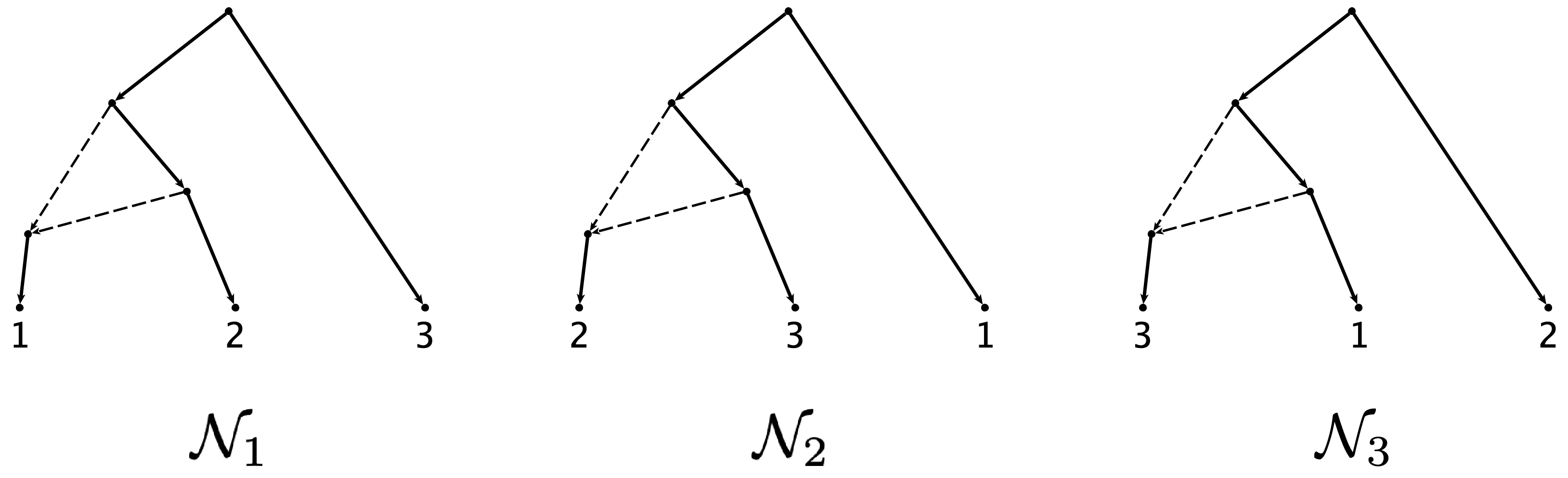}
  \caption{The three 3-leaf level-one triangle networks, each rooted on one of
    the non-hybrid leaf edges.}
  \label{fig:3cycles}
\end{figure*}

For those familiar with considering phylogenetic models from an algebraic
statistics perspective, the fact that there are no invariants for these
networks means that each of the models is a full-dimensional subset of the
probability simplex and its Zariski closure fills the entire space. However,
although their Zariski closures are equal, the example below gives some
intuition for why we may not expect the models themselves to be equal.

\begin{example}\label{ex:maxexample}
  Consider the site-pattern probability distribution generated by the
  Jukes-Cantor model on $\mathcal{N}_1$, the rooted 3-leaf network from
  \Cref{fig:3cycles}. \Cref{fig:maxexample} (left) shows this network with
  branch lengths given in expected number of mutations per site. The dotted
  reticulation edges are length $\epsilon > 0$ and represent a near
  instantaneous hybridization event. For this example, we assume half of the
  genetic information is inherited along each edge.

  If $\ell$ is sufficiently large, then the sites observed at leaves 2 and 3
  are nearly independent. If it is also the case that $s$ is very small, then
  the resulting site pattern probability distribution from the model
  represents essentially a coin flip as to whether the site observed at leaf 1
  is identical to the site at leaf 2 or at leaf 3. It seems impossible for
  such a distribution to result from any choice of branch lengths and
  reticulation parameter for networks $\mathcal{N}_{2}$ or $\mathcal{N}_{3}$.
  Indeed, by applying the results of
  \cref{thm:parameter-conditions-for-belonging-to-m1-intersect-m2}, we can
  show that for fixed $\ell$, when $s$ is below a certain threshold the
  resulting site-pattern probability distribution does not belong to the
  models on networks $\mathcal N_2$ and $\mathcal N_3$. The biological
  interpretation of this result is rather straightforward--it is generally
  possible to identify the precise nature of a hybridization event immediately
  after it occurs, but this signal decays until it eventually becomes
  impossible to determine the orientation of edges in the triangle of the
  network.

  \Cref{fig:maxexample} (right) shows this relationship explicitly for the
  limiting case as $\epsilon \to 0$. In the figure, points $(s, \ell)$ in the
  shaded region result in distributions that belong exclusively to the model
  on $\mathcal N_1$. For a fixed $\ell$, as $s$ increases we eventually
  approach the threshold of identifiability beyond which it is impossible to
  determine the orientation of edges in the triangle. As $\ell\to\infty$, the
  boundary threshold for $s$ approaches the asymptote
  $s=\ell-\frac{3}{4}\log(4)$, and as $\ell\to 0$, it approaches
  $\ell = (1+\sqrt{5})s$. The second asymptote implies that when $\ell$ is
  small (the biologically relevant case), distinguishing the hybrid node
  requires $s$ to be less than approximately
  $\ell/(1+\sqrt{5})\approx \ell/3.2$.
\end{example}

If the models are not equal, it is possible that there are polynomial
\emph{inequalities} that hold for each model that would allow us to
distinguish them. Such inequalities have also been used to establish
identifiability for phylogenetic models \citep{allman2024identifiability,
  englander2025identifiability}.

In this work, we consider the Jukes-Cantor model on the 3-leaf phylogenetic
networks with embedded triangles shown in \Cref{fig:3cycles}. By finding
polynomial inequalities for these base case models, we are able to resolve
several questions about the identifiability of networks with embedded
triangles. For example, our results show that in general these networks are
not identifiable (or generically identifiable), however, we also show that
none of the 3-leaf triangle network models are identical. That is, each
network model contains site-pattern probability distributions that do not
belong to the other two models.

The remainder of our paper proceeds as follows. In \textit{Materials and
  Methods}, we describe the phylogenetic network models and their
parameterizations in the Fourier coordinates. In \textit{Results}, we first
describe the identifiability of numerical parameters within a single model,
and then give a complete semialgebraic description of the network model for
$\mathcal{M}_{1}$, the model on $\mathcal{N}_{1}$. In the subsection
\textit{Model Intersections and Implications for Identifiability}, we apply
the above results to explore the intersections of all three Jukes-Cantor
3-leaf triangle network models, including giving necessary and sufficient
conditions in terms of the numerical parameters for the resulting site-pattern
probability distribution to lie in the intersection of two or more models. We
then further explore the size of model intersections. Finally, we present two
applications to biological data and discuss the implications of these results
for the identifiability and practical inference of phylogenetic networks.

\begin{figure*}[]
  \centering

  \begin{minipage}[c]{0.40\textwidth}
    \centering
    \begin{tikzpicture}[
      baseline=(current bounding box.center),
      line cap=round,
      line join=round,
      line width=1pt,
      >={Stealth[length=5pt,width=4pt]}
      ]

      \coordinate (R)  at (0,4.8);
      \coordinate (A)  at (-1.7,1.4);
      \coordinate (B)  at (0,1.4);
      \coordinate (C)  at (1.7,1.4);

      \coordinate (L2) at (-2,-.5);
      \coordinate (L1) at (0,-.5);
      \coordinate (L3) at (2,-.5);

      \draw[->, shorten >=1pt, ] (R) -- (A);
      \draw[->, shorten >=1pt, ] (R) -- (C);

      \draw[->, dashed, shorten >=1pt, ] (A) -- (B);
      \draw[->, dashed, shorten >=1pt, ] (C) -- (B);

      \draw[->, shorten >=1pt, ] (A) -- (L2);
      \draw[->, shorten >=1pt, ] (B) -- (L1);
      \draw[->, shorten >=1pt, ] (C) -- (L3);

      \foreach \p in {R,A,B,C,L1,L2,L3}
      \fill (\p) circle (1.1pt);

      \node[below, font=\sffamily] at (L2) {2};
      \node[below, font=\sffamily] at (L1) {1};
      \node[below, font=\sffamily] at (L3) {3};

      \node[left]  at ($(R)!0.52!(A)$) {$\ell/2$};
      \node[right] at ($(R)!0.52!(C)$) {$\ell/2$};

      \node[left]  at ($(A)!0.5!(L2)$) {$s$};
      \node[left]  at ($(B)!0.5!(L1)$) {$s$};
      \node[right] at ($(C)!0.5!(L3)$) {$s$};
    \end{tikzpicture}
  \end{minipage}
  \quad
  \begin{minipage}[c]{0.48\textwidth}
    \centering
    \begin{tikzpicture}[baseline=(current bounding box.center)]
      \node[inner sep=0] (image) {\includegraphics[height=2.2in,width=2.2in]{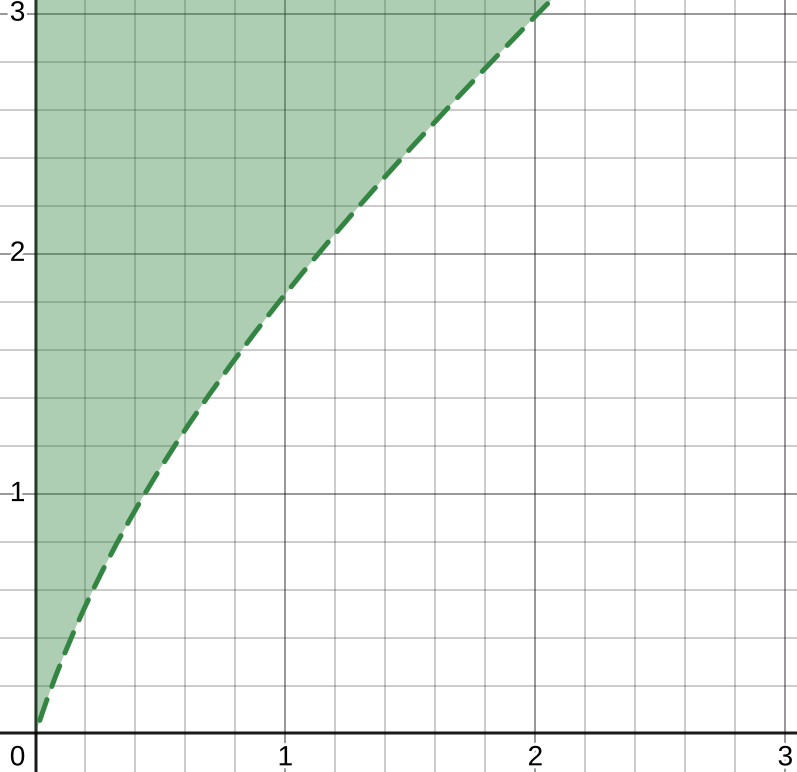}
      };

      \node[font=\sffamily] at ($(image.south)+(0,-11pt)$) {branch length $s$};

      \node[rotate=90,font=\sffamily] at ($(image.west)+(-11pt,0)$) {branch length $\ell$};
    \end{tikzpicture}
  \end{minipage}

  \caption{Left: The network $\mathcal{N}_1$ from \Cref{fig:3cycles}, with
    branch lengths given in expected number of mutations per site. Right: The
    shaded region represents the choices of branch lengths (in terms of $s$
    and $\ell$) for which the resulting site-pattern probability distributions
    belong exclusively to the model on $\mathcal{N}_{1}$.}
  \label{fig:maxexample}
\end{figure*}

\section{Materials and Methods}
\label{sec:materials-and-methods}

A distribution in a phylogenetic network model of DNA sequence evolution can be viewed as a mixture of tree distributions. Thus we begin this section by describing tree-based Markov models of DNA sequence evolution. 

\subsection{Phylogenetic Tree Models} 
\label{sec:phylogenetic-tree-models}
Let $\mathcal T$ be a rooted $n$-leaf binary phylogenetic tree with leaves labeled by the taxa in $[n]=\{1, \ldots, n\}$ and root vertex $\rho$. To each vertex $v$ of $\mathcal{T}$, we associate a random variable $X_v$ with state space the set of four DNA bases  $\{\mathtt{A},\mathtt{G},\mathtt{C},\mathtt{T}\}$. The state of this random variable represents the nucleotide at the site being modeled in the DNA sequence of the taxon represented by the vertex $v$. 

The \emph{root distribution} of the model, $\boldsymbol{\pi}$, specifies the probability of observing each of the four DNA bases at the root of the tree. We associate to each edge of $\mathcal T$ a $4 \times 4$ Markov transition matrix with rows and columns indexed by the four DNA bases. The entries of these matrices describe the rates of mutation along each edge, that is, for edge $e = uv$ of $\mathcal T$, $M^e_{ij} = P(X_v = j \mid X_u = i).$ We refer to the root distribution and the entries of the transition matrices as the \emph{numerical parameters} of the model. 

Given an assignment of states to the vertices of $\mathcal T$,
\begin{equation*}
  \phi: V(\mathcal T) \rightarrow \{\mathtt{A}, \mathtt{G}, \mathtt{C}, \mathtt{T}\},
\end{equation*}
we can compute the probability of observing this particular assignment of states using the root distribution and the transition matrices as

\begin{align*}
  p(\phi)=\boldsymbol{\pi}_{\phi(\rho)}\prod_{e\in E(\mathcal T)}M_{\phi(u),\phi(v)}^e.
\end{align*}

To compute the probability of observing a particular \emph{site-pattern} $(X_1,\ldots,X_n)$ in the aligned DNA sequences of the taxa at the leaves, we compute the joint distribution of the leaves by marginalizing over all possible assignments of states to the interior vertices of $\mathcal T$. More formally, if $\phi|_{\mathcal L}$ is the $n$-tuple of states that $\phi$ assigns to the leaves $\mathcal L$ of $\mathcal T$, then the probability of observing the site-pattern $\omega = (i_1, i_2, \ldots, i_n)$ is

\begin{align}
  \label{eq:site-pattern-probability-formula}
  p_\omega 
  &= \sum_{\phi\,:\,\phi|_{\mathcal L} =\omega}p(\phi) 
  = \sum_{\phi\,:\,\phi|_{\mathcal L}=\omega}
    \pi_{\phi(\rho)}
    \prod_{e\in E(\mathcal T)}M_{\phi(u),\phi(v)}^e.
\end{align}

In this work, we will assume that the transition matrices come from a continuous-time Jukes-Cantor model of DNA sequence evolution. Thus, the root distribution is uniform, and each transition matrix $M^e$ has the form

\begin{equation} 
  \label{eq:transition-matrix}
  M^e_{ij} =
  \begin{cases}
    \frac{1}{4} + \frac{3}{4}e^{-4 t /3}  &\text{if } i = j\\
    \frac{1}{4} - \frac{1}{4}e^{-4 t/3 } &\text{if } i \neq j
  \end{cases},
\end{equation}
where $t$ is the branch length in expected number of mutations per site
\citep{jukes1969evolution}; see also \citep[Chapter~8, pp.194-197]{semple2003phylogenetics}.

For a fixed tree $\mathcal T$, the phylogenetic model defines a map
\begin{align} 
  \label{eq:simple-param}
  \psi_{\mathcal T}:\Theta_{\mathcal T}  \to \Delta^{4^n-1}
\end{align}
from the numerical parameter space $\Theta_{\mathcal T}$ for the model to the set of site-pattern probability distributions.
The image of this map, 
$\mathcal M_\mathcal T = \im(\psi_\mathcal T)$, is the \emph{phylogenetic model associated to $\mathcal T$}. 

\subsection{Phylogenetic Network Models} 
\label{sec:phylogenetic-network-models}
To define a site-substitution model on an $n$-leaf rooted binary phylogenetic
network $\mathcal N$, we first specify a root distribution and then associate
a transition matrix to each edge of $\mathcal{N}$ just as for a phylogenetic
tree model. Let $w_1, \ldots, w_m$ be the \emph{reticulation vertices} of
$\mathcal N$, the nodes of in-degree two, and let $e^0_i$ and $e^1_i$ be the
edges directed into $w_i$.

To obtain a tree from the network, for each $i\in [m]$, we independently
delete $e^1_i$ with probability $\delta_i \in (0,1)$, and otherwise, we delete
$e^0_i$. Intuitively, the parameter $\delta_i$ corresponds to the probability
that the particular site being modeled was inherited along edge $e^0_i$. After
deleting the $m$ edges, the result is a rooted $n$-leaf phylogenetic tree.
Each of these trees, along with its transition matrices inherited from
$\mathcal N$, gives rise to a distribution in a phylogenetic tree model, and
we can take a convex combination of these distributions to get a site-pattern
probability distribution from the network. For the networks we consider in
this paper, there is only a single reticulation node, and so for each network
we obtain a map
\begin{gather*}
  \label{eq:network-param}
  \psi_\mathcal{N}:
  \Theta_{\mathcal{N}} \times (0,1) 
  \to
  \Delta^{4^n - 1}, \\ 
  (\theta,\delta) \mapsto 
  \delta\psi_{\mathcal{T}_1}(\theta) + 
  ( 1 - \delta)\psi_{\mathcal{T}_2}(\theta).
\end{gather*}
In the expression above, $\Theta_\mathcal{N} \times (0,1)$ represents the
numerical parameter space of the model, which includes the root distribution,
transition matrices, and the single reticulation edge parameter $\delta$
chosen between $0$ and $1$. The trees $\mathcal {T}_1$ and $\mathcal {T}_2$
are the embedded trees of the network obtained by independently deleting
$e^1_1$ and $e^0_1$ respectively. Just as for a tree, we define the image of
this map, $\mathcal M_{\mathcal N} = \im(\psi_{\mathcal N})$, to be the
\emph{phylogenetic model associated to the network $\mathcal N$}. Note that as
in \citep{englander2025identifiability} we do not allow $\delta$ to be either 0
or 1. Thus, the models of the embedded trees are not contained in the network
model.

Though these models are naturally defined in terms of rooted networks, because
the Jukes-Cantor model is time-reversible, the location of the root in the
network is unidentifiable \citep{gross2021distinguishing}. Thus, we can
determine the parameterization of the phylogenetic model associated to the
network $\mathcal N$ from the semi-directed network obtained from
$\mathcal{N}$ by suppressing the root vertex and undirecting all edges other
than the reticulation edges. For example, the semi-directed versions of
$\mathcal{N}_1,\mathcal{N}_2,$ and $\mathcal{N}_3$ from \Cref{fig:3cycles} are
depicted in \Cref{fig:models-N1-N2-N3-with-labels}. For these three networks,
we denote the corresponding phylogenetic models by
$\mathcal{M}_{1},\mathcal{M}_{2},$ and $\mathcal{M}_{3}$.

\subsection{The Fourier Coordinates}
\label{sec:Fourier}

To compute the joint distribution at the leaves of a tree in a phylogenetic
model, we sum over all possible states of the internal vertices. For one of
the 3-leaf triangle networks of \Cref{fig:3cycles}, after we delete a
reticulation edge, the resulting tree has five edges, each with their own
transition matrix, and three internal vertices. Thus, following
\cref{eq:site-pattern-probability-formula}, each coordinate of the map
$\psi_{\mathcal T_i}$ is parameterized by a degree 5 polynomial with
$4^3 = 64$ terms. As a convex sum of $\psi_{\mathcal T_1}$ and
$\psi_{\mathcal T_2}$, the parameterization of the network is even more
complex, making it difficult to perform algebraic computations or to obtain a
semialgebraic description of the model. Thus, we will apply a linear change of
coordinates called the discrete Fourier transform that is defined for the
Jukes-Cantor model as well as a broader class of models known as
\emph{group-based models} \citep{evans1993invariants}. In these new
coordinates, the phylogenetic tree models are parameterized by monomials,
which makes computations much more tractable. The Jukes-Cantor model is a
group-based model, and we give a brief outline here of how to obtain the
parameterization of the network in these new coordinates. It is not necessary
to state all of the definitions surrounding group-based models and the
discrete Fourier transform in order to describe the parameterization, so we do
not delve into the details here and refer the interested reader instead to
\cite{SS05} and \cite[Chapter~15,
pp.~335-370]{sullivant2023algebraic}.

For what follows, we will assume that the tree $\mathcal T$ is
\emph{unrooted}. We identify the state space $\{\mathtt{A}, \mathtt{G}, \mathtt{C}, \mathtt{T}\}$ with the Klein
four-group $\mathbb{Z}_2\times\mathbb{Z}_2$ as follows: $\mathtt{A}=(0,0)$, $\mathtt{G}=(1,0)$,
$\mathtt{C}=(0,1)$, $\mathtt{T}=(1,1)$. We associate to each edge $e_i\in E(\mathcal{T})$ the
four eigenvalues of its transition matrix $M^{e_i}$, which we refer to as the
\emph{Fourier parameters}, denoted $a_{\mathtt{A}}^i$, $a_{\mathtt{G}}^i$, $a_{\mathtt{C}}^i$ and $a_{\mathtt{T}}^i$. The
largest of these, $a_{\mathtt{A}}^i$, is always 1 since $M^{e_i}$ is a stochastic matrix.
Due to symmetries of the Jukes-Cantor model, the remaining three Fourier
parameters satisfy $a^i_{\mathtt{C}}= a^i_{\mathtt{G}}= a^i_{\mathtt{T}} = e^{-4t_i/3}$, where $t_i$ is the
branch length of $e_i$ in expected number of mutations per site. Hence there
is really only a single Fourier parameter $a_i$ for each edge $e_i$, namely,
\begin{equation}
  \label{eq:fourier-parameter-to-branch-length}
  a_{i} = e^{-4t_{i}/3}.
\end{equation}
For this paper, we will assume that each branch length $t_i$ is finite and
positive. Notice that since $t_i> 0$, the corresponding Fourier parameter
$a_i$ lies in the interval $(0, 1)$. The parameter $a_i$ can be regarded as a
measure of the nucleotide transmission fidelity along its edge, a quantity
which decreases to $0$ as the branch length $t_i$ increases.

The Fourier transform extends to a transformation of the coordinate space,
from probability coordinates to what we will call \emph{$q$-coordinates}. The
\emph{Fourier parameterization}, which gives us the value of the
$q$-coordinate $q_\omega$ for the site-pattern
$\omega=(g_1,g_2,\dots,g_n)\in \left\{\mathtt{A},\mathtt{G},\mathtt{C},\mathtt{T}\right\}^n$, is defined as
{\small
\begin{equation} 
  \label{eq:fourier-parameterization}
  q_\omega
  =
  \begin{cases}
    \prod\limits_{\substack{e_i\in E(\mathcal{T})\\ 
    e_i\text{ induces the split $A|B$}}} a_{\sum_{j\in A}g_j}^i 
    &\text{if }\sum\limits_{j=1}^ng_j=0,\\
    0 &\text{otherwise}
  \end{cases}
\end{equation}}
where addition is in the group $\mathbb{Z}_2\times\mathbb{Z}_2$. 

These $q$-coordinates are not interpretable as probabilities; however, there
is an inverse Fourier transform that allows us to convert back and forth
between the $q$-coordinates and the probability coordinates (see
\cref{app:fourier-transform}). Thus, a semialgebraic description of a model in
terms of $q$-coordinates can be transformed to obtain a semialgebraic
description of the model in probability coordinates (and vice versa). For the
rest of this paper we will work in the Fourier coordinates, and for
simplicity, we will use $\mathcal{M}_{1}, \mathcal{M}_{2},$ and
$\mathcal{M}_{3}$ for the transformed models in the space of $q$-coordinates.

\subsection{Parameterization of the 3-Leaf Jukes-Cantor Networks}
\label{sec:parameterization}

In this section, we give the parameterization of the model
$\mathcal M_1$ in Fourier coordinates using the semi-directed
version of $\mathcal{N}_1$ with edges labeled as in
\Cref{fig:models-N1-N2-N3-with-labels}. From this parameterization,
one can easily obtain parameterizations of $\mathcal M_2$ and
$\mathcal M_3$ by permuting coordinates, as described below.

\begin{figure}[h!]
  \centering
  \begin{tikzpicture}[scale=1, transform shape,
    >={Stealth[length=4pt,width=3pt]}, line cap=round, line join=round]
    \coordinate (A) at (0,0); 
    \coordinate (B) at (2.1,0.15); 
    \coordinate (C) at (1.15,1.45); 
    \coordinate (L1) at (-0.35,-1.2); 
    \coordinate (L2) at (2.45,-1.2); 
    \coordinate (L3) at (1.15,2.6); 
    \draw[-, thick] (A) -- (L1); \draw[-, thick] (B) -- (L2); \draw[-, thick]
    (C) -- (L3); \draw[thick] (C) -- (B); \draw[->, shorten >=1pt, shorten
    <=1pt, thick, dashed] (B) -- (A); \draw[->, shorten >=1pt, shorten <=1pt,
    thick, dashed] (C) -- (A);
    \filldraw[black] (A) circle (1pt); \filldraw[black] (B) circle (1pt);
    \filldraw[black] (C) circle (1pt); \filldraw[black] (L1) circle (1pt);
    \filldraw[black] (L2) circle (1pt); \filldraw[black] (L3) circle (1pt);
    \node[below] at (L1) {1}; \node[below] at (L2) {2}; \node[above] at (L3)
    {3};
    \node[left] at ($(A)!0.5!(L1)$) {$a_{1}$}; \node[right] at
    ($(B)!0.5!(L2)$) {$a_{2}$}; \node[left] at ($(C)!0.5!(L3)$) {$a_{3}$};
    \node[below] at ($(A)!0.5!(B)$) {$a_{4}$}; \node[right] at
    ($(C)!0.5!(B)$) {$a_{5}$}; \node[left] at ($(C)!0.5!(A)$) {$a_{6}$};
    \node at (1.05,-2.1) {$\mathcal{N}_{1}$};
  \end{tikzpicture}
  \hspace{1cm}
  \begin{tikzpicture}[scale=1, transform shape,
    >={Stealth[length=4pt,width=3pt]}, line cap=round, line join=round]
    \coordinate (A) at (0,0); 
    \coordinate (B) at (2.1,0.15); 
    \coordinate (C) at (1.15,1.45); 
    \coordinate (L1) at (-0.35,-1.2); 
    \coordinate (L2) at (2.45,-1.2); 
    \coordinate (L3) at (1.15,2.6); 
    \draw[-, thick] (A) -- (L1); \draw[-, thick] (B) -- (L2); \draw[-, thick]
    (C) -- (L3); \draw[thick] (C) -- (B); \draw[->, shorten >=1pt, shorten
    <=1pt, thick, dashed] (B) -- (A); \draw[->, shorten >=1pt, shorten <=1pt,
    thick, dashed] (C) -- (A);
    \filldraw[black] (A) circle (1pt); \filldraw[black] (B) circle (1pt);
    \filldraw[black] (C) circle (1pt); \filldraw[black] (L1) circle (1pt);
    \filldraw[black] (L2) circle (1pt); \filldraw[black] (L3) circle (1pt);
    \node[below] at (L1) {2}; \node[below] at (L2) {1}; \node[above] at (L3)
    {3};
    \node[left] at ($(A)!0.5!(L1)$) {$b_{2}$}; \node[right] at
    ($(B)!0.5!(L2)$) {$b_{1}$}; \node[left] at ($(C)!0.5!(L3)$) {$b_{3}$};
    \node[below] at ($(A)!0.5!(B)$) {$b_{4}$}; \node[right] at
    ($(C)!0.5!(B)$) {$b_{5}$}; \node[left] at ($(C)!0.5!(A)$) {$b_{6}$};
    \node at (1.05,-2.1) {$\mathcal{N}_{2}$};
  \end{tikzpicture}
  \hspace{1cm}
  \begin{tikzpicture}[scale=1, transform shape,
    >={Stealth[length=4pt,width=3pt]}, line cap=round, line join=round]
    \coordinate (A) at (0,0); 
    \coordinate (B) at (2.1,0.15); 
    \coordinate (C) at (1.15,1.45); 
    \coordinate (L1) at (-0.35,-1.2); 
    \coordinate (L2) at (2.45,-1.2); 
    \coordinate (L3) at (1.15,2.6); 
    \draw[-, thick] (A) -- (L1); \draw[-, thick] (B) -- (L2); \draw[-, thick]
    (C) -- (L3); \draw[thick] (C) -- (B); \draw[->, shorten >=1pt, shorten
    <=1pt, thick, dashed] (B) -- (A); \draw[->, shorten >=1pt, shorten <=1pt,
    thick, dashed] (C) -- (A);
    \filldraw[black] (A) circle (1pt); \filldraw[black] (B) circle (1pt);
    \filldraw[black] (C) circle (1pt); \filldraw[black] (L1) circle (1pt);
    \filldraw[black] (L2) circle (1pt); \filldraw[black] (L3) circle (1pt);
    \node[below] at (L1) {3}; 
    \node[below] at (L2) {1}; 
    \node[above] at (L3) {2};
    \node[left] at ($(A)!0.5!(L1)$) {$c_{3}$}; \node[right] at
    ($(B)!0.5!(L2)$) {$c_{1}$}; \node[left] at ($(C)!0.5!(L3)$) {$c_{2}$};
    \node[below] at ($(A)!0.5!(B)$) {$c_{4}$}; \node[right] at
    ($(C)!0.5!(B)$) {$c_{5}$}; \node[left] at ($(C)!0.5!(A)$) {$c_{6}$};
    \node at (1.05,-2.1) {$\mathcal{N}_{3}$};
  \end{tikzpicture}
  
  \caption{The three semi-directed networks corresponding to models
    $\mathcal{M}_1, \mathcal{M}_2,$ and $\mathcal{M}_3$, with associated
    Fourier edge parameters.}
  \label{fig:models-N1-N2-N3-with-labels}
\end{figure}

We begin by noting that because the Fourier transform gives a linear change of
coordinates, for the network model, each $q_\omega$ is parameterized by a
convex combination of the parameterizations for the displayed trees in Fourier
coordinates. For example, if we delete the edge labeled $a_6$
with probability $\delta$ and the
$a_4$ edge
with probability $(1 - \delta)$ in the Jukes-Cantor model on the network
$\mathcal{N}_1$ shown in \Cref{fig:models-N1-N2-N3-with-labels}, then
$ q_{\mathtt{CGT}}$, $q_{\mathtt{GGA}}$, and $q_{\mathtt{AAA}}$ are parameterized as follows:
\begin{align*}
  q_{\mathtt{CGT}} &= \delta a^{1}_{\mathtt{C}}a^{2}_{\mathtt{G}}a^{3}_{\mathtt{T}}a^{4}_{\mathtt{C}}a^{5}_{\mathtt{T}} + (1-\delta)a^{1}_{\mathtt{C}}a^{2}_{\mathtt{G}}a^{3}_{\mathtt{T}}a^{5}_{\mathtt{G}}a^6_{\mathtt{C}}    \\
  q_{\mathtt{GGA}} &= \delta a^{1}_{\mathtt{G}}a^{2}_{\mathtt{G}}a^{3}_{\mathtt{A}}a^{4}_{\mathtt{G}}a^{5}_{\mathtt{A}} + (1-\delta)a^{1}_{\mathtt{G}}a^{2}_{\mathtt{G}}a^{3}_{\mathtt{A}}a^{5}_{\mathtt{G}}a^6_{\mathtt{G}}    \\
  q_{\mathtt{AAA}} &= \delta a^{1}_{\mathtt{A}}a^{2}_{\mathtt{A}}a^{3}_{\mathtt{A}}a^{4}_{\mathtt{A}}a^{5}_{\mathtt{A}} + (1-\delta)a^{1}_{\mathtt{A}}a^{2}_{\mathtt{A}}a^{3}_{\mathtt{A}}a^{5}_{\mathtt{A}}a^6_{\mathtt{A}}.
\end{align*}

From \Cref{eq:fourier-parameterization}, we see that many coordinates for a
group-based model on a tree are trivial after transformation, and the same
coordinates will be trivial for the network model. Thus, for the 3-leaf
network model, there are only 16 non-trivial coordinates to consider. We can
also simplify the parameterization of many of these non-trivial coordinates.
Since $a^i_{\mathtt{A}}= 1$ for all $i\in[6]$, it is not necessary to include the $a^i_{\mathtt{A}}$
parameters (and hence, no need to consider the coordinate $q_{\mathtt{AAA}}$
which is equal to 1 for any choice of parameters for the model). Moreover,
since $a^i_{\mathtt{C}} = a^i_{\mathtt{G}}= a^i_{\mathtt{T}}$, several of the Fourier coordinates
are always identical; for example,
$q_{\mathtt{CCA}} = q_{\mathtt{GGA}} = q_{\mathtt{TTA}}$.

After making these simplifications, there are only four non-trivial
equivalence classes of Jukes-Cantor $q$-coordinates. We choose one
representative from each of these equivalence classes,
$q_{\mathtt{ACC}},q_{\mathtt{CAC}}, q_{\mathtt{CCA}},$ and $q_{\mathtt{CGT}}$,
to form the set of \emph{simplified Jukes-Cantor $q$-coordinates}. For
$\mathcal{N}_1$, the simplified Jukes-Cantor $q$-coordinates are parameterized
as follows:

\begin{equation}
  \label{eq:q-defs}
  \begin{aligned}
    q_{\mathtt{ACC}} &= a_2a_3a_5  \\
    q_{\mathtt{CAC}} &= \delta a_1a_3a_4a_5 \hphantom{lll} + (1-\delta)a_1a_3a_6 \\
    q_{\mathtt{CCA}} &= \delta a_1a_2a_4 \hphantom{llllll} + (1-\delta)a_1a_2a_5a_6 \\
    q_{\mathtt{CGT}} &= \delta a_1a_2a_3a_4a_5 + (1-\delta)a_1a_2a_3a_5a_6
  \end{aligned}
\end{equation}

\noindent where $a_{1},\ldots,a_{6}\in (0,1)$ and $\delta\in (0,1)$.

Working in the simplified Jukes-Cantor $q$-coordinates, the parameter space
becomes
\begin{equation*}
  \Theta_{\mathcal{N}_1} = \left\{(a_{1},\ldots,a_{6}):
    0<a_{1},\ldots,a_{6}<1\right\},
\end{equation*}
the parameterization map becomes
\begin{align*}
  \psi_{\mathcal{N}_{1}}:\mathbb{R}^7 & \to \mathbb{R}^{4} \\
  (a_1,a_2,a_3,a_4,a_5,a_6,\delta)& \mapsto (q_{\mathtt{ACC}},q_{\mathtt{CAC}},q_{\mathtt{CCA}},q_{\mathtt{CGT}}),
\end{align*}
and the model becomes
\begin{equation*}
  \mathcal{M}_{1} = \left\{\psi_{\mathcal{N}_{1}}(\theta): \theta\in
    \Theta_{\mathcal{N}_{1}}\times (0,1)\right\} \subseteq \mathbb{R} ^4.
\end{equation*}

\section{Results}
\label{sec:results}

In this section, we begin by describing the non-identifiability of the numerical parameters for a single 3-leaf triangle network model.

\subsection{Identifiability of Numerical Parameters}
\label{sec:identifiability-of-numerical-parameters}

The numerical parameters of the 3-leaf network model are not identifiable. Indeed, only the products $\delta a_1a_4$ and $(1-\delta)a_1a_6$ are recoverable at best. Thus, for 
any choice of $a_1, a_2,a_3,a_4,a_5, a_6, \delta \in(0,1)$, we can describe a two-dimensional region $R$  
of parameter space for which the map $\psi_{\mathcal{N}_1}$  sends every
point of $R$ to the same point in the model. For example, for any $\theta=(a_{1},\ldots,a_{6},\delta)\in \Theta_{\mathcal{N}_{1}}\times (0,1)$, we
can construct a new set of parameters $\theta'\in \Theta_{\mathcal{N}_{1}}\times (0,1)$
such that $\psi_{\mathcal{N}_{1}}(\theta)=\psi_{\mathcal{N}_1}(\theta')$ by taking
\begin{equation}
  \label{eq:-scale-first-NEW}
  \theta' 
  = \left(\beta_1a_1,a_{2},a_{3},\frac{a_4}{\beta_1 \beta_2},a_{5},
    \frac{a_6(1-\delta)}{\beta_1(1-\beta_2 \delta)},\beta_2\delta\right)
\end{equation}
for any choice of $\beta_1,\beta_2$ in appropriate open intervals of the positive real numbers containing~1.

\begin{figure*}
  \centering
  \includegraphics[scale=.26]{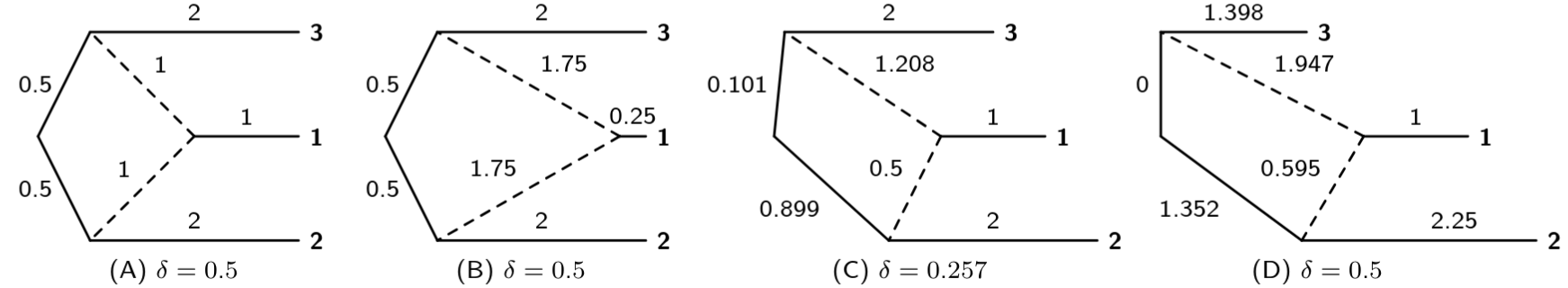}
  \caption{Networks (A)-(D) show four different choices of branch lengths
    for $\mathcal{N}_1$ that all give rise to the same site-pattern
    probability distribution. The networks are rooted along edge $e_5$ for
    display, but as noted above, the location of the root is not identifiable.
    Branch lengths, measured in expected number of mutations per site, are
    displayed in the figure as horizontal distances.}
  \label{fig:unidentifiable-networks}
\end{figure*}

The networks shown in \Cref{fig:unidentifiable-networks} illustrate this
non-identifiability of the numerical parameters. First, we assign branch
lengths to the network $\mathcal{N}_{1}$ in \Cref{fig:models-N1-N2-N3-with-labels} to obtain
network~(A). For network (A), the edges $e_1, e_4, e_5$ and $e_6$ are all length 1, the
edges $e_2$ and $e_3$ are length 2, and $\delta = \frac{1}{2}.$ Recall that
the Fourier parameter corresponding to the edge $e_i$ of length $t_i$ is
$a_i = e^{-4t_i/3}$. Thus, the Fourier parameters for network (A) are
$a_1 = a_4 = a_5 = a_6 = e^{-4/3}$ and $a_2 = a_3 = e^{-8/3}$. Networks (B)
and (C) are obtained by scaling the parameters according to
\cref{eq:-scale-first-NEW} and then converting the Fourier parameters to
branch lengths. In particular, network (B) is obtained by setting
$\beta_1 = e$ and $\beta_2 = 1$, and network (C) by setting $\beta_1 = 1$ and
$\beta_2 = e^{-2/3}$.

To account for the non-identifiability of $a_{1},a_{4},a_{6},$ and $\delta$, in
our proof of \Cref{thm:m1-semialgebraic-description-in-qs} we will
reparameterize the model by making the substitutions
$a_{4}^{*} = \delta a_{1}a_{4}$ and $a_{6}^{*}=(1-\delta)a_{1}a_{6}$ (see
\Cref{app:proof-of-main-thm} for details). The new parameters $a_{4}^{*}$ and
$a_{6}^{*}$ must satisfy the additional restriction that
$a_{4}^{*}+a_{6}^{*}<1$ but are otherwise free. Yet, even after introducing
$a_4^*$ and $a_6^*$, the map is still not one-to-one. For a generic point
$\textbf{q}\in\mathcal M_1$, the fiber is a one-dimensional region of
$\mathbb R^5$, which is described in detail in
\cref{lem:existence-of-a-family-of-solutions}. Thus, since there is an
additional degree-of-freedom, if we perturb one of the Fourier parameters, we
can adjust the others so that the corresponding point in the model remains
unchanged. Network (D) in \Cref{fig:unidentifiable-networks} is one such
network, where beginning with network (A), we increase the length of edge
$e_2$ by $1/4$, and adjust all other parameters accordingly.

\subsection{A Semialgebraic Description of the Jukes-Cantor 3-Leaf Triangle Network Models}
\label{sec:a-semialgebraic-description}

We now give a complete semialgebraic description of $\mathcal{M}_1$ in the
space of simplified Jukes-Cantor $q$-coordinates. Combined with the results
above, this will allow us to characterize the intersections of $\mathcal M_1$,
$\mathcal M_2$, and $\mathcal M_3$ to gain insight into the question of
network identifiability.

\begin{theorem}[Semialgebraic description of $\mathcal{M}_{1}$]
  \label{thm:m1-semialgebraic-description-in-qs}
  Let $\mathbf{q}=(q_{\mathtt{ACC}},q_{\mathtt{CAC}},q_{\mathtt{CCA}},q_{\mathtt{CGT}})$. Then
  $\mathbf{q}\in \mathcal{M}_{1}$ if and only if the following inequalities
  hold:
  \begin{align}
    0<q_{\mathtt{ACC}},q_{\mathtt{CAC}},q_{\mathtt{CCA}},q_{\mathtt{CGT}}&<1 \label{ineq:0<q<1}\\
    q_{\mathtt{ACC}} -   q_{\mathtt{CGT}} & >0 \label{ineq:ACC-CGT}\\ 
    q_{\mathtt{CGT}} - q_{\mathtt{ACC}}q_{\mathtt{CAC}} &> 0 \label{ineq:CGT-ACC.CAC} \\
    q_{\mathtt{ACC}}q_{\mathtt{CAC}}q_{\mathtt{CCA}}-q_{\mathtt{CGT}}^{2}&>0 \label{ineq:hybrid}\\
    q_{\mathtt{CGT}}+q_{\mathtt{ACC}}(q_{\mathtt{CGT}}-q_{\mathtt{CAC}}-q_{\mathtt{CCA}}) &>0. \label{ineq:a23}
  \end{align}
  Moreover, if $\textbf{q}\in \mathcal{M}_{1}$ then the following inequalities
  also hold:
  \begin{align}
    q_{\mathtt{CAC}} -   q_{\mathtt{CGT}} & >0  \label{ineq:CAC-CGT}\\ 
    q_{\mathtt{CCA}} -   q_{\mathtt{CGT}} & >0  \label{ineq:CCA-CGT}\\ 
    q_{\mathtt{CGT}} - q_{\mathtt{ACC}}q_{\mathtt{CCA}} &> 0.  \label{ineq:CGT-ACC.CCA}
  \end{align}
\end{theorem}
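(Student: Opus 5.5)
The plan is to simplify the parameterization until the model is described by five essentially free parameters, and then prove the two directions separately. First I remove the non-identifiability of $a_1,a_4,a_6,\delta$ with the substitution $x=a_4^{*}=\delta a_1a_4$ and $y=a_6^{*}=(1-\delta)a_1a_6$. Then
\[
q_{\mathtt{ACC}}=a_2a_3a_5,\quad q_{\mathtt{CAC}}=a_3(a_5x+y),\quad q_{\mathtt{CCA}}=a_2(x+a_5y),\quad q_{\mathtt{CGT}}=a_2a_3a_5(x+y).
\]
The first step is to check that $\mathcal M_1$ is exactly the image of $a_2,a_3,a_5\in(0,1)$ and $x,y>0$ with $x+y<1$. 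Given such $x,y$, set $s=x+y$, choose $a_1\in(s,1)$ and $\delta=x/s$. Then $a_4=x/(\delta a_1)=s/a_1<1$ and $a_6=y/((1-\delta)a_1)=s/a_1<1$, so the original parameters are recovered. Throughout I write $P=q_{\mathtt{ACC}}$ and $s=q_{\mathtt{CGT}}/q_{\mathtt{ACC}}$.

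For necessity (including the ``moreover'' part) I substitute the parameterization into each inequality and factor.
\begin{itemize}
\item Ineq.~(\ref{ineq:ACC-CGT}) is $s<1$.
\item The identity $(a_5x+y)(x+a_5y)-a_5s^2=xy(1-a_5)^2$ gives
\[
q_{\mathtt{ACC}}q_{\mathtt{CAC}}q_{\mathtt{CCA}}-q_{\mathtt{CGT}}^2=a_2^2a_3^2a_5\,xy(1-a_5)^2>0,
\]
which is Ineq.~(\ref{ineq:hybrid}).
\item Ineq.~(\ref{ineq:CGT-ACC.CAC}) becomes $P\bigl(s-a_3(a_5x+y)\bigr)>0$, which holds since $a_3,a_5<1$. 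Ineq.~(\ref{ineq:CGT-ACC.CCA}) follows symmetrically.
\item Ineqs.~(\ref{ineq:CAC-CGT}) and (\ref{ineq:CCA-CGT}) follow from $a_2a_5s<a_5x+y$, using $a_2<1$ and $a_5s\le a_5x+y$.
\item Ineq.~(\ref{ineq:0<q<1}) is immediate.
\item Ineq.~(\ref{ineq:a23}) becomes $P\bigl[s(1+P)-a_3(a_5x+y)-a_2(x+a_5y)\bigr]>0$. I expect to prove it by writing the bracket as a sum of terms of the form $(1-a_i)(\cdots)$. The natural first attempt is to compare with $(1-a_2)(1-a_3)$-type products, using $P=a_2a_3a_5$.
\end{itemize}

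For sufficiency I solve for a preimage, treating $a_5=t\in(0,1)$ as the free fiber coordinate in the spirit of \cref{lem:existence-of-a-family-of-solutions}. Given $t$, the sum is forced to be $x+y=s$, and the product is forced to be
\[
xy=\frac{t\,(q_{\mathtt{ACC}}q_{\mathtt{CAC}}q_{\mathtt{CCA}}-q_{\mathtt{CGT}}^2)}{P^2(1-t)^2},
\]
which is positive by Ineq.~(\ref{ineq:hybrid}). Real roots $x,y>0$ exist exactly when $xy\le s^2/4$, and this holds for all sufficiently small $t$. I then set $a_3=q_{\mathtt{CAC}}/(tx+y)$ and $a_2=q_{\mathtt{CCA}}/(x+ty)$. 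By construction $a_2a_3\,(tx+y)(x+ty)=q_{\mathtt{CAC}}q_{\mathtt{CCA}}$, and the identity above gives $a_2a_3=P/t$. Hence all four equations hold, and it remains only to arrange $a_2<1$ and $a_3<1$ (the condition $s<1$ is Ineq.~(\ref{ineq:ACC-CGT})).

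The main obstacle is showing that some admissible $t$, together with a choice of which root is $x$, makes both $a_2<1$ and $a_3<1$. Equivalently, I need $tx+y>q_{\mathtt{CAC}}$ and $x+ty>q_{\mathtt{CCA}}$ simultaneously. Summing these and using $x+y=s$ gives $s(1+t)>q_{\mathtt{CAC}}+q_{\mathtt{CCA}}$. At the endpoint $t=P$ this is exactly Ineq.~(\ref{ineq:a23}), which suggests $t=P$ (that is, $a_2a_3=1$ in the limit) is the extremal choice. My plan is therefore to study the two quantities $tx+y-q_{\mathtt{CAC}}$ and $x+ty-q_{\mathtt{CCA}}$ as continuous functions of $t$ on $(0,\min(P,t_{\max}))$, where $t_{\max}$ is the largest $t$ with real roots. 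I will use Ineqs.~(\ref{ineq:CGT-ACC.CAC}), (\ref{ineq:CGT-ACC.CCA}) and (\ref{ineq:a23}) at the endpoints and an intermediate value argument to find $t$ where both are positive. If the asymmetric choice of roots causes trouble, I would instead parameterize the fiber by $a_2$ and solve the resulting quadratic directly.
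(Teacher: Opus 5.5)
\paragraph{The gap: sufficiency is not proved, and the proposed search interval cannot work.} Your reparameterization and your necessity computations agree with the paper's. This includes your check that every $x,y>0$ with $x+y<1$ comes from some $a_1,a_4,a_6,\delta$. The one necessity step you left open closes at once: the bracket for \eqref{ineq:a23} equals $x(1-a_2)(1-a_3a_5)+y(1-a_3)(1-a_2a_5)$.

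The genuine gap is the sufficiency direction. It is the real content of the theorem, you leave it as a plan, and the plan searches in the wrong place. Your own identity $a_2a_3=P/t$ shows that $a_2,a_3<1$ forces $t=a_5>P=q_{\mathtt{ACC}}$. On your proposed interval $(0,\min(P,t_{\max}))$ one has $a_2a_3>1$, so no point there can work; that real roots exist for small $t$ is beside the point. At $t=P$ one has $a_2a_3=1$, so that endpoint is infeasible too.

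For example, $a_2=a_3=a_5=\tfrac12$, $x=y=0.3$ gives $P=0.125$ and $t_{\max}=0.5$. The admissible values of $t$ then form $(0.3125,0.5]$, which is disjoint from your interval.

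What actually has to be proved is twofold. First, $t_{\max}>P$. Second, some $t\in(P,t_{\max}]$, with a suitable assignment of the two roots to $x$ and $y$, gives $tx+y>q_{\mathtt{CAC}}$ and $x+ty>q_{\mathtt{CCA}}$ simultaneously. Nothing in the proposal does this. An intermediate value argument applied to two separate functions does not by itself produce a point where both are positive.

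\paragraph{A smaller issue, and how the paper closes the gap.} You invoke \eqref{ineq:CGT-ACC.CCA} in the sufficiency direction, but it is not one of the hypotheses. It does follow from them, but that needs an argument.

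The paper avoids the two-branch bookkeeping of the $t$-parameterization by using your fallback. Parameterizing the fiber by $a_2$ gives explicit rational formulas for $a_3,\dots,a_6$ (\cref{lem:existence-of-a-family-of-solutions}). Set $\eta=q_{\mathtt{ACC}}(q_{\mathtt{CCA}}-q_{\mathtt{CGT}})/(q_{\mathtt{CGT}}-q_{\mathtt{ACC}}q_{\mathtt{CAC}})$. Then $q_{\mathtt{CGT}}/q_{\mathtt{CAC}}<q_{\mathtt{ACC}}q_{\mathtt{CCA}}/q_{\mathtt{CGT}}<\eta<1$, for these reasons:
\begin{itemize}
\item the first step is \eqref{ineq:hybrid};
\item the second step uses \eqref{ineq:hybrid} and \eqref{ineq:CGT-ACC.CAC};
\item $\eta<1$ is exactly \eqref{ineq:a23}, rewritten using \eqref{ineq:CGT-ACC.CAC}.
\end{itemize}
This chain also yields \eqref{ineq:CGT-ACC.CCA}.

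Any $a_2\in(\eta,1)$ then works:
\begin{itemize}
\item positivity of $a_3,\dots,a_6$ comes from the chain and $u>0$;
\item $a_2>\eta$ gives $a_3<1$;
\item $a_5<1$ follows by completing the square with $u>0$;
\item $a_4+a_6=q_{\mathtt{CGT}}/q_{\mathtt{ACC}}<1$ by \eqref{ineq:ACC-CGT}.
\end{itemize}
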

  
We defer the proof of \Cref{thm:m1-semialgebraic-description-in-qs} to
\Cref{app:proof-of-main-thm}. Note that by symmetries, we can easily obtain
similar descriptions of $\mathcal{M}_2$ and $\mathcal{M}_3$ by permuting the
indices appropriately, i.e., by swapping the first and second index or by
swapping the first and third. Note that inequalities \eqref{ineq:0<q<1},
\eqref{ineq:ACC-CGT},\eqref{ineq:hybrid}, \eqref{ineq:CAC-CGT}, and
\eqref{ineq:CCA-CGT} appear in the description for all three models, and thus,
are not informative about the location of the hybrid node. Inequalities
\eqref{ineq:CGT-ACC.CAC} and \eqref{ineq:CGT-ACC.CCA} each appear in two of
the three model descriptions and thus are only partially informative.

The linear inequalities \eqref{ineq:ACC-CGT}, \eqref{ineq:CAC-CGT}, and
\eqref{ineq:CCA-CGT} are readily interpretable in terms of the two displayed
trees. For example, if we restrict to the parameterization for the tree
containing edge $e_4$ (this can be done by setting $\delta=1$), we see that
inequality \eqref{ineq:ACC-CGT} says that leaf node $1$ in
\Cref{fig:models-N1-N2-N3-with-labels} does not lie on the path between leaves
$2$ and $3$ in the displayed tree containing $e_4$. To see why this is the
case, observe that setting $\delta = 1$ in \cref{eq:q-defs} yields
\begin{align}
  \label{eq:interpret-first-three-inequalities}
  q_{\mathtt{ACC}}-q_{\mathtt{CGT}} 
  &=a_{2}a_{3}a_{5}(1-   a_{1}a_{4} )\geq 0.
\end{align}
If the edges $e_2, e_3$, and $e_5$ are finite then this is 0 if and only if
$a_{1}=a_{4}=1$ or equivalently, $t_{1}=t_{4}=0$. A similar situation holds
when considering the displayed tree containing $e_6$. Since the network
distribution is a convex sum of the tree distributions, the linear inequality
\cref{eq:interpret-first-three-inequalities} holds strictly for the network as
well.

Inequality \eqref{ineq:hybrid} also has an interesting interpretation that was
first noted by \cite{englander2025identifiability}. Substituting a
site-pattern distribution from any of the three 3-leaf networks into this
polynomial will yield a positive value. But, the polynomial
$q_{\mathtt{ACC}}q_{\mathtt{CAC}}q_{\mathtt{CCA}}-q_{\mathtt{CGT}}^{2}$ is an algebraic invariant for the 3-leaf
claw tree under the Jukes-Cantor model. Thus, this invariant distinguishes the
3-leaf tree model from the 3-leaf network models, and the invariant residual
may provide evidence of hybridization.

The other inequalities in \cref{thm:m1-semialgebraic-description-in-qs} are
less illuminating. When converted to the probability coordinates, inequalities
\eqref{ineq:CGT-ACC.CAC} and \eqref{ineq:CGT-ACC.CCA} are irreducible degree 2
polynomials with 20 terms and inequality \eqref{ineq:a23} is an irreducible
degree 2 polynomial with 18 terms, and none of these have a readily apparent
biological interpretation.

\subsection{Model Intersections and Implications for Identifiability}
\label{sec:model-intersections-and-implications}

In this section, we prove our main result (\Cref{thm:not-identifiable})
regarding the identifiability of 3-cycles in a phylogenetic network model. In
practical terms, \Cref{thm:not-identifiable} implies that if edge parameters
are chosen randomly, then with positive probability, determining which of the
three leaves is descended from the hybrid node is impossible even with
infinite data.

First, we show the sets $\mathcal{M}_{1}\backslash\mathcal{M}_{2}$ and
$\mathcal{M}_{1}\cap\mathcal{M}_{2}$ are separated by a single algebraic
curve; namely, $q_{\mathtt{CGT}}+q_{\mathtt{CAC}}(q_{\mathtt{CGT}}-q_{\mathtt{ACC}}-q_{\mathtt{CCA}})=0$.

\begin{theorem}
  \label{thm:q-coordinate-conditions-for-belonging-to-m1-intersect-m2}
  Let $\mathbf{q}=(q_{\mathtt{ACC}},q_{\mathtt{CAC}},q_{\mathtt{CCA}},q_{\mathtt{CGT}})\in \mathcal{M}_{1}$. Then
  $\mathbf{q}\in \mathcal{M}_{2}$ if and only if
  \begin{equation}\label{eq:M1-M2-condition}
    q_{\mathtt{CGT}}+q_{\mathtt{CAC}}(q_{\mathtt{CGT}}-q_{\mathtt{ACC}}-q_{\mathtt{CCA}})>0.
  \end{equation}
  Similarly, $\mathbf{q}\in \mathcal{M}_{3}$ if and only if
  \begin{equation}\label{eq:M1-M3-condition}
    q_{\mathtt{CGT}}+q_{\mathtt{CCA}}(q_{\mathtt{CGT}}-q_{\mathtt{ACC}}-q_{\mathtt{CAC}})>0.
  \end{equation}
\end{theorem}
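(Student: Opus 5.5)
The plan is to obtain a semialgebraic description of $\mathcal{M}_2$ from \Cref{thm:m1-semialgebraic-description-in-qs} by the symmetry noted after that theorem, namely by swapping the first and second indices of each $q$-coordinate. Then, for $\mathbf{q}\in\mathcal{M}_1$, we check which of the defining inequalities of $\mathcal{M}_2$ are automatically satisfied. The swap fixes $q_{\mathtt{CCA}}$ and $q_{\mathtt{CGT}}$ (since $\mathtt{GCT}$ lies in the same class as $\mathtt{CGT}$), and it interchanges $q_{\mathtt{ACC}}$ and $q_{\mathtt{CAC}}$. Hence $\mathbf{q}\in\mathcal{M}_2$ if and only if the following hold:
\begin{itemize}
\item[(i)] $0<q_\omega<1$;
\item[(ii)] $q_{\mathtt{CAC}}-q_{\mathtt{CGT}}>0$;
\item[(iii)] $q_{\mathtt{CGT}}-q_{\mathtt{CAC}}q_{\mathtt{ACC}}>0$;
\item[(iv)] $q_{\mathtt{ACC}}q_{\mathtt{CAC}}q_{\mathtt{CCA}}-q_{\mathtt{CGT}}^2>0$;
\item[(v)] $q_{\mathtt{CGT}}+q_{\mathtt{CAC}}(q_{\mathtt{CGT}}-q_{\mathtt{ACC}}-q_{\mathtt{CCA}})>0$.
\end{itemize}

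Next we verify that (i)--(iv) follow from membership in $\mathcal{M}_1$. Condition (i) is \eqref{ineq:0<q<1}. Condition (ii) is \eqref{ineq:CAC-CGT}, which is listed among the consequences of $\mathbf{q}\in\mathcal{M}_1$ in the second half of \Cref{thm:m1-semialgebraic-description-in-qs}. Condition (iii) is literally \eqref{ineq:CGT-ACC.CAC}, and condition (iv) is \eqref{ineq:hybrid}. So the only inequality not implied by $\mathbf{q}\in\mathcal{M}_1$ is (v), which is exactly \eqref{eq:M1-M2-condition}. This gives both directions at once: if $\mathbf{q}\in\mathcal{M}_1\cap\mathcal{M}_2$, then (v) holds by the description of $\mathcal{M}_2$; conversely, if $\mathbf{q}\in\mathcal{M}_1$ and (v) holds, then all five conditions hold, so $\mathbf{q}\in\mathcal{M}_2$.

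For $\mathcal{M}_3$ we swap the first and third indices instead. This exchanges $q_{\mathtt{ACC}}$ and $q_{\mathtt{CCA}}$ and fixes $q_{\mathtt{CAC}}$ and $q_{\mathtt{CGT}}$. The transformed inequalities are then \eqref{ineq:0<q<1}, $q_{\mathtt{CCA}}-q_{\mathtt{CGT}}>0$ (that is, \eqref{ineq:CCA-CGT}), $q_{\mathtt{CGT}}-q_{\mathtt{CCA}}q_{\mathtt{CAC}}>0$, \eqref{ineq:hybrid}, and \eqref{eq:M1-M3-condition}. The same argument then applies, with one gap.

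The main obstacle is the third of these inequalities, $q_{\mathtt{CGT}}-q_{\mathtt{CAC}}q_{\mathtt{CCA}}>0$, which is not in the list for $\mathcal{M}_1$. The list gives only $q_{\mathtt{CGT}}>q_{\mathtt{ACC}}q_{\mathtt{CAC}}$ and $q_{\mathtt{CGT}}>q_{\mathtt{ACC}}q_{\mathtt{CCA}}$. To prove it I would work directly from the parameterization \eqref{eq:q-defs}, using $a_4^*=\delta a_1a_4$ and $a_6^*=(1-\delta)a_1a_6$. Then
\[
q_{\mathtt{CGT}}=a_2a_3a_5(a_4^*+a_6^*),\qquad q_{\mathtt{CAC}}=a_3(a_4^*a_5+a_6^*),\qquad q_{\mathtt{CCA}}=a_2(a_4^*+a_5a_6^*).
\]
Dividing by $a_2a_3$, the needed inequality becomes
\[
a_5(a_4^*+a_6^*)>(a_4^*a_5+a_6^*)(a_4^*+a_5a_6^*).
\]
Expanding the right-hand side gives $a_5\bigl((a_4^*)^2+(a_6^*)^2\bigr)+a_4^*a_6^*(1+a_5^2)$. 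The difference between the left and right sides is
\[
a_5\bigl(a_4^*+a_6^*-(a_4^*+a_6^*)^2\bigr)+a_4^*a_6^*(2a_5-1-a_5^2)
= a_5(a_4^*+a_6^*)\bigl(1-a_4^*-a_6^*\bigr)-a_4^*a_6^*(1-a_5)^2.
\]
This expression is not obviously positive, so I expect it may genuinely fail on $\mathcal{M}_1$. In that case it is not redundant, and the statement for $\mathcal{M}_3$ must be rescued by showing that it is implied by \eqref{eq:M1-M3-condition} together with the $\mathcal{M}_1$ inequalities. This looks plausible. From \eqref{eq:M1-M3-condition} we get
\[
q_{\mathtt{CGT}}(1+q_{\mathtt{CCA}})>q_{\mathtt{CCA}}(q_{\mathtt{ACC}}+q_{\mathtt{CAC}}).
\]
Combining this with $q_{\mathtt{CGT}}<q_{\mathtt{ACC}}$ (\eqref{ineq:ACC-CGT}) and $q_{\mathtt{CCA}}<1$ should yield $q_{\mathtt{CGT}}>q_{\mathtt{CCA}}q_{\mathtt{CAC}}$. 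Indeed,
\[
q_{\mathtt{CGT}}+q_{\mathtt{CCA}}q_{\mathtt{CGT}}>q_{\mathtt{CCA}}q_{\mathtt{ACC}}+q_{\mathtt{CCA}}q_{\mathtt{CAC}},
\]
and $q_{\mathtt{CCA}}q_{\mathtt{CGT}}<q_{\mathtt{CCA}}q_{\mathtt{ACC}}$, so subtracting these gives $q_{\mathtt{CGT}}>q_{\mathtt{CCA}}q_{\mathtt{CAC}}$. By the same trick, condition (iii) in the $\mathcal{M}_2$ case also follows from (v) regardless of \eqref{ineq:CGT-ACC.CAC}. This closes both cases.
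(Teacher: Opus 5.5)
Your proof is correct. For $\mathcal{M}_2$ it is the paper's proof: transpose the first two indices, then note that every resulting inequality except \eqref{eq:M1-M2-condition} is already guaranteed by $\mathbf{q}\in\mathcal{M}_1$.

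For $\mathcal{M}_3$ the paper only says the case is ``similar.'' You correctly spotted that the literal analogue does not go through. Transposing the first and third indices turns \eqref{ineq:CGT-ACC.CAC} into $q_{\mathtt{CGT}}>q_{\mathtt{CAC}}q_{\mathtt{CCA}}$, and membership in $\mathcal{M}_1$ does not imply this. Your suspicion that it can fail is right: $a_5(a_4^*+a_6^*)(1-a_4^*-a_6^*)-a_4^*a_6^*(1-a_5)^2$ is negative at, for example, $a_4^*=a_6^*=0.49$, $a_5=0.1$.

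Your repair is valid. Condition \eqref{eq:M1-M3-condition} rearranges to $q_{\mathtt{CGT}}-q_{\mathtt{CAC}}q_{\mathtt{CCA}}>q_{\mathtt{CCA}}(q_{\mathtt{ACC}}-q_{\mathtt{CGT}})$, and the right side is positive by \eqref{ineq:ACC-CGT} and $q_{\mathtt{CCA}}>0$. Note that $q_{\mathtt{CCA}}>0$ is what you actually use, not $q_{\mathtt{CCA}}<1$ as you wrote.

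There is an alternative fix, closer to the paper's remark that \eqref{ineq:CGT-ACC.CAC} and \eqref{ineq:CGT-ACC.CCA} ``each appear in two of the three model descriptions.'' Leaves $2$ and $3$ of $\mathcal{N}_1$ are symmetric, so $\mathcal{M}_1$ is invariant under $q_{\mathtt{CAC}}\leftrightarrow q_{\mathtt{CCA}}$. Hence \eqref{ineq:CGT-ACC.CAC} can be replaced by \eqref{ineq:CGT-ACC.CCA} in the defining list. Transposing the first and third indices then maps \eqref{ineq:CGT-ACC.CCA} to itself, and every inequality except \eqref{eq:M1-M3-condition} is again implied by $\mathbf{q}\in\mathcal{M}_1$. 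Your route is more elementary and self-contained; the symmetry route makes the paper's ``similar'' literally true.
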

\begin{proof}
  We prove only the case for $\mathcal{M}_{2}$ as the case for
  $\mathcal{M}_{3}$ is similar. As noted in the discussion following
  \Cref{thm:m1-semialgebraic-description-in-qs}, by permuting indices of the
  Jukes-Cantor $q$-coordinates, we can easily derive defining sets of
  inequalities for the other two network models. In particular, by transposing
  the first two indices in the inequalities \eqref{ineq:0<q<1} through
  \eqref{ineq:a23}, and using the linear invariant $q_{\mathtt{CGT}}-q_{\mathtt{GCT}}=0$ to
  write $q_{\mathtt{GCT}}$ as $q_{\mathtt{CGT}}$, we find that the set $\mathcal{M}_{2}$ is
  defined by the inequality \eqref{eq:M1-M2-condition} together with the
  inequalities
  \begin{equation}\label{eq:M2-inequalities}
    \begin{aligned}
      0<q_{\mathtt{ACC}},q_{\mathtt{CAC}},q_{\mathtt{CCA}},q_{\mathtt{CGT}}&<1\\
      q_{\mathtt{CAC}} -   q_{\mathtt{CGT}} & >0 \\
      q_{\mathtt{CGT}} - q_{\mathtt{CAC}}q_{\mathtt{ACC}} &> 0 \\
      q_{\mathtt{ACC}}q_{\mathtt{CAC}}q_{\mathtt{CCA}}-q_{\mathtt{CGT}}^{2}&>0.
    \end{aligned}
  \end{equation}
  Suppose $\mathbf{q}\in \mathcal{M}_{1}$. It is easy to see that all of the
  inequalities in \eqref{eq:M2-inequalities} are implied by inequalities in
  the statement of \Cref{thm:m1-semialgebraic-description-in-qs}, since
  $\mathbf{q}\in \mathcal{M}_{1}$. Hence, $\mathbf{q}\in \mathcal{M}_{2}$ if
  and only if inequality \eqref{eq:M1-M2-condition} holds.
\end{proof}

\Cref{thm:q-coordinate-conditions-for-belonging-to-m1-intersect-m2}
establishes a criterion for determining whether a point
$\mathbf{q}\in \mathcal{M}_{1}$ is in
$\mathcal{M}_{1}\backslash\mathcal{M}_{2}$ or
$\mathcal{M}_{1}\cap \mathcal{M}_{2}$. The next corollary, which will be
useful for interpretation, expresses this inequality in terms of the numerical
parameters $a_1,\ldots,a_6,$ and $\delta$, and follows immediately by
substituting the parameterization of $\mathcal{M}_{1}$
into the inequalities \eqref{eq:M1-M2-condition} and
\eqref{eq:M1-M3-condition}.

\begin{corollary}[Distinguishability criteria]
  \label{thm:parameter-conditions-for-belonging-to-m1-intersect-m2}
  Let $\mathbf{q}\in \mathcal{M}_{1}$, and let
  $\theta=(a_{1},a_{2},a_{3},a_{4},a_{5},a_{6},\delta)\in \Theta_{\mathcal{N}_{1}}\times (0,1)$ be a
  choice of numerical parameters such that $\psi_{\mathcal{N}_{1}}(\theta) = \mathbf{q}$. Then
  $\mathbf{q}\notin \mathcal{M}_{2}$ if and only if

  \begin{equation}\label{eq:solutionCondition-M1-M2}
    \begin{aligned} 
      a_{1} &\geq \frac{a_{5}}{ \delta a_{4}a_{5}+ a_{6}(1-\delta)} 
            \cdot \frac{ \delta a_{4}(1-a_{3}a_{5})+  a_{6}(1-\delta)(1-a_{3})}{ \delta a_{4}(1-a_{3}a_{5})+ a_{5} a_{6}(1-\delta)(1-a_{3})}.
    \end{aligned}
  \end{equation} 
  Similarly, $\mathbf{q}\notin \mathcal{M}_{3}$ if and only if
  \begin{equation}\label{eq:solutionCondition-M1-M3}
    \begin{aligned}
      a_{1} &\geq \frac{a_{5}}{ \delta {a}_{4}+ a_{5}a_{6}(1-\delta)}  
      \cdot \frac{ a_{6}(1-\delta)(1-a_{2}a_{5})+  \delta {a}_{4}(1-a_{2})}{ a_{6}(1-\delta)(1-a_{2}a_{5})+  \delta {a}_{4}a_{5}(1-a_{2})}.
    \end{aligned}
  \end{equation}
\end{corollary}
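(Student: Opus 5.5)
The plan is to take \Cref{thm:q-coordinate-conditions-for-belonging-to-m1-intersect-m2} as given. Since $\mathbf q\in\mathcal M_1$, it says that $\mathbf q\notin\mathcal M_2$ exactly when the negation of \eqref{eq:M1-M2-condition} holds:
\[
q_{\mathtt{CGT}}+q_{\mathtt{CAC}}(q_{\mathtt{CGT}}-q_{\mathtt{ACC}}-q_{\mathtt{CCA}})\le 0 .
\]
The non-strict inequality here is why \eqref{eq:solutionCondition-M1-M2} has ``$\geq$''. I will substitute the parameterization \eqref{eq:q-defs} into this inequality and solve for $a_1$. It helps to abbreviate $u=\delta a_4$ and $v=(1-\delta)a_6$, both positive. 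Then
\[
q_{\mathtt{ACC}}=a_2a_3a_5,\quad q_{\mathtt{CAC}}=a_1a_3(ua_5+v),\quad q_{\mathtt{CCA}}=a_1a_2(u+a_5v),\quad q_{\mathtt{CGT}}=a_1a_2a_3a_5(u+v).
\]

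Every term of the inequality carries the factor $a_1a_2a_3>0$, and I divide it out. What remains is
\[
a_5(u+v)+(ua_5+v)\bigl[a_1\bigl(a_3a_5(u+v)-u-a_5v\bigr)-a_3a_5\bigr]\le 0 .
\]
This expression is affine in $a_1$. Its coefficient of $a_1$ is
\[
-(ua_5+v)\bigl(u(1-a_3a_5)+a_5v(1-a_3)\bigr),
\]
which is strictly negative because all $a_i\in(0,1)$. The constant term is
\[
a_5(u+v)-a_3a_5(ua_5+v)=a_5\bigl(u(1-a_3a_5)+v(1-a_3)\bigr).
\]

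Moving the $a_1$ term to the other side and dividing by the positive quantity $(ua_5+v)\bigl(u(1-a_3a_5)+a_5v(1-a_3)\bigr)$ gives
\[
a_1\ \ge\ \frac{a_5}{ua_5+v}\cdot\frac{u(1-a_3a_5)+v(1-a_3)}{u(1-a_3a_5)+a_5v(1-a_3)} .
\]
This is exactly \eqref{eq:solutionCondition-M1-M2} once $u=\delta a_4$ and $v=(1-\delta)a_6$ are put back. Each step is an equivalence, because we only divide by positive quantities. Hence the criterion holds for any $\theta$ with $\psi_{\mathcal N_1}(\theta)=\mathbf q$, even though $\theta$ is not identifiable.

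The $\mathcal M_3$ statement follows the same way from \eqref{eq:M1-M3-condition}. There the roles of $q_{\mathtt{CAC}}$ and $q_{\mathtt{CCA}}$, and hence of $a_3$ and $a_2$, are exchanged. After dividing by $a_1a_2a_3$, the coefficient of $a_1$ is
\[
-(u+a_5v)\bigl(v(1-a_2a_5)+ua_5(1-a_2)\bigr)<0,
\]
and the constant term is $a_5\bigl(v(1-a_2a_5)+u(1-a_2)\bigr)$, which yields \eqref{eq:solutionCondition-M1-M3}.

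The only step needing care is the sign bookkeeping: showing that the coefficient of $a_1$ factors into manifestly signed pieces, so that dividing by it legitimately reverses the inequality. Everything else is routine algebra.
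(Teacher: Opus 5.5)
Your proposal is correct and takes the same approach as the paper. The paper says the corollary ``follows immediately by substituting the parameterization'' into the negations of \eqref{eq:M1-M2-condition} and \eqref{eq:M1-M3-condition}. Your computation supplies exactly those omitted details, including the correct reading of the strict inequality as the source of ``$\geq$'' and the sign of the $a_1$-coefficient, which justifies the direction of the final inequality.
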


The network parameter of these models is \emph{generically identifiable} if
the set of parameters that maps into the intersection of two models is measure
zero within the parameter space. \Cref{thm:not-identifiable} states that this
is not the case. To prove this, we will utilize the following lemma, in which
we use the notation $\overline{\mathcal{M}}_{2}$ to denote the Euclidean
closure of $\mathcal{M}_{2}$.

\begin{lemma}
  \label{lem:nonempty-open-subsets}
  Both $\mathcal{M}_{1}\cap \mathcal{M}_{2}$ and
  $\mathcal{M}_{1}\backslash \overline{\mathcal{M}}_{2}$ are nonempty open
  subsets of $\mathbb{R}^{4}$.
\end{lemma}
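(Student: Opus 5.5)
The plan is to get openness directly from the semialgebraic descriptions already in hand, and nonemptiness from explicit witness points in $q$-coordinates.

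\textbf{Openness.} By \Cref{thm:m1-semialgebraic-description-in-qs}, $\mathcal{M}_1$ is the set of $\mathbf{q}\in\mathbb{R}^4$ satisfying finitely many \emph{strict} polynomial inequalities \eqref{ineq:0<q<1}--\eqref{ineq:a23}. Each such set is open, so $\mathcal{M}_1$ is open as a finite intersection of open sets. By \Cref{thm:q-coordinate-conditions-for-belonging-to-m1-intersect-m2}, $\mathcal{M}_1\cap\mathcal{M}_2$ is $\mathcal{M}_1$ intersected with the strict inequality \eqref{eq:M1-M2-condition}, so it is also open. Finally, $\mathcal{M}_1\setminus\overline{\mathcal{M}}_2 = \mathcal{M}_1\cap(\mathbb{R}^4\setminus\overline{\mathcal{M}}_2)$ is an open set intersected with the complement of a closed set, hence open.

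\textbf{Reducing the difference to a strict inequality.} For nonemptiness of $\mathcal{M}_1\setminus\overline{\mathcal{M}}_2$ I need a concrete sufficient condition. From the description of $\mathcal{M}_2$ in the proof of \Cref{thm:q-coordinate-conditions-for-belonging-to-m1-intersect-m2}, every point of $\mathcal{M}_2$ satisfies \eqref{eq:M1-M2-condition}. By continuity, every point of $\overline{\mathcal{M}}_2$ satisfies
\[
f(\mathbf{q}) := q_{\mathtt{CGT}}+q_{\mathtt{CAC}}(q_{\mathtt{CGT}}-q_{\mathtt{ACC}}-q_{\mathtt{CCA}})\ge 0 .
\]
Hence any $\mathbf{q}\in\mathcal{M}_1$ with $f(\mathbf{q})<0$ lies in $\mathcal{M}_1\setminus\overline{\mathcal{M}}_2$.

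\textbf{Witnesses.} Write $u=q_{\mathtt{ACC}}$, $v=q_{\mathtt{CAC}}$, $w=q_{\mathtt{CCA}}$, $y=q_{\mathtt{CGT}}$. The conditions of \Cref{thm:m1-semialgebraic-description-in-qs} then read
\[
0<u,v,w,y<1,\qquad uv<y<u,\qquad y^2<uvw,\qquad y(1+u)>u(v+w).
\]

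\emph{A point of $\mathcal{M}_1\cap\mathcal{M}_2$.} Take the symmetric point $u=v=w=\tfrac12$, $y=0.34$.
\begin{itemize}
\item $0.25<0.34<0.5$.
\item $y^2=0.1156<0.125=uvw$.
\item $0.34\cdot1.5=0.51>0.5=u(v+w)$.
\end{itemize}
So this point lies in $\mathcal{M}_1$. The condition \eqref{eq:M1-M2-condition} is the same numerical check $0.51>0.5$, so the point is also in $\mathcal{M}_2$ (and, by symmetry, in $\mathcal{M}_3$).

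\emph{A point of $\mathcal{M}_1$ with $f<0$.} Guided by \Cref{ex:maxexample}, take leaves $2,3$ far apart and leaf $1$ close to both: $u=0.5$, $v=w=0.7$, $y=0.48$.
\begin{itemize}
\item $0.35<0.48<0.5$.
\item $y^2=0.2304<0.245=uvw$.
\item $0.48\cdot1.5=0.72>0.7=u(v+w)$.
\item $f=0.48\cdot1.7-0.7\cdot1.2=0.816-0.84<0$.
\end{itemize}
So this point lies in $\mathcal{M}_1\setminus\overline{\mathcal{M}}_2$.

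\textbf{Main obstacle.} The only delicate step is locating the second witness, because \eqref{ineq:a23} and $f<0$ pull $y$ in opposite directions, leaving a narrow window for $y$. I expect that window to be nonempty once $u$ is moderately large relative to $v$ and $w$, as in the choice above. The remaining verification is routine arithmetic.
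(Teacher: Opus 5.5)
Your proof is correct. The openness argument is the paper's, but for nonemptiness you take a different route.

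\textbf{The paper's route.} The paper works in parameter space. It fixes arbitrary $a_1,a_2,a_3,a_4,a_6,\delta$ and lets $a_5\to 1$: the right-hand side of \eqref{eq:solutionCondition-M1-M2} then tends to $1/(\delta a_4+(1-\delta)a_6)>1>a_1$. Alternatively it lets $a_5\to 0$, so the right-hand side tends to $0<a_1$. It then reads off membership from \cref{thm:parameter-conditions-for-belonging-to-m1-intersect-m2}. Membership in $\mathcal{M}_1$ is automatic, because the points are images of parameters. The argument also shows more: for every choice of the remaining parameters, both behaviours occur. This ties directly to the interpretation that small parental divergence ($a_5$ near $1$) destroys distinguishability.

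\textbf{Your route.} You exhibit explicit points in $q$-coordinates. I checked the arithmetic: both witnesses satisfy \eqref{ineq:0<q<1}--\eqref{ineq:a23}. Writing $f$ for the left-hand side of \eqref{eq:M1-M2-condition}, you get $f=0.01$ for the first witness and $f=-0.024$ for the second.

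\textbf{What each approach buys.}
\begin{itemize}
\item Your approach costs a dependence on the harder ``if'' direction of \cref{thm:m1-semialgebraic-description-in-qs} (\cref{lem:main-theorem--proof-of-reverse-direction}). You need it to certify that the second witness lies in $\mathcal{M}_1$. The paper's witness for $\mathcal{M}_1\setminus\overline{\mathcal{M}}_2$ does not need this; it uses only that $f>0$ on $\mathcal{M}_2$.
\item In return, you treat the closure explicitly. The paper goes straight from ``right-hand side smaller than $a_1$'' to $\mathbf{q}\notin\overline{\mathcal{M}}_2$, although the corollary only speaks about $\mathcal{M}_2$. Your observation fills that gap: $f\ge 0$ on $\overline{\mathcal{M}}_2$ by continuity, so a strict violation $f<0$ excludes the closure.
\end{itemize}

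\textbf{A correction to your closing heuristic.} On $y$, the constraints \eqref{ineq:a23} and $f<0$ are compatible if and only if $(u-v)(w-uv)<0$. On $\mathcal{M}_1$ we have $w>y>uv$, so this forces $u<v$, i.e.\ $q_{\mathtt{ACC}}<q_{\mathtt{CAC}}$. In addition, $y<u$ requires $u>v+w-1$. So $u$ must be moderately large but smaller than $v$, not ``large relative to $v$''. Your actual witness satisfies this, so the proof is unaffected.
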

\begin{proof}
  We begin by showing that $\mathcal{M}_{1}\cap \mathcal{M}_{2}$ and
  $\mathcal{M}_{1}\backslash \overline{\mathcal{M}}_{2}$ are open sets. By
  \Cref{thm:m1-semialgebraic-description-in-qs}, $\mathcal{M}_{1}$ is the
  intersection of a finite number of open sets; namely, the sets
  corresponding to the strict inequalities in
  \eqref{ineq:0<q<1}-\eqref{ineq:a23}. Therefore $\mathcal{M}_{1}$ is open.
  Similarly,
  \cref{thm:q-coordinate-conditions-for-belonging-to-m1-intersect-m2} implies
  that $\mathcal{M}_{1}\cap \mathcal{M}_{2}$ is open since it is the
  intersection of two open sets. Moreover, since $\mathcal{M}_{1}$ is
  open and $\overline{\mathcal{M}}_{2}$ is closed,
  $\mathcal{M}_{1}\backslash \overline{\mathcal{M}}_{2}$ is open as well.

  It remains to show that $\mathcal{M}_{1}\cap \mathcal{M}_{2}$ and
  $\mathcal{M}_{1}\backslash \overline{\mathcal{M}}_{2}$ are nonempty. To show
  that $\mathcal{M}_{1}\cap \mathcal{M}_{2}\neq \emptyset$, observe that for
  any choice of $a_{1},a_{2},a_{3},a_{4},a_{6},\delta\in (0,1)$, as $a_{5}\to 1$, the
  right-hand side of inequality \eqref{eq:solutionCondition-M1-M2} tends to
  \begin{equation*}
    \frac{1}{\delta a_{4}+(1-\delta)a_{6}}>1.
  \end{equation*}
  Therefore we may choose $a_{5}$ sufficiently close to $1$ so that the
  right-hand side of inequality \eqref{eq:solutionCondition-M1-M2} is larger
  than $a_1$, in which case
  \cref{thm:parameter-conditions-for-belonging-to-m1-intersect-m2} implies
  $\psi_{\mathcal{N}_{1}}(a_{1},a_{2},a_{3},a_{4},a_{5},a_{6},\delta)\in
  \mathcal{M}_{1}\cap \mathcal{M}_{2}$, and hence
  $\mathcal{M}_{1}\cap \mathcal{M}_{2}\neq \emptyset$. Similarly, to show that
  $\mathcal{M}_{1}\backslash \overline{\mathcal{M}}_{2}\neq \emptyset$,
  observe that we may choose $a_{5}$ sufficiently close to zero so that the
  right-hand side of \eqref{eq:solutionCondition-M1-M2} is smaller
  than $a_{1}$, in which case
  $\mathbf{q}=\psi_{\mathcal{N}_{1}}(a_{1},a_{2},a_{3},a_{4},a_{5},a_{6},\delta)\in
  \mathcal{M}_{1}\backslash \overline{\mathcal{M}}_{2}$.
\end{proof}

\cref{lem:nonempty-open-subsets} can be understood as saying that
$\mathcal{M}_{1}\cap \mathcal{M}_{2}$ and
$\mathcal{M}_{1}\backslash \mathcal{M}_{2}$ are both full-dimensional subsets
of $\mathbb{R}^{4}$. The proof of \cref{thm:not-identifiable} now follows by a
simple topological argument. In particular, since
$\mathcal{M}_{1}\cap \mathcal{M}_{2}$ is open by
\cref{lem:nonempty-open-subsets}, it follows by continuity of
$\psi_{\mathcal{N}_{1}}$ that
$\psi_{\mathcal{N}_{1}}^{-1}(\mathcal{M}_{1}\cap \mathcal{M}_{2})$ is an open
subset of the parameter space
$\Theta_{\mathcal{N}_{1}}\times (0,1)\subseteq \mathbb{R}^{7}$. Therefore,
since $\psi_{\mathcal{N}_{1}}^{-1}(\mathcal{M}_{1}\cap \mathcal{M}_{2})$ is
open, it is also both full-dimensional and of positive measure.

\begin{theorem}
  \label{thm:not-identifiable}
  The semi-directed topology of a 3-leaf Jukes-Cantor triangle network model
  is \textbf{not} generically identifiable from the site pattern distribution.
\end{theorem}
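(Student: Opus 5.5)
The plan is to use the topological argument set up just before the statement. Generic identifiability of the semi-directed topology would require that, for each pair of distinct 3-leaf triangle networks, the set of numerical parameters mapping into the intersection of their models has Lebesgue measure zero in the parameter space. It therefore suffices to exhibit one pair, say $\mathcal N_1$ and $\mathcal N_2$, for which $\psi_{\mathcal{N}_{1}}^{-1}(\mathcal{M}_{1}\cap\mathcal{M}_{2})$ has positive measure in $\Theta_{\mathcal{N}_1}\times(0,1)\subseteq\mathbb{R}^7$.

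The first step invokes \cref{lem:nonempty-open-subsets}. It gives that $\mathcal{M}_{1}\cap\mathcal{M}_{2}$ is a nonempty open subset of $\mathbb{R}^4$.

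The second step uses continuity. The map $\psi_{\mathcal{N}_1}$ is polynomial, hence continuous, so the preimage $U=\psi_{\mathcal{N}_{1}}^{-1}(\mathcal{M}_{1}\cap\mathcal{M}_{2})$ is open in the open set $\Theta_{\mathcal{N}_1}\times(0,1)$. Since $\mathcal{M}_1\cap\mathcal{M}_2\subseteq\mathcal{M}_1=\operatorname{im}\psi_{\mathcal{N}_1}$ is nonempty, $U$ is nonempty. A nonempty open subset of $\mathbb{R}^7$ contains a ball and so has positive Lebesgue measure. Concretely, the proof of \cref{lem:nonempty-open-subsets} already produces explicit parameters, namely any $a_1,a_2,a_3,a_4,a_6,\delta$ with $a_5$ close to $1$, that satisfy the strict reverse of \eqref{eq:solutionCondition-M1-M2}. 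That strict inequality persists on a neighborhood of those parameters, which independently confirms that $U$ is a nonempty open set.

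The final step draws the conclusion. For every $\theta\in U$, the distribution $\psi_{\mathcal{N}_1}(\theta)$ also equals $\psi_{\mathcal{N}_2}(\theta')$ for some $\theta'$. Thus on a positive-measure set of parameters, data from $\mathcal N_1$ cannot be distinguished from data from $\mathcal N_2$, even with the exact site-pattern distribution, and the topology is not generically identifiable. By the symmetry that permutes indices, the same holds for the other pairs, though one pair suffices. If one also wants the analogous statement from the side of $\mathcal N_2$'s parameters, the same argument applies with roles swapped.

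There is no real obstacle, since the substantive work sits in \cref{thm:m1-semialgebraic-description-in-qs} and \cref{lem:nonempty-open-subsets}. The only point needing care is to state precisely the definition of generic identifiability being refuted, namely measure zero of the preimage of pairwise intersections in parameter space, and to note that openness in $\mathbb{R}^4$ pulls back to openness in $\mathbb{R}^7$.
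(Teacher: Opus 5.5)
Your proposal is correct and follows essentially the same argument as the paper: openness and nonemptiness of $\mathcal{M}_1\cap\mathcal{M}_2$ come from \cref{lem:nonempty-open-subsets}, and continuity of $\psi_{\mathcal{N}_1}$ pulls this back to a nonempty open, hence positive-measure, subset of $\Theta_{\mathcal{N}_1}\times(0,1)$. Your remark that the preimage is nonempty because $\mathcal{M}_1\cap\mathcal{M}_2\subseteq\operatorname{im}\psi_{\mathcal{N}_1}$ spells out a step the paper leaves implicit.
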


Although we have now shown that the network parameter is not generically
identifiable, the story does not end here. Since
$\mathcal{M}_{1}\backslash \overline{\mathcal{M}}_{2}$ is open, it follows
that
$\psi_{\mathcal{N}_{1}}^{-1}(\mathcal{M}_{1}\backslash
\overline{\mathcal{M}}_{2})$ is open, and hence
$\psi_{\mathcal{N}_{1}}^{-1}(\mathcal{M}_{1}\backslash \mathcal{M}_{2})$
contains an open set. Thus, there exists a full dimensional subset of
parameter space on which it is theoretically possible to determine which of
the three leaves is descendant from the hybrid node.

\subsection{The Volume of Model Intersections} 
\label{sec:volume-calculations}

In this section, we take a closer look at the models and their
intersections to better understand the practical implications of these results
for phylogenetic inference. To estimate the relative volumes of the three
models and their intersections, we sampled $n=10^{9}$ points uniformly at
random from the probability simplex $\Delta_{4}$ using the
$\text{Dirichlet}(1,1,1,1,1)$ distribution and recorded the proportion of
points whose Fourier transform satisfied the model inequalities. These results
are summarized in \Cref{fig:venn-simplex}. The code for this and all
subsequent simulations in this section are available in the file
\texttt{model-size-simulations.jl} in the Supplementary Materials.

\begin{figure}[b]
  \centering
  \includegraphics[scale=.25]{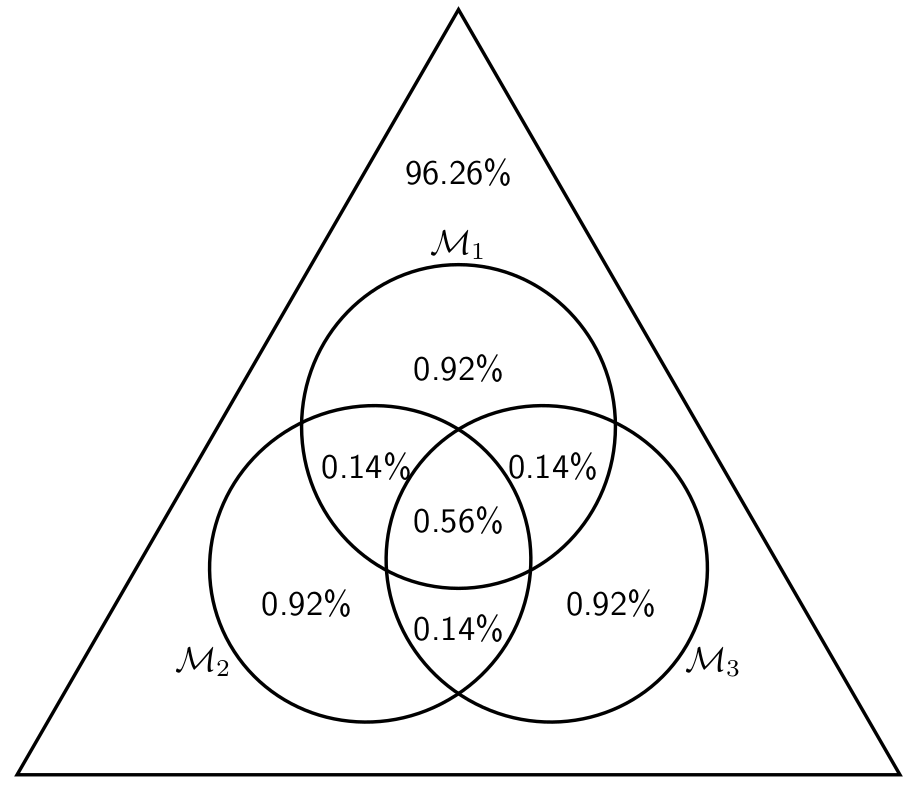}
  \caption{Venn diagram showing the percentage of points in $\Delta_{4}$ which
    belong to the regions of intersections of the models
    $\mathcal{M}_{1},\mathcal{M}_{2}$, and $\mathcal{M}_{3}$. The vast
    majority of simplex points $(96.2\%)$ do not correspond to any of the
    three network models.}
  \label{fig:venn-simplex}
\end{figure}

\begin{observation}[Model overlap in the simplex]
  \label{obs:intersection-volumes-in-qs}
  The three models $\mathcal{M}_{1},\mathcal{M}_{2}$ and
  $\mathcal{M}_{3}$ correspond to small regions of the simplex which overlap substantially. Only about $3.7\%$ of simplex points
  correspond to distributions in any of the three models, with each model accounting for only about $1.76\%$ of
  simplex points. Moreover, of the simplex points which belong to a given
  model $\mathcal{M}_{i}$, almost half $(47.9\%)$ coincide with at least one
  of the other two models.
  
  Since the Fourier transformation scales 4-dimensional volume in the simplex
  uniformly by a constant factor (see \cref{app:fourier-transform}), we can
  estimate the model volumes explicitly. The 4-dimensional volume of the space
  of simplified Jukes-Cantor coordinates is approximately $0.26337$; within this space, each of
  the models $\mathcal{M}_{1},\mathcal{M}_{2},$ and $\mathcal{M}_{3}$ has absolute volume
  approximately $0.00464$, the volume of the intersection of any two models is
  approximately $0.00184$, and the volume of the intersection of all three is
  approximately $0.00147$. See \cref{rmk:volume-scaling} in \cref{app:fourier-transform} for details.
\end{observation}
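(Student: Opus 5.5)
The statement is an empirical, numerical observation rather than a theorem, so the plan is to make it a reproducible Monte Carlo estimate with quantified error, plus one exact volume computation.

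\textbf{Reduce to the simplex $\Delta_4$.} Under the Jukes-Cantor model the $64$ site-pattern probabilities collapse into five classes:
\begin{itemize}
\item all three leaves equal;
\item exactly two leaves equal, in one of three ways;
\item all three leaves distinct.
\end{itemize}
So a distribution is a point of $\Delta_4$, given by the five class probabilities. The inverse of the Fourier transform of \cref{app:fourier-transform}, restricted to the simplified Jukes-Cantor coordinates, is an affine bijection. It sends $\Delta_4$ onto a polytope $P\subset\mathbb{R}^4$ in the coordinates $(q_{\mathtt{ACC}},q_{\mathtt{CAC}},q_{\mathtt{CCA}},q_{\mathtt{CGT}})$, with $q_{\mathtt{AAA}}=1$ fixed. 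I will write this affine map down explicitly, taking care to divide each class probability by the number of patterns in the class.

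\textbf{Monte Carlo membership test.} Draw $n=10^9$ points from $\text{Dirichlet}(1,1,1,1,1)$, which is exactly the uniform distribution on $\Delta_4$. Map each point to $q$-coordinates and test membership in $\mathcal{M}_1$ using the strict inequalities \eqref{ineq:0<q<1}--\eqref{ineq:a23} of \Cref{thm:m1-semialgebraic-description-in-qs}. Test membership in $\mathcal{M}_2$ and $\mathcal{M}_3$ with the index-permuted versions of the same inequalities. Record the indicator vector in $\{0,1\}^3$, so that all seven Venn regions are estimated at once.

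\textbf{Error control and consistency checks.} Each proportion is a binomial mean, so its standard error is at most $\sqrt{p(1-p)/n}\lesssim 5\times 10^{-6}$. This justifies the percentages as reported to about three significant figures. Two internal checks apply:
\begin{itemize}
\item By the $S_3$ symmetry permuting leaves, the three single-model proportions must agree, and so must the three pairwise intersections, up to sampling error.
\item The numbers must fit together by inclusion--exclusion. The $47.9\%$ figure is $\bigl(|\mathcal{M}_1\cap\mathcal{M}_2|+|\mathcal{M}_1\cap\mathcal{M}_3|-|\mathcal{M}_1\cap\mathcal{M}_2\cap\mathcal{M}_3|\bigr)/|\mathcal{M}_1|$.
\end{itemize}

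\textbf{Absolute volumes.} Since the map is affine, $\mathrm{vol}_4(P)=|\det J|\cdot\mathrm{vol}_4(\Delta_4)$. Here $\Delta_4$ is parametrized by four of its five coordinates, so its volume is $1/4!$, and $J$ is the $4\times4$ linear part in those coordinates. This is the quantity recorded in \cref{rmk:volume-scaling}, and it should evaluate to approximately $0.26337$. Each model or intersection volume is then the Monte Carlo proportion times $0.26337$. For example, $0.0176\times 0.26337\approx 0.00464$.

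The main obstacle is getting the scaling factor exactly right. This means consistently handling the class multiplicities in the Fourier transform, the choice of which four coordinates parametrize $\Delta_4$, and the sign of the determinant. I would verify the factor independently by sampling uniformly in the bounding box $[-1,1]^4$ and checking the fraction that lands in $P$.
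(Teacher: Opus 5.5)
Your proposal is correct and takes essentially the same approach as the paper. The paper also samples $10^9$ points from the Dirichlet$(1,1,1,1,1)$ distribution, tests them against the strict inequalities of \Cref{thm:m1-semialgebraic-description-in-qs} and their index-permuted versions, and multiplies the resulting proportions by the volume of the space of simplified coordinates. The one cosmetic difference is that you compute that volume directly in the four-coordinate chart, as $|\det J|/4! = 2(4/3)^4/24 = 64/243 \approx 0.26337$, whereas \Cref{lem:volume-scaling} goes through the intrinsic volume $\sqrt{5}/24$ of $\Delta_4$ and the scaling factor $\frac{2}{\sqrt{5}}(4/3)^4$; both give the same number. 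One small slip in wording: it is the forward transform $\mathcal{F}$, not its inverse, that carries $\Delta_4$ onto $P$.
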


While these figures give some perspective on the geometry of the models and
their intersections in the $q$-coordinates, the issue of identifiability is
perhaps better illuminated by considering which \textit{parameter} choices for a model
give a site-pattern probability distribution that belongs to one or both of
the other models. To investigate this, we sampled  $n=10^{8}$ numerical parameter vectors $(a_{1},\ldots,a_{6},\delta)$
for $\mathcal{N}_{1}$ uniformly at random from 
$\Theta_{\mathcal{N}_{1}}\times (0,1)$.

\begin{observation}[Model overlap in the parameter space]
  \label{obs:volume-of-parameter-intersections}
  We find that when parameters are sampled uniformly at random, about
  $94.3\%$ of parameter choices yield a point in the intersection of two or
  more models, in which case the hybrid is not distinguishable. In particular,
  about $91.4\%$ of parameter choices lie in
  $\psi_{\mathcal{N}_{1}}^{-1}(\mathcal{M}_{1}\cap \mathcal{M}_{2}\cap
  \mathcal{M}_{3})$, about $92.8\%$ lie in each of
  $\psi_{\mathcal{N}_{1}}^{-1}(\mathcal{M}_{1}\cap \mathcal{M}_{2})$ and
  $\psi_{\mathcal{N}_{1}}^{-1}(\mathcal{M}_{1}\cap \mathcal{M}_{3})$, and only
  $5.7\%$ lie in
  $\psi_{\mathcal{N}_{1}}^{-1}(\mathcal{M}_{1}\backslash(\mathcal{M}_{2}\cup
  \mathcal{M}_{3}))$.

  One limitation in interpreting these results is that the uniform sampling regime, while of interest geometrically, places excessive
  weight on very long branches which are not biologically realistic.  
  To address this, we simulated networks with shorter, bounded branch lengths. For each
  $\delta\in \left\{.01,.02,\ldots,.99\right\}$ and each
  $m\in \left\{.1, .25, .5, 1\right\}$, we sampled $n=10^{7}$ sets of branch lengths
  $t_{1},\ldots,t_{6}\overset{iid}{\sim} \text{unif}(0,m)$ for $\mathcal{N}_{1}$. 

  \begin{figure}
    \centering
    \begin{tikzpicture}[baseline=(current bounding box.center)]
      \node[inner sep=0] (image) {\includegraphics[scale=.3]{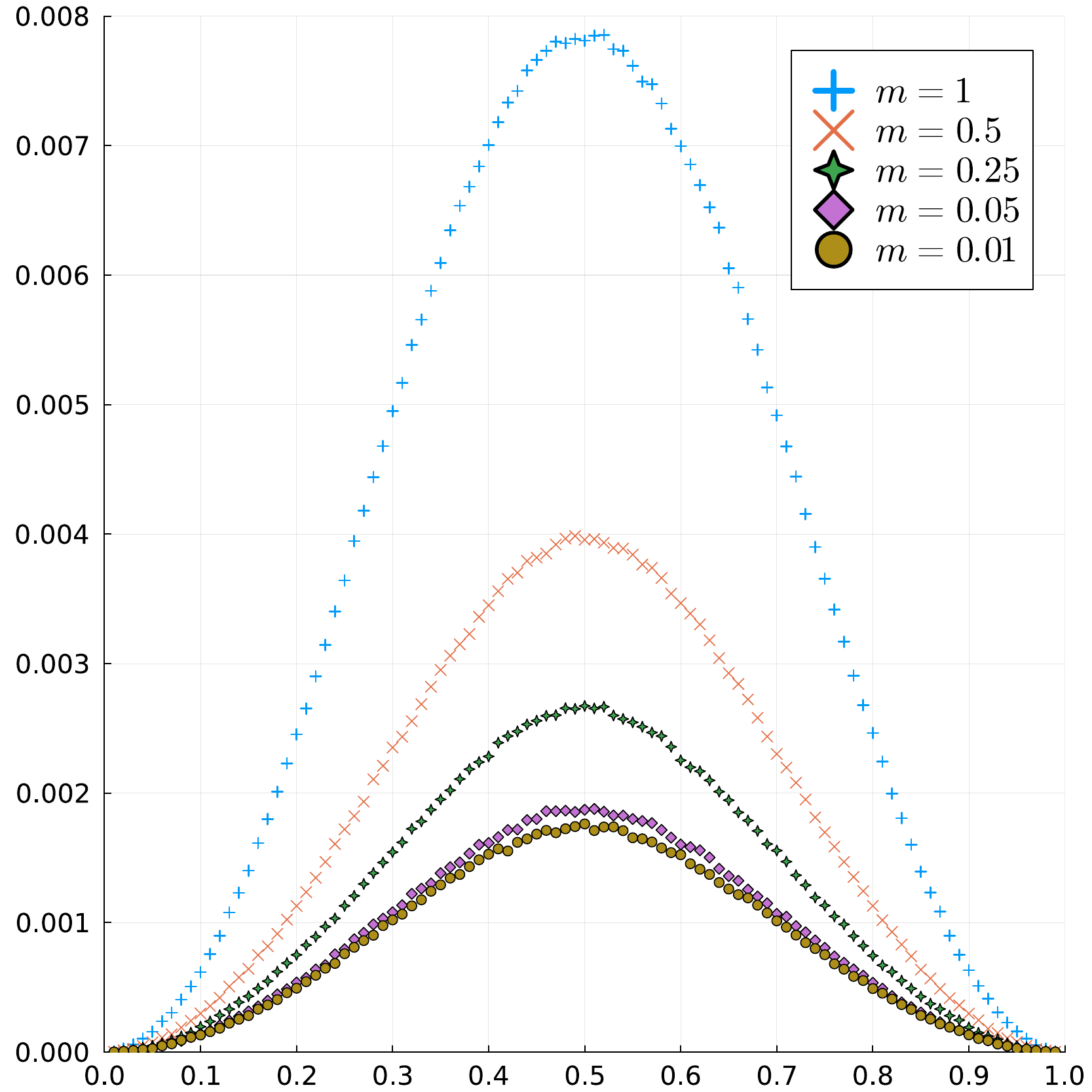}};

      \node[font=\sffamily] at ($(image.south)+(0,-11pt)$) {{$\delta$ (reticulation parameter)}};

      \node[rotate=90,font=\sffamily] at ($(image.west)+(-11pt,0)$)
      {{Proportion of samples in $\mathcal{M}_{1}\backslash \left(\mathcal{M}_{2}\cup \mathcal{M}_{3}\right)$}};
    \end{tikzpicture}
    \caption{The proportion of networks in which the hybrid node is
      distinguishable as a function of $\delta$, for $\mathcal{N}_{1}$ with
      branch lengths
      $t_{1},\ldots,t_{6}\overset{iid}{\sim} \text{unif}(0,m)$.}
    \label{fig:observation-2-plot}
  \end{figure}

  The results of this simulation (shown in \cref{fig:observation-2-plot})
  indicate that the conditions needed for hybrid
  distinguishability are rarely satisfied for the sorts of branch lengths
  typically seen in phylogenetic inference. The proportion of hybrid distinguishable
  networks was always less than $1\%$, even in the best-case setting
  where $\delta=1/2$ and $m=1$. Moreover, as the bound on branch lengths $m$
  decreases, the distinguishability proportion further decreases, and is
  negligible when $\delta$ is close to zero or one.
\end{observation}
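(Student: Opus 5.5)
Since this observation is an empirical statement about Monte Carlo proportions, I will establish it by a reproducible simulation whose membership test is exact, not by a symbolic argument. The key simplification is that every sampled $\theta\in\Theta_{\mathcal{N}_1}\times(0,1)$ gives $\psi_{\mathcal{N}_1}(\theta)\in\mathcal{M}_1$ by definition. So by \cref{thm:q-coordinate-conditions-for-belonging-to-m1-intersect-m2}, deciding membership in $\mathcal{M}_2$ and $\mathcal{M}_3$ reduces to checking the signs of the two quadratics \eqref{eq:M1-M2-condition} and \eqref{eq:M1-M3-condition}. Equivalently, by \cref{thm:parameter-conditions-for-belonging-to-m1-intersect-m2}, it reduces to comparing $a_1$ with the right-hand sides of \eqref{eq:solutionCondition-M1-M2} and \eqref{eq:solutionCondition-M1-M3}. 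I will use the $q$-coordinate form, after computing $\mathbf{q}$ from \cref{eq:q-defs}, and use the parameter form only as a cross-check.

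For the first part, I will draw $n=10^8$ vectors $(a_1,\ldots,a_6,\delta)$ i.i.d.\ uniform on $(0,1)^7$. For each vector I record the two indicator bits $I_2=[\mathbf{q}\in\mathcal{M}_2]$ and $I_3=[\mathbf{q}\in\mathcal{M}_3]$. From these I report the empirical frequencies of $I_2$, $I_3$, $I_2I_3$, $I_2\vee I_3$ and $(1-I_2)(1-I_3)$. The binomial standard error at this sample size is at most $0.5/\sqrt{10^8}=5\times10^{-5}$, so the reported percentages are accurate to the stated digit. As consistency checks, the two pairwise frequencies should agree because of the symmetry swapping leaves $2$ and $3$ (with $a_2\leftrightarrow a_3$, $a_4\leftrightarrow a_6$, $\delta\leftrightarrow 1-\delta$). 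Inclusion--exclusion should also hold: $92.8+92.8-91.4\approx 94.2$, matching $94.3\%$ up to rounding. Finally, $100-94.3=5.7$ gives the exclusive region $\mathcal{M}_1\backslash(\mathcal{M}_2\cup\mathcal{M}_3)$.

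For the bounded-branch-length part, for each $\delta\in\{.01,\ldots,.99\}$ and each $m\in\{.1,.25,.5,1\}$ I will sample $t_1,\ldots,t_6$ i.i.d.\ from $\mathrm{unif}(0,m)$. I then set $a_i=e^{-4t_i/3}$ as in \cref{eq:fourier-parameter-to-branch-length} and record the frequency of $(1-I_2)(1-I_3)$; plotting these frequencies produces \cref{fig:observation-2-plot}. To support the qualitative claims, and not merely report numbers, I will add a heuristic explanation based on \cref{thm:parameter-conditions-for-belonging-to-m1-intersect-m2}. When all $t_i$ are small, $a_5$ is near $1$, so (as in the proof of \cref{lem:nonempty-open-subsets}) the right-hand side of \eqref{eq:solutionCondition-M1-M2} is near $1/(\delta a_4+(1-\delta)a_6)>1>a_1$. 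This makes exclusivity rare and rarer as $m$ shrinks. As $\delta\to 0$ or $\delta\to 1$, one of the two bracketed factors dominates and the criterion degenerates toward the claw-like boundary, which suppresses exclusivity near the ends of the $\delta$ range.

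The main obstacle is numerical reliability rather than logic. Near the boundary curves, and when $q$-values are tiny because of long branches under uniform $a_i$ sampling, the sign of $q_{\mathtt{CGT}}+q_{\mathtt{CAC}}(q_{\mathtt{CGT}}-q_{\mathtt{ACC}}-q_{\mathtt{CCA}})$ can suffer cancellation. To handle this, I will first factor out the common monomial $a_1a_2a_3a_5$ where possible, or equivalently use the parameter-form comparison. I will then recheck a random subsample in higher precision to confirm that misclassification occurs at a rate well below the reported resolution.
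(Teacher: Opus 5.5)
Your plan matches the paper's: the observation is an empirical Monte Carlo result, obtained by sampling parameters (uniformly on $(0,1)^7$, or with $t_i\sim\text{unif}(0,m)$ on a grid of $\delta$ values) and classifying each resulting point of $\mathcal{M}_1$ with the exact criteria of \cref{thm:q-coordinate-conditions-for-belonging-to-m1-intersect-m2} and \cref{thm:parameter-conditions-for-belonging-to-m1-intersect-m2}, and your symmetry and inclusion--exclusion checks are sound. One caution about your optional heuristic: when all $t_i$ shrink together, $a_1$ and $1/(\delta a_4+(1-\delta)a_6)$ approach $1$ at the same rate as $a_5$, and to first order the criterion is scale-invariant (the distinguishable region near zero branch lengths is a cone, as in the asymptote $\ell=(1+\sqrt{5})s$ of \cref{ex:maxexample}), so the $a_5\to 1$ limit does not show that exclusivity becomes rarer as $m$ shrinks, the proportion need not tend to zero, and that trend should be read off the simulation instead.
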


\subsection{Implications for Network Inference in Practice} \label{sec:implications}

The previous section shows that the three models overlap substantially,
especially when considering biologically relevant parameters. Here we
elaborate on some of the implications of the distinguishability criteria in
\cref{thm:parameter-conditions-for-belonging-to-m1-intersect-m2} when taking
into account additional biological considerations. We first start with an
example to illustrate
\cref{thm:parameter-conditions-for-belonging-to-m1-intersect-m2}.

\begin{example}

  \label{ex:LargeNetworks-triplet}
  Let $\mathcal N$ be the 5-leaf network shown in \Cref{fig:LargeNetworks} and
  let $X$ be a subset of the taxa $\{A, B, C, D, E\}$. Let $\mathcal{N}_{|X}$
  be the restriction of $\mathcal N$ to the taxa in $X$, $p_{|X}$ the
  site-pattern probability distribution from the Jukes-Cantor model on the
  restricted network when $\delta = \frac{1}{2}$, and $q_{|X}$ the image of
  $p_{|X}$ under the Fourier transform. Notice that when $|X| = 3$,
  $\mathcal{N}_{|X}$ is a tree unless $X$ is equal to $\{A, B, C\}$,
  $\{A, B, E\}$, $\{A, C, D\}$, or $\{A, D, E\}$, in which case it is a 3-leaf
  triangle network.

  As an example, consider the restriction of this network to the leaf set
  $\{A, C, D\}$. Matching the labels from
  \Cref{fig:models-N1-N2-N3-with-labels} to the version of the restricted
  network shown in \Cref{fig:RestrictedLargeNetworks} (where $A,C,$ and $D$
  match to $1,2$, and $3$ respectively), we obtain the following values for
  the Fourier parameters:
  \begin{align*}
    a_{1} &= e^{-\frac{4}{3}(.20)},
    &\quad a_{2} &= e^{-\frac{4}{3}(.30)},
    &\quad a_{3} &= e^{-\frac{4}{3}(.15)},\\
    a_{4} &= e^{-\frac{4}{3}(.10)},
    &\quad a_{5} &= e^{-\frac{4}{3}(.70)},
    &\quad a_{6} &= e^{-\frac{4}{3}(.10)}.
  \end{align*}
  Substituting these values into the right-hand side of inequality
  \eqref{eq:solutionCondition-M1-M2} gives approximately $0.74$, which is less
  than $a_{1}\approx 0.77$. This tells us that $q_{|\{A, C, D\}}$ could not
  have come from the 3-leaf triangle network where $C$ is the leaf below the
  reticulation vertex. On the other hand, applying inequality
  \eqref{eq:solutionCondition-M1-M3} in a similar manner, we find the right
  hand side to be approximately $0.79$, so the inequality is not satisfied.
  This tells us that $q_{|\{A, C, D\}}$ \emph{could have} come from a 3-leaf
  triangle network where $D$ is the leaf below the reticulation vertex. As a
  result, it is \emph{not} possible to determine the orientation of edges in
  the triangle of the 3-leaf network from $q_{|\{A, C, D\}}$.

  Applying the bounds for all of the triplets that result in 3-leaf triangle
  networks, we see that only when $X = \{A, D, E\}$ can we infer the
  orientation of edges in the triangle $\mathcal N_{|X}$ from $q_{|X}$. That
  we can identify the hybrid node in this example only for the two shortest leaf
  edges is in line with our observation from Example \ref{ex:maxexample} that
  the signal of hybridization decays. Note that there is no way to gain
  distinguishability by adding additional leaves below the hybrid node, in
  contrast to the gene tree data setting \citep{allman2024identifiability}.

  \begin{figure}[h]
    \centering
    \includegraphics[height=2in]{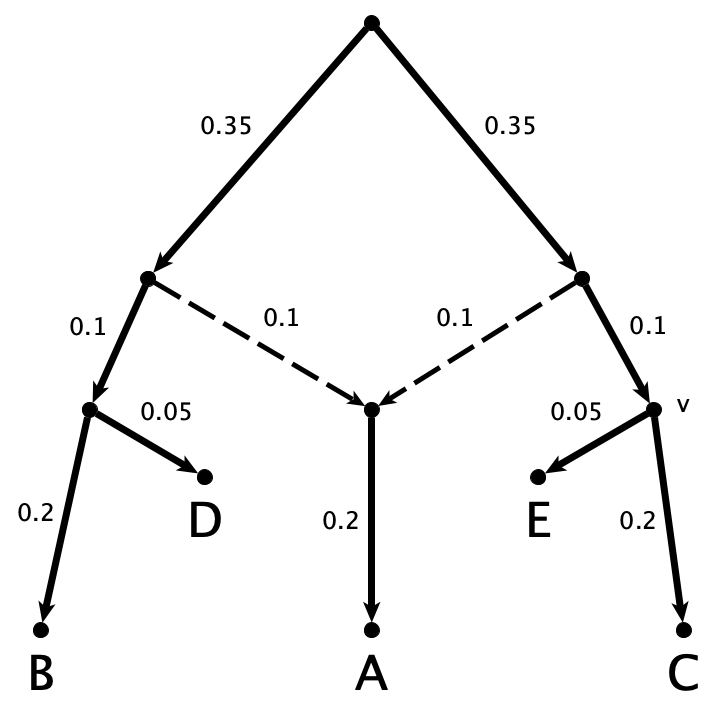}
    \caption{A 5-leaf network with branch lengths in expected number of mutations per site.}
    \label{fig:LargeNetworks}
  \end{figure}

  \begin{figure}[h!]
    \centering
    \includegraphics[width = 4cm]{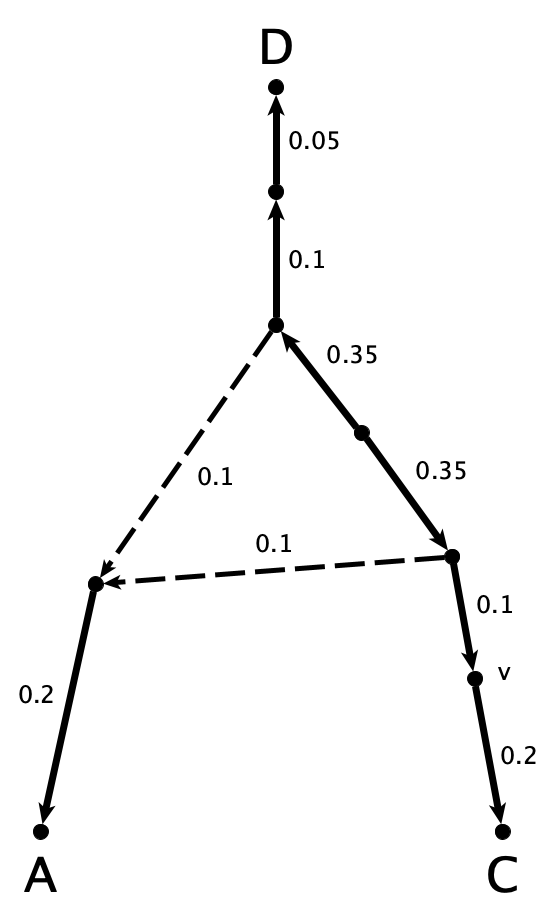}
    \caption{The network in \Cref{fig:LargeNetworks} restricted to $\{A, C, D\}$.}
    \label{fig:RestrictedLargeNetworks}
  \end{figure}

\end{example}

Careful interpretation of
the inequalities in
\cref{thm:parameter-conditions-for-belonging-to-m1-intersect-m2} suggests that
the criteria for hybrid distinguishability are highly restrictive, even more
so when biological considerations are taken into account. In particular, we
find three necessary conditions for the hybrid node to be distinguishable.

First, the hybridization must be sufficiently recent. Inequalities
\eqref{eq:solutionCondition-M1-M2} and \eqref{eq:solutionCondition-M1-M3}
establish an exact age cutoff: if the hybridization is sufficiently
ancient, i.e. $a_{1}=e^{-4t_{1}/3}$ is sufficiently small, then the hybrid
node cannot be distinguished, even with unlimited data. Although this cutoff
(the minimum of the right-hand sides of \eqref{eq:solutionCondition-M1-M2} and
\eqref{eq:solutionCondition-M1-M3}) is a complicated function of the model
parameters, it is not difficult to deduce the simpler necessary condition
that distinguishability requires $a_{1}>a_{5}$.
Converting this condition to branch lengths, it holds that if the evolutionary
distance $t_{1}$ between the initial hybrid and its present-day descendants
is greater than the divergence $t_5$ between the hybrid's parent species,
then the hybrid will not be distinguishable.

Second, the evolutionary divergence between the parental species prior to
hybridization must be sufficiently large for the hybrid to be distinguishable.
If too little divergence took place prior to hybridization (i.e., if
$a_{5}=e^{-\frac{4}{3}t_{5}}$ is sufficiently close to $1$), the right-hand
sides of \eqref{eq:solutionCondition-M1-M2} and
\eqref{eq:solutionCondition-M1-M3} exceed $1$, in which case neither
inequality can be satisfied since $a_{1}<1$. This requirement is restrictive
because 3-cycles are expected to occur mostly between closely related taxa, as
hybridization rates decline rapidly with evolutionary divergence
\citep{mallet2007natural,penalba2024role}.

Third, distinguishability requires the hybrid's descendants retain substantial
minor-parent ancestry, i.e. from the parent species with the smaller genetic
contribution to the hybrid. When the proportion of genomic material inherited
from the minor-parent (i.e., $\min\left\{\delta,1-\delta\right\}$) is near
zero, both inequalities \eqref{eq:solutionCondition-M1-M2} and
\eqref{eq:solutionCondition-M1-M3} will fail because their right-hand sides
exceed $1$. This is illustrated in the plot of
\Cref{fig:distinguishability-a5-root}, which shows that distinguishability is
unlikely when $\delta$ is close to $0$ or $1$ (see
\cref{app:details-for-fig:distinguishability-a5-root-simulation}).

In practice, the required combination of high parental divergence together
with substantial minor-parent ancestry is likely to be rare. Although F1
hybrids may exhibit $\delta \approx 1/2$, theoretical and empirical studies
suggest that minor-parent ancestry typically decays rapidly through selective
purging (within tens of generations)
\citep{langdon2024swordtail, moran2021genomic, veller2023recombination}. Thus,
while retained levels of minor-parent ancestry exceeding $10\%$ are not
uncommon \citep{langdon2024swordtail,moran2021genomic}, it is often the case
that minor-parent ancestry is small, and this tends to be incompatible with
distinguishability (see \cref{fig:observation-2-plot,fig:distinguishability-a5-root}). This difficulty
is compounded by the fact that selective purging is driven in part by hybrid
incompatibilities, which increase nonlinearly with parental divergence
\citep{langdon2024swordtail}. Hence, the high parental divergence required for
distinguishability is likely to coincide with \textit{low} minor-parent
ancestry, and this combination is likely to violate the distinguishability
conditions.

\begin{figure*}[]
  \centering

 \includegraphics[scale=.25]{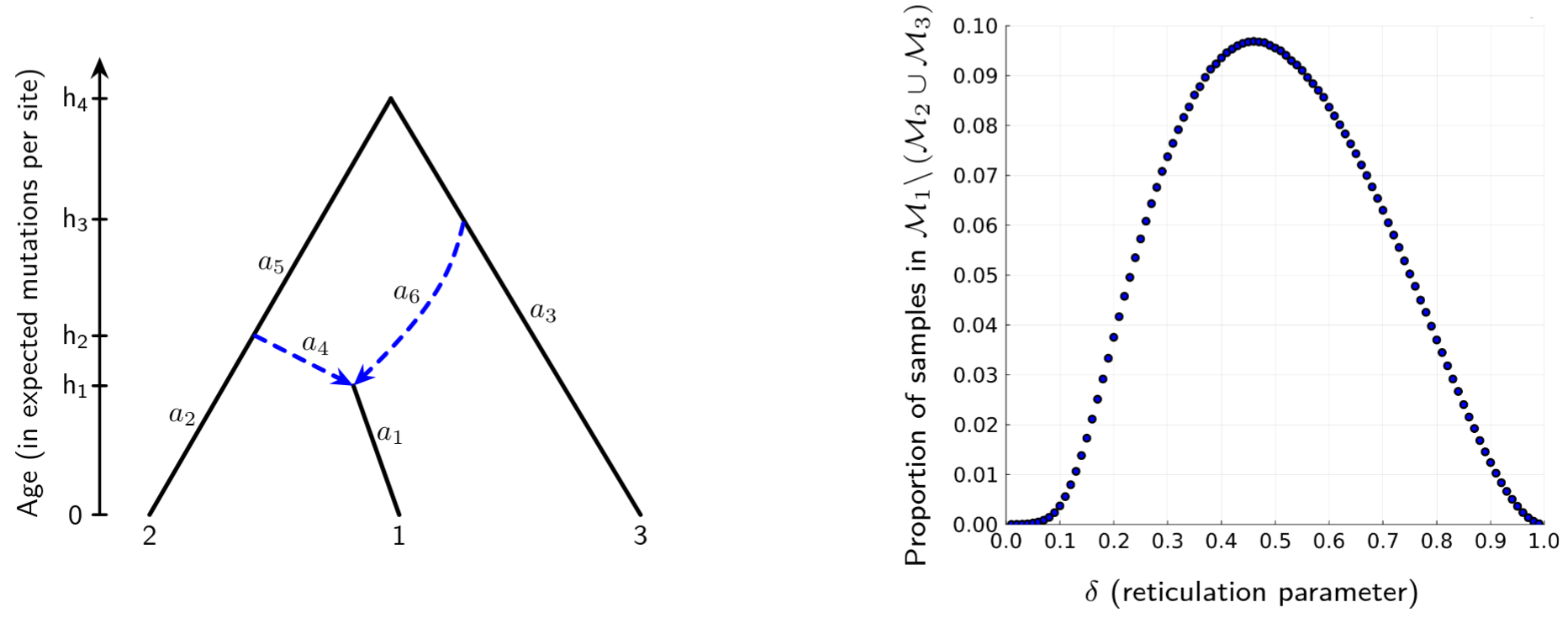}
  \caption{Left: The 3-leaf phylogenetic network $\mathcal{N}_{1}$ with root
    placed on the edge corresponding to the parameter $a_{5}$. Right: The proportion of such
    networks for which the hybrid node is distinguishable, assuming the
    intervals $h_{1}, h_{2}-h_{1}, h_{3}-h_{2}$ and $h_{4}-h_{3}$ are drawn
    uniformly at random from the interval $(0,.5)$. See
    \cref{app:details-for-fig:distinguishability-a5-root-simulation}. }
  \label{fig:distinguishability-a5-root}
\end{figure*}

Finally, we discuss the impact of the root location. The limitations of inferring the hybrid node in a 3-cycle can often, though
not always, be circumvented through additional taxon sampling. If
$\mathcal{N}_{1}$ is rooted on edge $a_{5}$ (as in
\cref{fig:distinguishability-a5-root}), then adding an outgroup will result in
a network with a $4$-cycle, allowing for the use of tests based on $4$-cycles.
On the other hand, if the root lies on one of the leaf edges $2$ or $3$ (as
shown in \cref{fig:ghost-molecular-clock-case}), this need not be possible.
The network in \Cref{fig:ghost-molecular-clock-case} could arise due to
hybridization between two unsampled or extinct ``ghost lineages'' which both
diverged from the ancestor of species $2$ after its divergence with species
$3$. If species $3$ is the closest extant relative of species $1$ and $2$,
then additional taxon sampling will not yield a 4-cycle, so methods specific
to 3-cycles are required.

However, in the ghost-lineage scenario in
\Cref{fig:ghost-molecular-clock-case}, we expect the hybrid node to typically
not be distinguishable from site pattern data, at least when approximately
clock-like evolution is assumed. This is due to our next theorem, which shows
that under a molecular clock model with all leaves equidistant from the root,
the hybrid node will not be distinguishable.

\begin{theorem}
  \label{thm:ghost-molecular-clock-case}
  Under the ghost-lineage scenario shown in
  \Cref{fig:ghost-molecular-clock-case} with the molecular clock assumption,
  for any choice of $0<h_{1}<h_{2}<h_{3}<h_{4}$, it holds that
  $\mathbf{q}\in \mathcal{M}_{1}\cap \mathcal{M}_{2}$. In particular, the
  hybrid node in the 3-cycle is not distinguishable from the site pattern
  distribution.
\end{theorem}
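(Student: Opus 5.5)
The plan is to compute the simplified Jukes--Cantor $q$-coordinates for the ghost-lineage network directly, and then check the single criterion of \Cref{thm:q-coordinate-conditions-for-belonging-to-m1-intersect-m2}. Since the network in \Cref{fig:ghost-molecular-clock-case} is a rooted version of $\mathcal{N}_1$, its distribution automatically lies in $\mathcal{M}_1$. So $\mathbf{q}\in\mathcal{M}_1\cap\mathcal{M}_2$ holds if and only if
\[
q_{\mathtt{CGT}}+q_{\mathtt{CAC}}(q_{\mathtt{CGT}}-q_{\mathtt{ACC}}-q_{\mathtt{CCA}})>0 .
\]

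First I will read off the branch lengths in terms of the heights; I cannot see the figure, so this reading is a guess. I take the root to lie on the leaf edge of species $3$ at height $h_4$. The triangle vertex adjacent to leaf $3$ sits at height $h_3$, the vertex adjacent to leaf $2$ sits at height $h_2$, and the hybrid vertex sits at height $h_1$. Under the molecular clock this gives
\[
t_1=h_1,\quad t_2=h_2,\quad t_4=h_2-h_1,\quad t_5=h_3-h_2,\quad t_6=h_3-h_1,\quad t_3=2h_4-h_3,
\]
where $t_3$ is the root-split length, since time reversibility lets the root be suppressed. Set $x_i=e^{-4h_i/3}$, so that $1>x_1>x_2>x_3>x_4>0$. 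Then
\[
a_1=x_1,\quad a_2=x_2,\quad a_4=x_2/x_1,\quad a_5=x_3/x_2,\quad a_6=x_3/x_1,\quad a_3=x_4^2/x_3 .
\]

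Substituting into \eqref{eq:q-defs}, the telescoping products collapse to
\[
q_{\mathtt{ACC}}=q_{\mathtt{CAC}}=x_4^2,\qquad q_{\mathtt{CCA}}=\delta x_2^2+(1-\delta)x_3^2,\qquad q_{\mathtt{CGT}}=x_4^2\bigl(\delta x_2+(1-\delta)x_3\bigr).
\]
The key observation is the equality $q_{\mathtt{ACC}}=q_{\mathtt{CAC}}$. It holds because, under the clock, leaves $1$ and $2$ are equidistant from everything above the vertex at height $h_3$. With this equality, the criterion \eqref{eq:M1-M2-condition} is literally inequality \eqref{ineq:a23} with $q_{\mathtt{ACC}}$ and $q_{\mathtt{CAC}}$ interchanged. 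Inequality \eqref{ineq:a23} holds by \Cref{thm:m1-semialgebraic-description-in-qs} because $\mathbf{q}\in\mathcal{M}_1$, so $\mathbf{q}\in\mathcal{M}_2$ follows at once.

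As a sanity check that does not depend on invoking \Cref{thm:m1-semialgebraic-description-in-qs}, I can verify the inequality directly. Write $Q=x_4^2$, $c=\delta x_2+(1-\delta)x_3$ and $d=\delta x_2^2+(1-\delta)x_3^2$. The condition then reduces to $c-d>Q(1-c)$, that is,
\[
\delta x_2(1-x_2)+(1-\delta)x_3(1-x_3) \;>\; x_4^2\bigl(\delta(1-x_2)+(1-\delta)(1-x_3)\bigr).
\]
This holds termwise, since $x_4^2<x_3^2\le x_3<x_2$.

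The main obstacle is the first step: correctly matching the ghost-lineage picture (where the root sits, and which heights bound which edges) to the labelled semi-directed network $\mathcal{N}_1$. If the figure places the hybrid or the ghost splits differently, the telescoping products change. In that case I would redo the same substitution, hoping again for $q_{\mathtt{ACC}}=q_{\mathtt{CAC}}$ or a similar clock-induced identity. Failing that, I would verify $c-d>Q(1-c)$ directly using the ordering $x_4<x_3<x_2<x_1<1$.
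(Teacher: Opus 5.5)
Your proposal is correct, and your reading of the figure reproduces exactly the Fourier parameters the paper uses. Your route differs from the paper's, however.

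The paper substitutes these parameters into the parameter-space criterion \eqref{eq:solutionCondition-M1-M2} and uses the clock relations $a_2=a_1a_4$ and $a_6=a_4a_5$. A computer-algebra calculation then rewrites and factors the resulting inequality as $-a_6[\delta(1-a_2)(1-a_3a_5)+(1-\delta)a_5(1-a_3)(1-a_2a_5)]<0$.

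You stay in $q$-coordinates, where the clock forces $q_{\mathtt{ACC}}=q_{\mathtt{CAC}}=e^{-8h_4/3}$. The $\mathcal{M}_2$ criterion \eqref{eq:M1-M2-condition} then coincides with inequality \eqref{ineq:a23}, which already holds on $\mathcal{M}_1$. This avoids symbolic computation. It also explains where the paper's factor comes from. Under the clock, the reparameterized values are $a_4^*=\delta a_2$ and $a_6^*=(1-\delta)a_2a_5$. Substituting these into the factorization of \eqref{ineq:a23} in the proof of \cref{lem:main-theorem--proof-of-forward-direction} gives $a_2^2a_3a_5$ times exactly that bracket.

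Your idea also extends beyond the clock. In general, the left side of \eqref{eq:M1-M2-condition} equals the left side of \eqref{ineq:a23} plus $(q_{\mathtt{ACC}}-q_{\mathtt{CAC}})(q_{\mathtt{CCA}}-q_{\mathtt{CGT}})$, and $q_{\mathtt{CCA}}>q_{\mathtt{CGT}}$ on $\mathcal{M}_1$. Hence every point of $\mathcal{M}_1$ with $q_{\mathtt{ACC}}\ge q_{\mathtt{CAC}}$ already lies in $\mathcal{M}_2$.

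What the paper's route offers instead is a mechanical check with no coincidence to spot. It is phrased in the numerical parameters of \cref{thm:parameter-conditions-for-belonging-to-m1-intersect-m2}, which is the form its biological discussion uses.

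Your independent termwise check of $c-d>Q(1-c)$ is also correct.
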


\begin{proof}
  Let $h_{1}<h_{2}<h_{3}<h_{4}$ be arbitrary. By the molecular clock
  assumption, the Fourier parameters are
  \begin{equation}
    \begin{aligned}
      a_{1} &= e^{-\frac{4}{3}h_{1}}
      &\qquad
        a_{2} &= e^{-\frac{4}{3}h_{2}}\\
      a_{3} &= e^{-\frac{4}{3}(2h_{4}-h_{3})}
      &
        a_{4} &= e^{-\frac{4}{3}(h_{2}-h_{1})}\\
      a_{5} &= e^{-\frac{4}{3}(h_{3}-h_{2})}
      &
        a_{6} &= e^{-\frac{4}{3}(h_{3}-h_{1})}.
    \end{aligned}
    \label{eq:molecular-clock-lengths}
  \end{equation}
  We need to show that
  $\mathbf{q}\in \mathcal{M}_{1}\cap \mathcal{M}_{2}$. By
  \cref{eq:solutionCondition-M1-M2} this occurs precisely when
  {\small
    \begin{align*} 
    &a_1(\delta a_4a_5+(1-\delta)a_6)(\delta a_4(1-a_3a_5)+(1-\delta)a_5a_6(1-a_3)) \\
    &< a_5(\delta a_4(1-a_3a_5) + (1-\delta)a_6(1-a_3)).
  \end{align*}}
  By \cref{eq:molecular-clock-lengths}, the Fourier parameters satisfy
  $a_{2}=a_{1}a_{4}$ and $a_{6}=a_{4}a_{5}$. Using these relations, the above
  inequality can be rewritten (code available in the supplemental materials in the file \texttt{ghost-scenario-calculation.m2}) as
  \begin{align*}
    a_6(&a_2a_3a_5^2\delta-a_2a_3a_5^2-a_2a_3a_5\delta -a_2a_5^2\delta\\
    &+a_2a_5^2+a_2\delta+a_3a_5+a_5\delta-a_5-\delta) < 0.
  \end{align*}
  The left-hand side factors as
  \begin{equation*}
    -a_6[\delta (1-a_2)(1-a_3a_5)+(1-\delta )a_5(1-a_3)(1-a_2a_5)],
  \end{equation*}
  which is clearly negative for all $a_{2},a_{3},a_{5},a_{6},\delta\in (0,1)$.
  Hence $\mathbf{q}\in \mathcal{M}_{1}\cap \mathcal{M}_{2}$.
\end{proof}

\begin{figure}
  \centering
  \includegraphics[scale=.25]{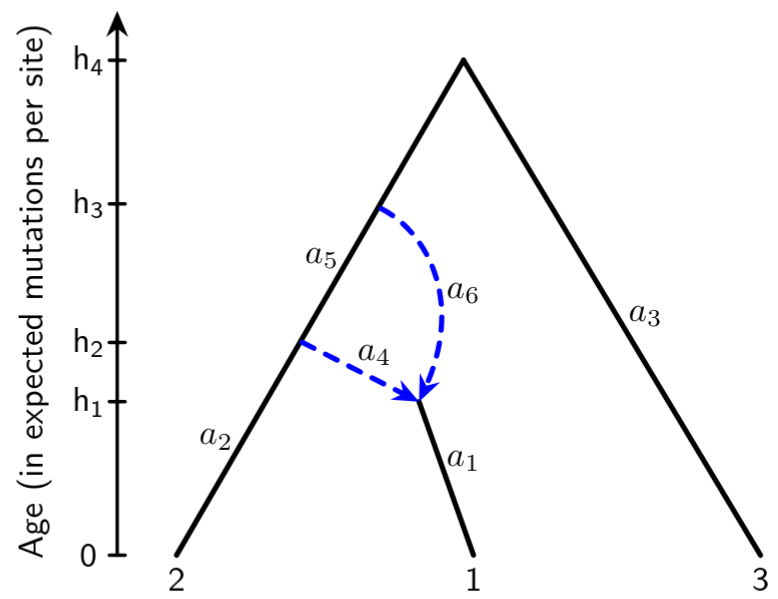}
  \caption{A 3-leaf phylogenetic network rooted on edge $3$ at time $h_{4}$.
    The network exhibits a 3-cycle arising from ghost lineages that diverged
    from species 2 at times $h_{3}$ and $h_{2}$ and later hybridized at time
    $h_{1}$. Note that $a_{3}$ is the Fourier parameter of leaf 3 when the
    root is suppressed.}
  \label{fig:ghost-molecular-clock-case}
\end{figure}

\subsection{Applications to Biological Data}

In this section, we present two examples from the literature where a phylogenetic network is inferred from biological data.
We restrict these networks to a subset of the taxa so that the inferred networks are triangle networks. We then apply the bound from \Cref{thm:parameter-conditions-for-belonging-to-m1-intersect-m2} to determine if the inferred parameters are identifiable under the Jukes-Cantor model. Our goal in this section is simply to give insight into the implications of our results for practical inference. As such, we have derived parameter values from the published literature that are likely of interest to biologists. But, we note that networks were not inferred assuming that the data were generated by the Jukes-Cantor network models we study.

\subsubsection{Melinaea.}

In \citep{vanderHeijden2025}, the authors study the
lineages of \emph{Melinaea} and \emph{Mechanitis} butterflies found throughout
Central and South America and provide evidence of fast and recent radiation.
Here, we examine one of the triangle networks from that paper involving
\emph{Melinaea idae}, \emph{Melinaea lilis}, and \emph{Melinaea marsaeus}.
\emph{Mel. idae} and \emph{Mel. lilis} are sister species while \emph{Mel.
  marsaeus} is a representative of the ingroup clade involved in the study.
The authors fit the network using the
multispecies-coalescent-with-introgression (MSCi) model implemented in BPP
v.4.6.2 \citep{Flouri2019}. For their analysis, they assumed a mutation rate
of $2.9 \times 10^{-9}$ substitutions per site per generation and four
generations per year. Converting the branch lengths from scenario 1 of the
inferred species network in \citep[Fig.~3A]{vanderHeijden2025} into expected
number of substitutions per site, we obtain the network shown in
\Cref{fig:butterfly-example}.

After converting the branch lengths to Fourier parameters, we then substituted
them (and the estimate $\delta=.261$) into the right-hand sides of the
inequalities in
\Cref{thm:parameter-conditions-for-belonging-to-m1-intersect-m2}. This gave
$\approx 1.0004$ for \eqref{eq:solutionCondition-M1-M2} and $\approx 1.0003$
for \eqref{eq:solutionCondition-M1-M3}, both of which are greater than
$a_{1}=.997$. Hence, neither inequality is satisfied, and therefore the hybrid
is \emph{not} distinguishable. Together, these results imply that under the
Jukes-Cantor model with no coalescence, any of the three species could be the
hybrid descendant.

\begin{figure}[h]
  \centering

  \begin{tikzpicture}[scale=1.3, line cap=round, line join=round, font=\footnotesize\sffamily]
    \coordinate (A) at (2.42,0); 
    \coordinate (B) at (2.43,-1); 
    \coordinate (C) at (2.43,1); 
    \coordinate (D) at (0,0); 
    \coordinate (E) at (-1.2,1); 
    \coordinate (F) at (-2.12,0); 
    \coordinate (G) at (-0.01,-1); 
    
    \draw[] (A) -- (D) node[pos=.5, above]{0.00242};
    \draw[] (B) -- (G) node[pos=.5, above]{0.00244};
    \draw[] (C) -- (E) node[pos=.5, above]{0.00363};
    \draw[] (F) -- (E) node[pos=.5, left]{0.00092};
    \draw[] (F) -- (G) node[pos=.5, below left]{0.00211};
    
    \draw[dashed,] (D) -- (E) node[pos=.5, right]{\ 0.0012};
    \draw[dashed] (D) -- (G) node[pos=.45, right]{0.00001};

    \node[right] at (A) {idae};
    \node[right] at (B) {lilis};
    \node[right] at (C) {mars};

  \end{tikzpicture}

  \caption{Network from \cite{vanderHeijden2025} on three species of
    neotropical butterflies with branch lengths converted to expected number
    of mutations per site.}
  \label{fig:butterfly-example}
\end{figure}

\subsubsection{Rhizoplaca.}

In \citep{keuler2020genome}, the authors use maximum pseudolikelihood,
implemented in \texttt{PhyloNet} \citep{phylonet-paper}, to obtain a
phylogenetic network for seven lichenized fungus species in the genus
\textit{Rhizoplaca}. We replicate their analysis without bootstrap
thresholding, while constraining the number of hybridizations to one. Our code
is available in the supplemental materials in \texttt{src/rhizoplaca/}. The top-scoring network is shown in \Cref{fig:lichen-example};
branch lengths were converted from coalescent units to expected number of
mutations per site using the population mutation parameter
$\theta = 2\times 10^{-3}$, which roughly accords with estimates found by
\cite{leavitt2013local}. This network appears to closely match the network
obtained in \cite[Fig.~4(a)]{keuler2020genome}, which does not include branch
lengths for the leaf edges and reticulation edges since it was inferred from
gene tree topologies.

\begin{figure}[h!]
  \centering
  \includegraphics[scale=.55, trim=2cm 1cm 1.5cm 0.5cm, clip]{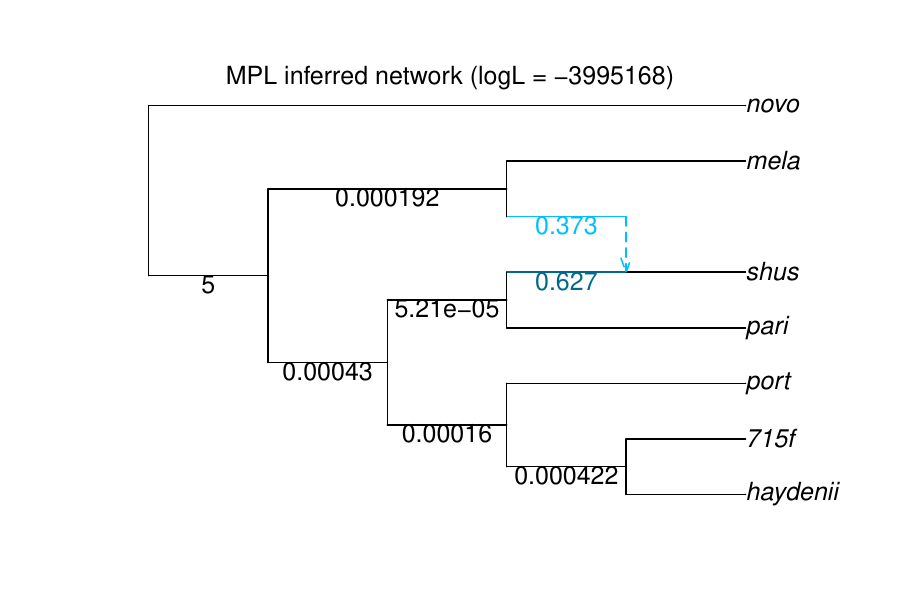}
  \includegraphics[scale=.55, trim=2cm 1cm 1.5cm 1cm, clip]{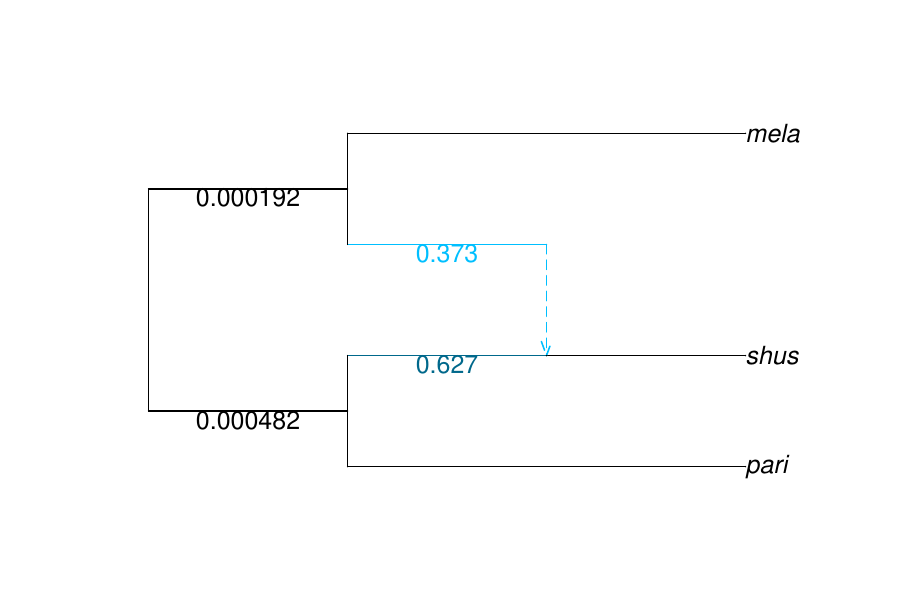}
  \caption{The highest-scoring phylogenetic network for \textit{Rhizoplaca}
    with one reticulation, with internal branch lengths expressed in expected
    number of mutations per site, and its restriction to a 3-taxon network.}
  \label{fig:lichen-example}
\end{figure}

If we restrict the network to the leaves labeled \emph{mela}, \emph{pari}, and
\emph{shus}, we obtain the rooted 3-cycle network with \emph{shus} the hybrid
species (see \cref{fig:lichen-example}). Following
Example \ref{ex:maxexample}, let us suppose that the leaf edges all have
branch length $s$, measured in expected number of mutations per site, and that
the reticulation edges are of negligible length. Using the bound in \Cref{thm:parameter-conditions-for-belonging-to-m1-intersect-m2}, we find that it is possible to determine the direction of hybridization, i.e., that \textit{shus} is the hybrid descendant, if and only if the evolutionary distance $s$ is less than approximately $0.00017$ expected mutations per site.

\section{Discussion}
\label{sec:conclusion}

Since hybridization in eukaryotes occurs most commonly between closely-related
taxa \citep{mallet2016reticulated}, 3-cycles are likely to be common in
nature; however, they have posed significant challenges from an
identifiability standpoint. Due to these challenges, previous identifiability
results have routinely excluded networks with 3-cycles from consideration,
leaving a gap in the literature.

In this work, we fully describe the semialgebraic sets of the Fourier
transformed distributions that arise from each of the three 3-leaf triangle
networks under the Jukes-Cantor model of DNA substitution. Furthermore, for
each pair of 3-leaf triangle networks, we give an explicit semialgebraic
description of the set of parameters of the first model that gets mapped into
the intersection of both models; this set of parameters constitutes
approximately $92.8\%$ of the numerical parameter space
$\Theta_{\mathcal{N}_{1}}\times (0,1)$. Thus, the primary conclusion is that
there is a full-dimensional region in the space of site-pattern probabilities
where identifying the hybrid node is possible, and a full-dimensional region
where identifying the hybrid node is impossible, even with infinitely long
sequences.

Further analysis of our results from a practical lens
suggests that it is generally not possible to determine the orientation of
edges in a 3-leaf triangle network using the site pattern distribution. Hybrid
distinguishability requires recent hybridization, high parental divergence,
and substantial minor-parent ancestry, conditions which are unlikely to all be
satisfied, especially simultaneously. These results are consistent with the
intuition that the signal distinguishing the hybrid lineage is expected to be
weakest when hybridization took place a long time ago, or between
closely-related taxa. What is more surprising is our conclusion, highlighted
in Example \ref{ex:maxexample}, that this loss of identifiability is subject to a
sharp cutoff.

This paper focuses solely on the Jukes-Cantor model, however, we expect that
similar techniques can be used for other group-based models, such as the
Kimura 2-parameter (K2P) and Kimura 3-parameter (K3P) models. The Jukes-Cantor
model is a sub-model of both of these, obtained by identifying certain entries
in the Markov transition matrices. From the model perspective, the
Jukes-Cantor model for a network can be obtained by intersecting the K2P or
K3P model for that network with a linear space. While our work suggests that
similar results might hold for these other models, it is not clear that any
are immediately implied by this work. Thus, it may be the case that hybrid
node identification is more plausible under these different models.

Likewise, the network-based models we study, if we assume ultrametricity, can
be seen as a limit of the network multispecies coalescent (NMSC) model as
population sizes approach zero. Similar identifiability results have also been
established for the NMSC model using semialgebraic conditions.
\cite{allman2024identifiability} show that in level-1 networks, 3-cycles
generally can be detected from gene tree quartet concordance factors under the
NMSC model. In addition, they show that the location of the hybrid node in the
3-cycle can sometimes be determined and sometimes not, depending on the
numerical parameters, the number of leaves, and their position relative to the
hybrid node. Although the results in \citep{allman2024identifiability} do not
imply the results presented in this paper (see Remark 4.7 in
\citep{allman2024identifiability}), there are parallels between their results
and those of this paper. The precise nature of this connection is not
well-understood as the results are based on fundamentally different data
types. A related question, and an especially important one for applications,
is to understand under what conditions 3-cycles can be inferred assuming a
model combining both coalescence and substitution (e.g. see
\cite{allman2022identifiability}).

We have shown that the inequalities of
\Cref{thm:m1-semialgebraic-description-in-qs} contain important information
about model membership. This naturally suggests that it may be possible to use
these inequalities for network inference, in very restrictive settings, using
techniques designed for models described by algebraic constraints, for
example, by using incomplete U-statistics \citep{barnhill2025methodological,
  SturmaNilsDrton} or by substituting sequence data into the defining
polynomial equalities and inequalities and using scoring or supervised machine
learning algorithms for inference \citep{barton2022statistical,
  Casanellas2021SAQ, martin2023algebraic}. However, the results of this paper
point to an important caution in the development of such classification
algorithms. In particular, these algorithms should take into account that the
space of expected site pattern frequencies contains substantial regions where
the classification problem does not have a unique solution. On the other hand,
the results of this paper establish that classification is theoretically
possible over very specific parameter regimes.

Finally, we emphasize that the results of this work should not be interpreted
as settling the question of indistinguishability of 3-cycles more broadly.
Rather, as discussed above, our conclusions are restricted to the problem of
inferring hybrid nodes from site-pattern frequency distributions under the
Jukes–Cantor substitution model. Other sources of genetic signal, including
synteny and gene tree discordance, may provide substantially stronger
information for reconstructing 3-cycles.

\section*{Acknowledgments}

This research began during the Fall 2025 Graduate Research Community at the
University of Hawai`i at M\={a}noa supported by the National Science
Foundation (DGE-2429967). AE was additionally supported by NSF Award
DMS-2023239. EG was additionally supported by DMS-2527518. 

\section*{Supplementary Materials}

Supplementary materials can be found in the following GitHub repository:
\url{https://github.com/max-hill/semialgebraic-conditions-for-triangle-networks}.
\vfill\eject






\begin{thebibliography}{42}
  \providecommand{\natexlab}[1]{#1}
  \providecommand{\selectlanguage}[1]{\relax}
  \providecommand{\bibAnnoteFile}[1]{%
    \IfFileExists{#1}{\begin{quotation}\noindent\textsc{Key:} #1\\
        \textsc{Annotation:}\ \input{#1}\end{quotation}}{}}
  \providecommand{\bibAnnote}[2]{%
    \begin{quotation}\noindent\textsc{Key:} #1\\
      \textsc{Annotation:}\ #2\end{quotation}}

  \bibitem[{Allman et~al.(2025)Allman, An{\'e}, Ba{\~n}os, and
    Rhodes}]{allman2025beyond}
  Allman, E.~S., C.~An{\'e}, H.~Ba{\~n}os, and J.~A. Rhodes. 2025. Beyond
  level-1: Identifiability of a class of galled tree-child networks.
  arXiv:2504.21116.
  \input{allman2025beyond}

  \bibitem[{Allman et~al.(2024)Allman, Ba{\~n}os, Garrote-Lopez, and
    Rhodes}]{allman2024identifiability}
  Allman, E.~S., H.~Ba{\~n}os, M.~Garrote-Lopez, and J.~A. Rhodes. 2024.
  Identifiability of level-1 species networks from gene tree quartets. Bull.
  Math. Biol. 86:110.
  \input{allman2024identifiability}

  \bibitem[{Allman et~al.(2022)Allman, Ba{\~n}os, and
    Rhodes}]{allman2022identifiability}
  Allman, E.~S., H.~Ba{\~n}os, and J.~A. Rhodes. 2022. {Identifiability of
    species network topologies from genomic sequences using the logDet distance}.
  J. Math. Biol. 84:35.
  \input{allman2022identifiability}

  \bibitem[{Ba{\~n}os(2019)}]{banos2019identifying}
  Ba{\~n}os, H. 2019. Identifying species network features from gene tree
  quartets under the coalescent model. Bull. Math. Biol. 81:494--534.
  \input{banos2019identifying}

  \bibitem[{Barley et~al.(2022)Barley, Nieto-Montes~de Oca,
    Manr{\'\i}quez-Mor{\'a}n, and Thomson}]{barley2022evolutionary}
  Barley, A.~J., A.~Nieto-Montes~de Oca, N.~L. Manr{\'\i}quez-Mor{\'a}n, and
  R.~C. Thomson. 2022. The evolutionary network of whiptail lizards reveals
  predictable outcomes of hybridization. Science 377:773--777.
  \input{barley2022evolutionary}

  \bibitem[{Barnhill et~al.(2025)Barnhill, Garrote-L{\'o}pez, Gross, Hill, Kagy,
    Rhodes, and Zhang}]{barnhill2025methodological}
  Barnhill, D., M.~Garrote-L{\'o}pez, E.~Gross, M.~Hill, B.~Kagy, J.~A. Rhodes,
  and J.~Z. Zhang. 2025. Methodological considerations for semialgebraic
  hypothesis testing with incomplete {U}-statistics. arXiv:2507.13531 .
  \input{barnhill2025methodological}

  \bibitem[{Barton et~al.(2026)Barton, Gross, Long, and
    Rusinko}]{barton2022statistical}
  Barton, T., E.~Gross, C.~Long, and J.~Rusinko. 2026. Statistical learning with
  phylogenetic network invariants. Bull. Soc. Syst. Biol. 4 no. 1 (2026) 4.
  \input{barton2022statistical}

  \bibitem[{Casanellas et~al.(2021)Casanellas, Fernández-Sánchez, and
    Garrote-López}]{Casanellas2021SAQ}
  Casanellas, M., J.~Fernández-Sánchez, and M.~Garrote-López. 2021. {SAQ:
    Semi-algebraic quartet reconstruction}. IEEE/ACM Trans. Comput. Biol.
  Bioinform. 18:2855--2861.
  \input{Casanellas2021SAQ}

  \bibitem[{Cui et~al.(2013)Cui, Schumer, Kruesi, Walter, Andolfatto, and
    Rosenthal}]{Cui2013-ps}
  Cui, R., M.~Schumer, K.~Kruesi, R.~Walter, P.~Andolfatto, and G.~G. Rosenthal.
  2013. Phylogenomics reveals extensive reticulate evolution in {X}iphophorus
  fishes. Evolution 67:2166--2179.
  \input{Cui2013-ps}

  \bibitem[{Englander et~al.(2025)Englander, Frohn, Gross, Holtgrefe, van Iersel,
    Jones, and Sullivant}]{englander2025identifiability}
  Englander, A.~K., M.~Frohn, E.~Gross, N.~Holtgrefe, L.~van Iersel, M.~Jones,
  and S.~Sullivant. 2025. {Identifiability of phylogenetic level-2 networks
    under the Jukes-Cantor model}. bioRxiv Pages~2025--04.
  \input{englander2025identifiability}

  \bibitem[{Evans and Speed(1993)}]{evans1993invariants}
  Evans, S.~N. and T.~P. Speed. 1993. Invariants of some probability models used
  in phylogenetic inference. Ann. Stat. Pages~355--377.
  \input{evans1993invariants}

  \bibitem[{Flouri et~al.(2019)Flouri, Jiao, Rannala, and Yang}]{Flouri2019}
  Flouri, T., X.~Jiao, B.~Rannala, and Z.~Yang. 2019. A {Bayesian} implementation
  of the multispecies coalescent model with introgression for phylogenomic
  analysis. Mol. Biol. Evol. 37.
  \input{Flouri2019}

  \bibitem[{Gambette et~al.(2017)Gambette, Huber, and Kelk}]{Gambette2017-kf}
  Gambette, P., K.~T. Huber, and S.~Kelk. 2017. On the challenge of
  reconstructing level-1 phylogenetic networks from triplets and clusters. J.
  Math. Biol. 74:1729--1751.
  \input{Gambette2017-kf}

  \bibitem[{Gross and Long(2018)}]{gross2018distinguishing}
  Gross, E. and C.~Long. 2018. Distinguishing phylogenetic networks. SIAM J.
  Appl. Algebra Geom. 2:72--93. \input{gross2018distinguishing}

  \bibitem[{Gross et~al.(2021)Gross, van Iersel, Janssen, Jones, Long, and
    Murakami}]{gross2021distinguishing}
  Gross, E., L.~van Iersel, R.~Janssen, M.~Jones, C.~Long, and Y.~Murakami. 2021.
  {Distinguishing level-1 phylogenetic networks on the basis of data generated
    by Markov processes}. J. Math. Biol. 83:1--24.
  \input{gross2021distinguishing}

  \bibitem[{Hibbins and Hahn(2022)}]{hibbins2022phylogenomic}
  Hibbins, M.~S. and M.~W. Hahn. 2022. Phylogenomic approaches to detecting and
  characterizing introgression. Genetics 220:iyab173.
  \input{hibbins2022phylogenomic}

  \bibitem[{Holtgrefe et~al.(2025)Holtgrefe, Allman, Ba{\~n}os, van Iersel,
    Moulton, Rhodes, and Wicke}]{holtgrefe2025distinguishing}
  Holtgrefe, N., E.~S. Allman, H.~Ba{\~n}os, L.~van Iersel, V.~Moulton, J.~A.
  Rhodes, and K.~Wicke. 2025. Distinguishing phylogenetic level-2 networks with
  quartets and inter-taxon quartet distances. arXiv:2507.17308.
  \input{holtgrefe2025distinguishing}

  \bibitem[{Jukes et~al.(1969)Jukes, Cantor et~al.}]{jukes1969evolution}
  Jukes, T.~H., C.~R. Cantor, et~al. 1969. Evolution of protein molecules.
  Mammalian protein metabolism 3:132.
  \input{jukes1969evolution}

  \bibitem[{Keuler et~al.(2020)Keuler, Garretson, Saunders, Erickson, St.~Andre,
    Grewe, Smith, Lumbsch, Huang, St.~Clair, and Leavitt}]{keuler2020genome}
  Keuler, R., A.~Garretson, T.~Saunders, R.~J. Erickson, N.~St.~Andre, F.~Grewe,
  H.~Smith, H.~T. Lumbsch, J.-P. Huang, L.~L. St.~Clair, and S.~D. Leavitt.
  2020. Genome-scale data reveal the role of hybridization in lichen-forming
  fungi. Sci. Rep. 10:1497.
  \input{keuler2020genome}

  \bibitem[{Krantz and Parks(2008)}]{krantz2008geometric}
  Krantz, S. and H.~Parks. 2008. Geometric integration theory. Springer.
  \input{krantz2008geometric}

  \bibitem[{Langdon et~al.(2024)Langdon, Groh, Aguillon, Powell, Gunn, Payne,
    Baczenas, Donny, Dodge, Du et~al.}]{langdon2024swordtail}
  Langdon, Q.~K., J.~S. Groh, S.~M. Aguillon, D.~L. Powell, T.~Gunn, C.~Payne,
  J.~J. Baczenas, A.~Donny, T.~O. Dodge, K.~Du, et~al. 2024. Swordtail fish
  hybrids reveal that genome evolution is surprisingly predictable after
  initial hybridization. PLoS Biology 22:e3002742.
  \input{langdon2024swordtail}

  \bibitem[{Leavitt et~al.(2013)Leavitt, Fern{\'a}ndez-Mendoza, P{\'e}rez-Ortega,
    Sohrabi, Divakar, Vondr{\'a}k, Thorsten~Lumbsch, and
    Clair}]{leavitt2013local}
  Leavitt, S.~D., F.~Fern{\'a}ndez-Mendoza, S.~P{\'e}rez-Ortega, M.~Sohrabi,
  P.~K. Divakar, J.~Vondr{\'a}k, H.~Thorsten~Lumbsch, and L.~L.~S. Clair. 2013.
  Local representation of global diversity in a cosmopolitan lichen-forming
  fungal species complex ({Rhizoplaca}, {Ascomycota}). J. Biogeogr.
  40:1792--1806.
  \input{leavitt2013local}

  \bibitem[{Mallet(2005)}]{mallet2005hybridization}
  Mallet, J. 2005. Hybridization as an invasion of the genome. Trends Ecol. Evol. 20:229--237.
  \input{mallet2005hybridization}

  \bibitem[{Mallet et~al.(2007)Mallet, Beltr{\'a}n, Neukirchen, and
    Linares}]{mallet2007natural}
  Mallet, J., M.~Beltr{\'a}n, W.~Neukirchen, and M.~Linares. 2007. Natural
  hybridization in heliconiine butterflies: the species boundary as a
  continuum. BMC Evol. Biol. 7:28.
  \input{mallet2007natural}

  \bibitem[{Mallet et~al.(2016)Mallet, Besansky, and
    Hahn}]{mallet2016reticulated}
  Mallet, J., N.~Besansky, and M.~W. Hahn. 2016. How reticulated are species?
  BioEssays 38:140--149.
  \input{mallet2016reticulated}

  \bibitem[{Martin et~al.(2025)Martin, Holtgrefe, Moulton, and
    Leggett}]{martin2023algebraic}
  Martin, S., N.~Holtgrefe, V.~Moulton, and R.~M. Leggett. 2025. Algebraic
  invariants for inferring 4-leaf semi-directed phylogenetic networks.
  Syst. Biol.  Page~syaf071.
  \input{martin2023algebraic}

  \bibitem[{Moran et~al.(2021)Moran, Payne, Langdon, Powell, Brandvain, and
    Schumer}]{moran2021genomic}
  Moran, B.~M., C.~Payne, Q.~Langdon, D.~L. Powell, Y.~Brandvain, and M.~Schumer.
  2021. The genomic consequences of hybridization. Elife 10:e69016.
  \input{moran2021genomic}

  \bibitem[{Pardi and Scornavacca(2015)}]{Pardi2015-ix}
  Pardi, F. and C.~Scornavacca. 2015. Reconstructible phylogenetic networks: do
  not distinguish the indistinguishable. PLoS Comput. Biol. 11:e1004135.
  \input{Pardi2015-ix}

  \bibitem[{Pe{\~n}alba et~al.(2024)Pe{\~n}alba, Runemark, Meier, Singh, Wogan,
    S{\'a}nchez-Guill{\'e}n, Mallet, Rometsch, Menon, Seehausen
    et~al.}]{penalba2024role}
  Pe{\~n}alba, J.~V., A.~Runemark, J.~I. Meier, P.~Singh, G.~O. Wogan,
  R.~S{\'a}nchez-Guill{\'e}n, J.~Mallet, S.~J. Rometsch, M.~Menon,
  O.~Seehausen, et~al. 2024. The role of hybridization in species formation and
  persistence. Cold Spring Harb. Perspect. Biol. 16:a041445.
  \input{penalba2024role}

  \bibitem[{Rhodes et~al.(2025)Rhodes, Ba{\~n}os, Xu, and
    An{\'e}}]{rhodes2025identifying}
  Rhodes, J.~A., H.~Ba{\~n}os, J.~Xu, and C.~An{\'e}. 2025. Identifying circular
  orders for blobs in phylogenetic networks. Adv. Appl. Math.
  163:102804.
  \input{rhodes2025identifying}

  \bibitem[{Rose et~al.(2025)Rose, Li, Sporck-Koehler, Stacy, Wood, Lemmon,
    Lemmon, Ané, Sytsma, and Givnish}]{lobeloids}
  Rose, J.~P., B.~Li, M.~J. Sporck-Koehler, E.~A. Stacy, K.~R. Wood, E.~M.
  Lemmon, A.~R. Lemmon, C.~Ané, K.~J. Sytsma, and T.~J. Givnish. 2025.
  Phylogenomics of the tetraploid {Hawaiian} lobeliads: Implications for their
  origin, dispersal history, and adaptive radiation. Proc. Natl. Acad. Sci. U.S.A. 122:e2421004122.
  \input{lobeloids}

  \bibitem[{Semple et~al.(2003)Semple, Steel et~al.}]{semple2003phylogenetics}
  Semple, C., M.~Steel, et~al. 2003. Phylogenetics vol.~24. Oxford University
  Press.
  \input{semple2003phylogenetics}

  \bibitem[{Solís-Lemus and Ané(2016)}]{solislemus2016snaq}
  Solís-Lemus, C. and C.~Ané. 2016. Inferring phylogenetic networks with
  maximum pseudolikelihood under incomplete lineage sorting. PLoS Genet.
  12:e1005896.
  \input{solislemus2016snaq}

  \bibitem[{Sturma et~al.(2024)Sturma, Drton, and Leung}]{SturmaNilsDrton}
  Sturma, N., M.~Drton, and D.~Leung. 2024. Testing many constraints in possibly
  irregular models using incomplete {U}-statistics. J. R. Stat. Soc. Ser. B
  Stat Methodol. 86:987--1012.
  \input{SturmaNilsDrton}

  \bibitem[{Sturmfels and Sullivant(2005)}]{SS05}
  Sturmfels, B. and S.~Sullivant. 2005. Toric ideals of phylogenetic invariants.
  J. Comput. Biol. 12:457--481.
  \input{SS05}

  \bibitem[{Sullivant(2023)}]{sullivant2023algebraic}
  Sullivant, S. 2023. {Algebraic Statistics} vol. 194. AMS.
  \input{sullivant2023algebraic}

  \bibitem[{{The algebraic-phylogenetics collaboration}(2026)}]{smalltrees}
  {The algebraic-phylogenetics collaboration}. 2026. A database of small trees
  and networks in algebraic phylogenetics. Version 0.3. Available at
  \url{http://www.algebraicphylogenetics.org}.
  \input{smalltrees}

  \bibitem[{van~der Heijden et~al.(2025)van~der Heijden, Näsvall, Seixas, Nobre,
    Maia, Salazar-Carrión, Walker, Szczerbowski, Schulz, Warren, Córdova,
    Sánchez-Carvajal, Chandi, Arias-Cruz, Rueda-M, Salazar, Dasmahapatra,
    Montgomery, McClure, Absolon, Mathers, Santos, McCarthy, Wood, Lamas,
    Bacquet, Freitas, Willmott, Jiggins, Elias, and Meier}]{vanderHeijden2025}
  van~der Heijden, E. S.~M., K.~Näsvall, F.~A. Seixas, C.~E.~B. Nobre, A.~C.~D.
  Maia, P.~Salazar-Carrión, J.~M. Walker, D.~Szczerbowski, S.~Schulz, I.~A.
  Warren, K.~G.~G. Córdova, M.~J. Sánchez-Carvajal, F.~Chandi, A.~P.
  Arias-Cruz, N.~Rueda-M, C.~Salazar, K.~K. Dasmahapatra, S.~H. Montgomery,
  M.~McClure, D.~E. Absolon, T.~C. Mathers, C.~A. Santos, S.~McCarthy, J.~M.~D.
  Wood, G.~Lamas, C.~Bacquet, A.~V.~L. Freitas, K.~R. Willmott, C.~D. Jiggins,
  M.~Elias, and J.~I. Meier. 2025. Genomics of neotropical biodiversity
  indicators: Two butterfly radiations with rampant chromosomal rearrangements
  and hybridization. Proc. Natl. Acad. Sci. U.S.A.
  122:e2410939122.
  \input{vanderHeijden2025}

  \bibitem[{Veller et~al.(2023)Veller, Edelman, Muralidhar, and
    Nowak}]{veller2023recombination}
  Veller, C., N.~B. Edelman, P.~Muralidhar, and M.~A. Nowak. 2023. Recombination
  and selection against introgressed {DNA}. Evolution 77:1131--1144.
  \input{veller2023recombination}

  \bibitem[{Wen et~al.(2018)Wen, Yu, Zhu, and Nakhleh}]{phylonet-paper}
  Wen, D., Y.~Yu, J.~Zhu, and L.~Nakhleh. 2018. Inferring phylogenetic networks
  using {PhyloNet}. Syst. Biol. 67:735--740.
  \input{phylonet-paper}

  \bibitem[{Xu and An{\'e}(2023)}]{xu2023identifiability}
  Xu, J. and C.~An{\'e}. 2023. Identifiability of local and global features of
  phylogenetic networks from average distances. J. Math. Biol.
  86:12.
  \input{xu2023identifiability}

  \bibitem[{Zhang et~al.(2018)Zhang, Ogilvie, Drummond, and
    Stadler}]{Zhang2018-zk}
  Zhang, C., H.~A. Ogilvie, A.~J. Drummond, and T.~Stadler. 2018. Bayesian
  inference of species networks from multilocus sequence data. Mol. Biol. Evol.
  35:504--517.
  \input{Zhang2018-zk}

\end{thebibliography}

\vfill\eject



\vfill\eject

\appendix

\section{Proof of  \cref{thm:m1-semialgebraic-description-in-qs}}\label{app:proof-of-main-thm}

We begin by introducing a reparameterization which will simplify our analysis.
We substitute $a_4^*:=\delta a_1a_4$ and $a_6^*:=(1-\delta)a_1a_6$ where the
new parameters $a_4^*, a_6^*\in (0,1)$ and must satisfy $a_4^* + a_6^* < 1$,
but are otherwise free. In these new parameters (omitting the stars to
simplify notation), we now have:
\begin{equation}
  \label{lem:allParams}
  \begin{aligned}
    \mathcal{M}_1:\quad 
    &\begin{aligned}
       q_{\mathtt{ACC}} &= a_2 a_3 a_5 
       &\quad q_{\mathtt{CAC}} &= a_3 a_4 a_5 + a_3 a_6 \\
       q_{\mathtt{CCA}} &= a_2 a_4 + a_2 a_5 a_6 
       &\quad q_{\mathtt{CGT}} &= a_2 a_3 a_5 (a_4 + a_6).
     \end{aligned}
  \end{aligned}
\end{equation}

The corresponding parameter space for network $\mathcal{N}_{1}$ is
\begin{gather*} 
  \widetilde{\Theta}_{\mathcal N_1} = \left\{(a_{2},a_{3},a_{4},a_{5},a_{6}):
    0<a_{2},a_{3},a_{4},a_{5},a_{6}< 1 
    \text{ and }a_{4}+a_{6}<1 \right\} \subseteq \mathbb R^5.
\end{gather*}

We will use $\widetilde{\psi}_{\mathcal N_1}$
to denote the new parameterization map shown in \cref{lem:allParams}. With this new notation 
\begin{equation*}
  \mathcal{M}_{1} = \left\{\widetilde{\psi}_{\mathcal N _1}(\theta) : \theta\in  \widetilde{\Theta}_{\mathcal N_1}\right\}.
\end{equation*}
Similar simplifications give reparameterizations for the other two networks $\mathcal{N}_2$ and $\mathcal{N}_3$ shown in \Cref{fig:models-N1-N2-N3-with-labels}, which are as follows:
\begin{equation*}
  \begin{aligned}
    \mathcal{M}_2:\quad 
    &\begin{aligned}
       q_{\mathtt{ACC}} &= b_3 b_4 b_5 + b_3 b_6 
       &\quad q_{\mathtt{CAC}} &= b_1 b_3 b_5 \\
       q_{\mathtt{CCA}} &= b_1 b_4 + b_1 b_5 b_6 
       &\quad q_{\mathtt{CGT}} &= b_1 b_3 b_5 (b_4 + b_6)
     \end{aligned}
    \\[1em]
    \mathcal{M}_3:\quad 
    &\begin{aligned}
       q_{\mathtt{ACC}} &= c_2 c_4 c_5 + c_2 c_6 
       &\quad q_{\mathtt{CAC}} &= c_1 c_4 + c_1 c_5 c_6 \\
       q_{\mathtt{CCA}} &= c_1 c_2 c_5 
       &\quad q_{\mathtt{CGT}} &= c_1 c_2 c_5 (c_4 + c_6).
     \end{aligned}
  \end{aligned}
\end{equation*}

We now prove \cref{thm:m1-semialgebraic-description-in-qs}, which provides a complete semialgebraic description of $\mathcal{M}_1$.  We present the proof of this theorem in the following two subsections. In particular, the equivalence statement in the theorem will follow immediately from \cref{lem:main-theorem--proof-of-forward-direction,lem:main-theorem--proof-of-reverse-direction}, and the additional inequalities \eqref{ineq:CAC-CGT}, \eqref{ineq:CCA-CGT}, and \eqref{ineq:CGT-ACC.CCA} will follow from \cref{lem:additional-inequalities}.

\subsection{Proof of the ``Only If'' Direction of \cref{thm:m1-semialgebraic-description-in-qs}}

Our first lemma establishes the forward direction of \Cref{thm:m1-semialgebraic-description-in-qs}. 

\begin{lemma}\label{lem:main-theorem--proof-of-forward-direction}
  If $(q_{\mathtt{ACC}},q_{\mathtt{CAC}},q_{\mathtt{CCA}},q_{\mathtt{CGT}})\in \mathcal{M}_{1}$ then 
  inequalities \eqref{ineq:0<q<1}, \eqref{ineq:ACC-CGT}, \eqref{ineq:CGT-ACC.CAC}, \eqref{ineq:hybrid}, and \eqref{ineq:a23} hold.
\end{lemma}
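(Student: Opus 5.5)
The plan is to substitute the reparameterization \eqref{lem:allParams} into each inequality and show that every resulting expression is a product or sum of manifestly positive factors on $\widetilde{\Theta}_{\mathcal N_1}$, where $0<a_2,a_3,a_4,a_5,a_6<1$ and $a_4+a_6<1$. Throughout I will write $s=a_4+a_6\in(0,1)$.

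For \eqref{ineq:0<q<1}, positivity of all four coordinates is immediate because they are sums of products of positive parameters. For the upper bounds, $q_{\mathtt{ACC}}=a_2a_3a_5<1$ trivially. Next, $q_{\mathtt{CAC}}=a_3(a_4a_5+a_6)<a_3(a_4+a_6)<1$, and similarly $q_{\mathtt{CCA}}=a_2(a_4+a_5a_6)<1$. Finally, $q_{\mathtt{CGT}}=a_2a_3a_5s<1$.

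The linear and quadratic inequalities are direct factorizations:
\begin{align*}
q_{\mathtt{ACC}}-q_{\mathtt{CGT}}&=a_2a_3a_5(1-s)>0,\\
q_{\mathtt{CGT}}-q_{\mathtt{ACC}}q_{\mathtt{CAC}}&=a_2a_3a_5\bigl(a_4+a_6-a_3a_4a_5-a_3a_6\bigr)\\
&=a_2a_3a_5\bigl(a_4(1-a_3a_5)+a_6(1-a_3)\bigr)>0.
\end{align*}
For \eqref{ineq:hybrid}, we have $q_{\mathtt{ACC}}q_{\mathtt{CAC}}q_{\mathtt{CCA}}=a_2^2a_3^2a_5(a_4a_5+a_6)(a_4+a_5a_6)$ and $q_{\mathtt{CGT}}^2=a_2^2a_3^2a_5^2(a_4+a_6)^2$. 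After dividing by $a_2^2a_3^2a_5$, the claim reduces to
\[
(a_4a_5+a_6)(a_4+a_5a_6)-a_5(a_4+a_6)^2 .
\]
Expanding, this equals $a_4^2a_5+a_4a_6a_5^2+a_4a_6+a_5a_6^2-a_5a_4^2-2a_5a_4a_6-a_5a_6^2$, which simplifies to $a_4a_6(1-a_5)^2>0$.

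The last inequality \eqref{ineq:a23} is the one I expect to need the most care, since it mixes degrees. Substituting gives
\[
q_{\mathtt{CGT}}+q_{\mathtt{ACC}}(q_{\mathtt{CGT}}-q_{\mathtt{CAC}}-q_{\mathtt{CCA}})
= a_2a_3a_5\bigl[s+a_2a_3a_5s-a_3(a_4a_5+a_6)-a_2(a_4+a_5a_6)\bigr].
\]
I would group the bracket by $a_4$ and $a_6$:
\[
a_4\bigl(1+a_2a_3a_5-a_3a_5-a_2\bigr)+a_6\bigl(1+a_2a_3a_5-a_3-a_2a_5\bigr)
= a_4(1-a_2)(1-a_3a_5)+a_6(1-a_3)(1-a_2a_5),
\]
and both terms are positive. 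The key step is spotting this factorization, which can be checked by expanding $(1-a_2)(1-a_3a_5)$ and $(1-a_3)(1-a_2a_5)$. If it did not factor cleanly, the fallback would be to note that the bracket is affine in each variable and check its sign at the vertices of the box, but here the factorization settles it. This completes the forward direction.
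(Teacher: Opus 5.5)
Your proposal is correct and follows the paper's proof essentially line by line. Like the paper, you substitute the reparameterization \eqref{lem:allParams} and write each inequality as a manifestly positive factorization, arriving at the same identities $a_2^2a_3^2a_4a_5a_6(1-a_5)^2$ for \eqref{ineq:hybrid} and $a_2a_3a_5\bigl(a_4(1-a_2)(1-a_3a_5)+a_6(1-a_3)(1-a_2a_5)\bigr)$ for \eqref{ineq:a23}, and all of your expansions check out.
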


\begin{proof}
  By definition of $\mathcal{M}_{1}$, there exists $(a_{2},\ldots,a_{6})\in
  (0,1)^5$  with $a_{4}+a_{6}<1$ such that
  \begin{equation*}
    \widetilde{\psi}_{\mathcal N_1}(a_{2},a_{3},a_{4},a_{5},a_{6}) = (q_{\mathtt{ACC}},q_{\mathtt{CAC}},q_{\mathtt{CCA}},q_{\mathtt{CGT}}).
  \end{equation*}
  Therefore, $q_{\mathtt{ACC}},q_{\mathtt{CAC}},q_{\mathtt{CCA}},$ and $q_{\mathtt{CGT}}$ are given by the formulas in \cref{lem:allParams}, which we will use to prove inequalities \eqref{ineq:0<q<1}, \eqref{ineq:ACC-CGT}, \eqref{ineq:CGT-ACC.CAC}, \eqref{ineq:hybrid}, and \eqref{ineq:a23}.
  
  We begin by proving inequality \eqref{ineq:0<q<1}. By the formulas in \cref{lem:allParams}, it is easily seen that
  $q_{\mathtt{ACC}},q_{\mathtt{CAC}},q_{\mathtt{CCA}},q_{\mathtt{CGT}}>0$ since $a_{2},a_{3},a_{4},a_{5},a_{6}>0$.
  It is also clear that $q_{\mathtt{ACC}},q_{\mathtt{CGT}}<1$ since
  $a_{2},a_{3},a_{4},a_{5},a_{6}<1$ and $a_{4}+a_{6}<1$. It remains to show that
  $q_{\mathtt{CAC}}<1$ and $q_{\mathtt{CCA}}<1$. Since the two are similar, we show only the case of
  $q_{\mathtt{CAC}}$, which proceeds as follows:
  \begin{align*}
    q_{\mathtt{CAC}} 
    &= a_{3}a_{4}a_{5}+a_{3}a_{6} \\
    &= a_{3}(a_{4}a_{5}+a_{6}) \\
    &< a_{3}(a_{4}+a_{6}) &&\text{since $a_{5}<1$}\\
    &< 1,
  \end{align*}
  where the last inequality is due to $a_{3}<1$ and $a_{4}+a_{6}<1$.
  
  Next we will prove inequality \eqref{ineq:ACC-CGT}. By definition of $\widetilde{\psi}_{\mathcal N_1}$,
  \begin{align*}
    q_{\mathtt{ACC}} - q_{\mathtt{CGT}} 
    &= a_2a_3a_5 - (a_2a_3a_4a_5+a_2a_3a_5a_6)  \\
    &= a_2a_3a_5(1-(a_4+a_6))\\
    &> 0,
  \end{align*}
  where the last inequality follows due to $a_{2},a_{3},a_{5}>0$ and
  $a_{4}+a_{6}<1$.

  To prove inequality \eqref{ineq:CGT-ACC.CAC}, first observe that the formulas in
  \cref{lem:allParams} imply
  \begin{align*}
    q_{\mathtt{CGT}} - q_{\mathtt{ACC}}q_{\mathtt{CAC}} 
    &= a_2a_3a_5(a_4+a_6) - a_2a_3^2a_5(a_4a_5+a_6) \\
    &= a_2a_3a_5(a_4(1-a_3a_5) + a_6(1-a_3)).
  \end{align*}
  Again, this must be greater than $0$ since $a_{2},a_{3},a_{4},a_{5},a_{6}>0$ and
  $a_{3},a_{5}<1$.

  Inequality \eqref{ineq:hybrid} was first shown in
  \cite{englander2025identifiability}; the proof presented here is essentially the same as theirs: 
  substituting in the formulas from \cref{lem:allParams} and rearranging terms, one finds that
  \begin{align}\label{eq:hybrid-inequality-in-fourier-parameters}
    q_{\mathtt{ACC}}q_{\mathtt{CAC}}q_{\mathtt{CCA}}-q_{\mathtt{CGT}}^{2}
    &= a_{2}^{2}a_{3}^{2}a_{4}a_{5}a_{6}(1-a_{5})^{2},
  \end{align}
  which is positive since $a_{4},a_{5},a_{6}>0$ and $a_{5}<1$.

  Finally, to prove inequality \eqref{ineq:a23}, observe that by \cref{lem:allParams}, 
  the left hand side of \eqref{ineq:a23} can be written as
  \begin{align*}
    a_2a_3a_5(a_4(1-a_2)(1-a_3a_5)+a_6(1-a_3)(1-a_2a_5)),
  \end{align*}
  which is positive since $a_{2},a_{3},a_{4},a_{5},a_{6}>0$ and $a_{2},a_{3},a_{5}<1$.
\end{proof}

\begin{lemma}[Additional inequalities]
  \label{lem:additional-inequalities}
  If $(q_{\mathtt{ACC}},q_{\mathtt{CAC}},q_{\mathtt{CCA}},q_{\mathtt{CGT}})\in \mathcal{M}_{1}$, then
  \begin{align}
    q_{\mathtt{CAC}} -   q_{\mathtt{CGT}} & >0  \tag{\ref{ineq:CAC-CGT}}\\ 
    q_{\mathtt{CCA}} -   q_{\mathtt{CGT}} & >0  \tag{\ref{ineq:CCA-CGT}}\\ 
    q_{\mathtt{CGT}} - q_{\mathtt{ACC}}q_{\mathtt{CCA}} &> 0  \tag{\ref{ineq:CGT-ACC.CCA}}
  \end{align}
\end{lemma}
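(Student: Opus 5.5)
Since $\mathbf q\in\mathcal M_1$, there are parameters $(a_2,\dots,a_6)\in\widetilde{\Theta}_{\mathcal N_1}$, so $0<a_i<1$ and $a_4+a_6<1$, with $\widetilde{\psi}_{\mathcal N_1}(a_2,\dots,a_6)=\mathbf q$. I will substitute the reparameterized formulas \eqref{lem:allParams} into each left-hand side. The goal is to factor each expression into a product of visibly positive terms, exactly as in the proof of \cref{lem:main-theorem--proof-of-forward-direction}.

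For \eqref{ineq:CAC-CGT}, I would write
\[
q_{\mathtt{CAC}}-q_{\mathtt{CGT}} = a_3\bigl(a_4a_5+a_6\bigr) - a_2a_3a_5(a_4+a_6) = a_3\bigl(a_4a_5(1-a_2) + a_6(1-a_2a_5)\bigr).
\]
This is positive because every parameter lies in $(0,1)$.

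Similarly, for \eqref{ineq:CCA-CGT}:
\[
q_{\mathtt{CCA}}-q_{\mathtt{CGT}} = a_2(a_4+a_5a_6) - a_2a_3a_5(a_4+a_6) = a_2\bigl(a_4(1-a_3a_5)+a_5a_6(1-a_3)\bigr) > 0 .
\]

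For \eqref{ineq:CGT-ACC.CCA}, I would compute
\[
q_{\mathtt{CGT}}-q_{\mathtt{ACC}}q_{\mathtt{CCA}} = a_2a_3a_5(a_4+a_6) - a_2^2a_3a_5(a_4+a_5a_6) = a_2a_3a_5\bigl(a_4(1-a_2)+a_6(1-a_2a_5)\bigr),
\]
which is again a product of positive factors.

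None of these steps is a real obstacle; they are direct factorizations. The only care needed is grouping the $a_4$ terms and the $a_6$ terms separately, so that each bracket is a sum of a positive multiple of $a_4$ and a positive multiple of $a_6$. Notably, the constraint $a_4+a_6<1$ is not even needed here; only $0<a_i<1$ is used.
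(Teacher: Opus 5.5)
Your proposal is correct and is essentially the paper's own proof: you substitute the reparameterized formulas \eqref{lem:allParams} and obtain exactly the same three factorizations, each visibly positive for $0<a_i<1$. Your side remark that $a_4+a_6<1$ is not needed is also accurate; only $a_4,a_6>0$ and $0<a_2,a_3,a_5<1$ enter.
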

\begin{proof}
  Using the formulas from \cref{lem:allParams}, it is easily verified that the
  three inequalities may be written as
  \begin{align*}
    q_{\mathtt{CAC}}-q_{\mathtt{CGT}} &= a_{3}\left[ a_{4}a_{5}(1-a_{2})+a_{6}(1-a_{2}a_{5}) \right] \\
    q_{\mathtt{CCA}}-q_{\mathtt{CGT}} &= a_{2}\left[ a_{4}(1-a_{3}a_{5})+a_{5}a_{6}(1-a_{3}) \right] \\
    q_{\mathtt{CGT}}-q_{\mathtt{ACC}}q_{\mathtt{CCA}} &= a_{2}a_{3}a_{5} \left[ a_{4}(1-a_{2})+a_{6}(1-a_{2}a_{5}) \right].
  \end{align*}
  In all three cases, the right hand sides are seen to be positive by
  inspection, since $0<a_{2},a_{3},a_{4},a_{5}, a_{6}<1$.
\end{proof}

\subsection{Proof of the ``If'' Direction of \cref{thm:m1-semialgebraic-description-in-qs}}
In this section we prove the backwards direction of 
\cref{thm:m1-semialgebraic-description-in-qs}. Namely, we will show that if
$q_{\mathtt{ACC}},q_{\mathtt{CAC}},q_{\mathtt{CCA}}$, and $q_{\mathtt{CGT}}$ satisfy inequalities \eqref{ineq:0<q<1}, \eqref{ineq:ACC-CGT}, \eqref{ineq:CGT-ACC.CAC}, \eqref{ineq:hybrid}, and \eqref{ineq:a23}, 
then there exists $(a_2,a_3,a_4,a_5,a_6)\in \widetilde{\Theta}_{\mathcal N_1}$ such that
$\widetilde{\psi}_{\mathcal N_1}(a_2,a_3,a_4,a_5,a_6) = (q_{\mathtt{ACC}},q_{\mathtt{CAC}},q_{\mathtt{CCA}},q_{\mathtt{CGT}})$.

We begin by showing that for any $\textbf{q} = (q_{\mathtt{ACC}},q_{\mathtt{CAC}},q_{\mathtt{CCA}},q_{\mathtt{CGT}})\in \mathbb{R}^{4}$ satisfying inequality \eqref{ineq:hybrid}, the fiber $\widetilde{\psi}_{\mathcal{N}_{1}}^{-1}(\textbf{q})$ contains a one-dimensional family of points in $\mathbb{R}^5$ parameterized by a single free parameter.

\begin{lemma}[Existence of a one-parameter real family of solutions]
  \label{lem:existence-of-a-family-of-solutions}
  Assume that the coordinates $(q_{\mathtt{ACC}},q_{\mathtt{CAC}},q_{\mathtt{CCA}},q_{\mathtt{CGT}})\in \mathbb{R}^4$ satisfy inequality
  \eqref{ineq:hybrid}. Then there exists a 
  family of solutions in $\mathbb{R}^5$ to the polynomial system 
  \begin{equation}\label{eq:form-of-parameterization}
    \left\{ \begin{array}{l@{}l}
        q_{\mathtt{ACC}}=a_{2}a_{3}a_{5} \\
        q_{\mathtt{CAC}}=a_{3}a_{4}a_{5}+a_{3}a_{6} \\
        q_{\mathtt{CCA}}=a_{2}a_{4}+a_{2}a_{5}a_{6} \\
        q_{\mathtt{CGT}}=a_{2}a_{3}a_{4}a_{5}+a_{2}a_{3}a_{5}a_{6}
      \end{array}\right.
  \end{equation}
  taking the form  $(a_{2},a_{3},a_{4},a_{5},a_{6}) \in \mathbb{R}^{5}$, where
  $a_{2}\in \mathbb{R}\backslash \left\{0\right\}$ is a free variable and 
  
    \begin{align}
    a_{3} &= \frac{q_{\mathtt{ACC}}q_{\mathtt{CAC}}a_{2}-q_{\mathtt{ACC}}q_{\mathtt{CGT}}}{q_{\mathtt{CGT}}a_{2}-q_{\mathtt{ACC}}q_{\mathtt{CCA}}} \label{eq:colbys-a3}\\
    a_{4} &= \frac{a_{2}u}{q_{\mathtt{ACC}}q_{\mathtt{CAC}}a_{2}^{2}-2q_{\mathtt{ACC}}q_{\mathtt{CGT}}a_{2}+q_{\mathtt{ACC}}^{2}q_{\mathtt{CCA}}}\label{eq:colbys-a4}\\
    a_{5} &= \frac{q_{\mathtt{CGT}}a_{2}-q_{\mathtt{ACC}}q_{\mathtt{CCA}}}{q_{\mathtt{CAC}}a_{2}^{2}-q_{\mathtt{CGT}}a_{2}} \label{eq:colbys-a5}\\
      a_{6} &= \frac{(q_{\mathtt{CAC}}a_{2}-q_{\mathtt{CGT}})(q_{\mathtt{CGT}}a_{2}-q_{\mathtt{ACC}}q_{\mathtt{CCA}})}{q_{\mathtt{ACC}}q_{\mathtt{CAC}}a_{2}^{2}
              -2q_{\mathtt{ACC}}q_{\mathtt{CGT}}a_{2}+q_{\mathtt{ACC}}^{2}q_{\mathtt{CCA}}},\label{eq:colbys-a6}
  \end{align}
  where $u=q_{\mathtt{ACC}}q_{\mathtt{CAC}}q_{\mathtt{CCA}}-q_{\mathtt{CGT}}^{2}$.
\end{lemma}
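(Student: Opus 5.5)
Fix $\mathbf{q}$ satisfying \eqref{ineq:hybrid} and treat $a_2$ as a free nonzero real. The plan is to solve the system \eqref{eq:form-of-parameterization} by elimination, ending with the displayed formulas, and then to check directly that these formulas satisfy all four equations. The natural elimination order uses the structure of the parameterization. The first and fourth equations give $q_{\mathtt{CGT}} = q_{\mathtt{ACC}}(a_4+a_6)$, so $a_4+a_6 = q_{\mathtt{CGT}}/q_{\mathtt{ACC}}$. The first equation gives $a_3a_5 = q_{\mathtt{ACC}}/a_2$. With $s:=a_4+a_6$, the second equation becomes $q_{\mathtt{CAC}} = a_3(a_5 s - a_6(a_5-1))$ and the third becomes $q_{\mathtt{CCA}} = a_2(s - a_6(1-a_5))$. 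The third equation is linear in $a_6(1-a_5)$, so $a_6(1-a_5) = s - q_{\mathtt{CCA}}/a_2$. Substituting this into the second equation expresses $q_{\mathtt{CAC}}$ in terms of $a_3$, $a_5$, $a_2$. Since $a_3 = q_{\mathtt{ACC}}/(a_2a_5)$, this is one linear equation in $1/a_5$, which solves to give \eqref{eq:colbys-a5}. Then $a_3 = q_{\mathtt{ACC}}/(a_2a_5)$ gives \eqref{eq:colbys-a3}. Next, $a_6 = (s - q_{\mathtt{CCA}}/a_2)/(1-a_5)$ gives \eqref{eq:colbys-a6} after clearing denominators, and finally $a_4 = s - a_6$ gives \eqref{eq:colbys-a4}. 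In this last step the numerator should collapse to $a_2 u$.

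Rather than relying on the derivation, the proof itself will be a verification. Define $a_3,a_4,a_5,a_6$ by \eqref{eq:colbys-a3}--\eqref{eq:colbys-a6} and substitute into each equation of \eqref{eq:form-of-parameterization}. The checks are as follows.
\begin{itemize}
\item $a_2a_3a_5$: the factor $q_{\mathtt{CGT}}a_2-q_{\mathtt{ACC}}q_{\mathtt{CCA}}$ cancels and $q_{\mathtt{ACC}}(q_{\mathtt{CAC}}a_2-q_{\mathtt{CGT}})/(a_2(q_{\mathtt{CAC}}a_2-q_{\mathtt{CGT}}))$ reduces to $q_{\mathtt{ACC}}/a_2$, which confirms the first equation.
\item $a_4+a_6$: write $D = q_{\mathtt{ACC}}q_{\mathtt{CAC}}a_2^2-2q_{\mathtt{ACC}}q_{\mathtt{CGT}}a_2+q_{\mathtt{ACC}}^2q_{\mathtt{CCA}}$. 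Check that $a_2u + (q_{\mathtt{CAC}}a_2-q_{\mathtt{CGT}})(q_{\mathtt{CGT}}a_2-q_{\mathtt{ACC}}q_{\mathtt{CCA}}) = (q_{\mathtt{CGT}}/q_{\mathtt{ACC}})\,D$. This is a short polynomial identity, and together with the first equation it gives the fourth.
\item The second and third equations: reduce to rational identities in $a_2$ with common denominator $D$, to be verified by expanding (or with the paper's Macaulay2 supplement).
\end{itemize}

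The main obstacle is well-definedness: the denominators $q_{\mathtt{CGT}}a_2-q_{\mathtt{ACC}}q_{\mathtt{CCA}}$, $a_2(q_{\mathtt{CAC}}a_2-q_{\mathtt{CGT}})$ and $D$ must be nonzero. This is where \eqref{ineq:hybrid} enters. As a quadratic in $a_2$ (times $q_{\mathtt{ACC}}$), $D$ has discriminant proportional to $q_{\mathtt{ACC}}^2(q_{\mathtt{CGT}}^2 - q_{\mathtt{ACC}}q_{\mathtt{CAC}}q_{\mathtt{CCA}}) = -q_{\mathtt{ACC}}^2 u<0$. So \eqref{ineq:hybrid} forces $q_{\mathtt{ACC}}q_{\mathtt{CAC}}\neq 0$, and then $D$ has no real roots, so $D\neq0$ for every real $a_2$. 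The other two denominators vanish only at the isolated values $a_2 = q_{\mathtt{ACC}}q_{\mathtt{CCA}}/q_{\mathtt{CGT}}$ and $a_2 = q_{\mathtt{CGT}}/q_{\mathtt{CAC}}$ (when $q_{\mathtt{CGT}}\neq 0$). I will therefore state the family as parameterized by $a_2$ ranging over $\mathbb{R}\setminus\{0\}$ minus these finitely many points. This restriction is harmless for the later use, where $a_2$ is chosen in an interval. At these excluded values the formulas for $a_3$ and $a_5$ degenerate to $0$ or $\infty$; the paper's phrasing presumably treats this implicitly.
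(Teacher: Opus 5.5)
Your proof is correct. Its skeleton matches the paper's: verify that \eqref{eq:colbys-a3}--\eqref{eq:colbys-a6} solve \eqref{eq:form-of-parameterization} (the paper does this in Macaulay2), then check the denominators, using the same discriminant argument for $D$. The real difference is how you treat the denominators of $a_3$ and $a_5$, and there your version is the correct one.

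The paper asserts that $q_{\mathtt{CGT}}a_2-q_{\mathtt{ACC}}q_{\mathtt{CCA}}$ and $q_{\mathtt{CAC}}a_2^2-q_{\mathtt{CGT}}a_2$ are nonzero for \emph{every} $a_2\neq 0$. It argues by rewriting them as $a_2^2a_3a_5a_6(1-a_5)$ and $a_2^2a_3a_6(1-a_5)$. That rewriting presupposes that $(a_2,\dots,a_6)$ is already a solution, so it only shows that no solution has $a_2\in\{q_{\mathtt{ACC}}q_{\mathtt{CCA}}/q_{\mathtt{CGT}},\,q_{\mathtt{CGT}}/q_{\mathtt{CAC}}\}$. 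The formulas really do fail there. For example, $a_2=a_3=a_5=\tfrac12$ and $a_4=a_6=\tfrac14$ give $\mathbf q=(\tfrac18,\tfrac3{16},\tfrac3{16},\tfrac1{16})$ with $u=\tfrac1{2048}>0$. At $a_2=\tfrac38$ the denominator of \eqref{eq:colbys-a3} is $0$ while its numerator is $\tfrac1{1024}$.

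So excluding these finitely many values, as you do, is necessary. It costs nothing downstream: Claim 1 in the proof of \cref{lem:main-theorem--proof-of-reverse-direction} gives $q_{\mathtt{CGT}}/q_{\mathtt{CAC}}<q_{\mathtt{ACC}}q_{\mathtt{CCA}}/q_{\mathtt{CGT}}<\eta$, so the interval $(\eta,1)$ used there avoids both points.

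Your elimination also gains something over the computer check: it is reversible once $D\neq 0$, since $1-a_5=D/\bigl(q_{\mathtt{ACC}}a_2(q_{\mathtt{CAC}}a_2-q_{\mathtt{CGT}})\bigr)$. The third equation then follows directly from $a_4+a_6=s$ and $a_6(1-a_5)=s-q_{\mathtt{CCA}}/a_2$. The second equation is exactly the linear relation you solved for $1/a_5$. So you can verify both by hand instead of deferring to expansion or Macaulay2.

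A minor slip: in your first bullet, the expression that reduces to $q_{\mathtt{ACC}}/a_2$ is $a_3a_5$, not $a_2a_3a_5$.
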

\begin{proof}
  The computations to verify that equations
  \cref{eq:colbys-a3}-\eqref{eq:colbys-a6} yield symbolic
  solutions to \eqref{eq:form-of-parameterization} can be found in the file \texttt{one-param-family-computation.m2} in the Supplementary Materials.  It remains only to show that the denominators of
  \cref{eq:colbys-a3}-\eqref{eq:colbys-a6} are non-zero for all
  $a_{2}\in \mathbb{R}\backslash \left\{0\right\}$.

  We first note that since we assume inequality \eqref{ineq:hybrid}, any solution of the polynomial system \ref{eq:form-of-parameterization} must satisfy
  $a_{2},a_{3},a_{4},a_{5},a_{6} \neq 0$ and $a_{5}\neq 1$,  which
  follows immediately by inspection of
  \cref{eq:hybrid-inequality-in-fourier-parameters}.

  To see that the denominators of \cref{eq:colbys-a3,eq:colbys-a5} are nonzero,
  we write them in terms of the Fourier parameters as follows:
  \begin{equation*}
    q_{\mathtt{CGT}}a_{2}-q_{\mathtt{ACC}}q_{\mathtt{CCA}} = a_{2}^{2}a_{3}a_{5}a_{6}(1-a_{5}).
  \end{equation*}
  and
  \begin{equation*}
    q_{\mathtt{CAC}}a_{2}^{2}-q_{\mathtt{CGT}}a_{2} = a^{2}_{2}a_{3}a_{6}(1-a_{5}).
  \end{equation*}
  Both of these are nonzero since $a_{2},a_{3},a_{5},a_{6}\neq 0$ and $a_{5}\neq1$.

  Finally, observe that the denominator of \cref{eq:colbys-a4,eq:colbys-a6} is
  quadratic in the variable $a_{2}$. The discriminant is
  \begin{equation*}
    4q_{\mathtt{ACC}}^{2} \left( q_{\mathtt{CGT}}^{2}-q_{\mathtt{ACC}}q_{\mathtt{CAC}}q_{\mathtt{CCA}} \right),
  \end{equation*}
  or equivalently, since $q_{\mathtt{ACC}}=a_{2}a_{3}a_{5}$,
  \begin{equation*} 
    4(a_{2}a_{3}a_{5})^{2}
      \left( q_{\mathtt{CGT}}^{2}-q_{\mathtt{ACC}}q_{\mathtt{CAC}}q_{\mathtt{CCA}} \right).
    \end{equation*}
    This is negative by inequality \eqref{ineq:hybrid}, and hence the
    quadratic is nonzero for all
    $a_{2}\in \mathbb{R}\backslash\left\{0\right\}$.
\end{proof}

The next lemma establishes the backwards direction
of \cref{thm:m1-semialgebraic-description-in-qs}.
\begin{lemma}
  \label{lem:main-theorem--proof-of-reverse-direction}
  If $(q_{\mathtt{ACC}},q_{\mathtt{CAC}},q_{\mathtt{CCA}},q_{\mathtt{CGT}}) \in \mathbb R^4$  satisfies inequalities \eqref{ineq:0<q<1}, \eqref{ineq:ACC-CGT}, \eqref{ineq:CGT-ACC.CAC}, \eqref{ineq:hybrid}, and \eqref{ineq:a23}, then $(q_{\mathtt{ACC}},q_{\mathtt{CAC}},q_{\mathtt{CCA}},q_{\mathtt{CGT}}) \in \mathcal{M}_{1}$.
\end{lemma}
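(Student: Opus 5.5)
The plan is to use the one-parameter family of solutions from \cref{lem:existence-of-a-family-of-solutions}. Since inequality \eqref{ineq:hybrid} holds, every $a_{2}\neq 0$ gives a real solution $(a_2,a_3,a_4,a_5,a_6)$ of \eqref{eq:form-of-parameterization}, with $a_3,\dots,a_6$ given by \eqref{eq:colbys-a3}--\eqref{eq:colbys-a6}. So it suffices to find one $a_{2}\in(0,1)$ for which the resulting tuple lies in $\widetilde{\Theta}_{\mathcal N_1}$. I will turn each defining condition of $\widetilde{\Theta}_{\mathcal N_1}$ into a condition on $a_{2}$, find a nonempty interval of admissible values, and check that this interval meets $(0,1)$. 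Abbreviate $q_1=q_{\mathtt{ACC}}$, $q_2=q_{\mathtt{CAC}}$, $q_3=q_{\mathtt{CCA}}$, $q_4=q_{\mathtt{CGT}}$, and set $D(a_2)=q_1(q_2a_2^2-2q_4a_2+q_1q_3)$, the common denominator of $a_4$ and $a_6$. As in the proof of \cref{lem:existence-of-a-family-of-solutions}, \eqref{ineq:hybrid} makes the discriminant of $D$ negative, so $D>0$ for all $a_2$ (using $q_1,q_2>0$).

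The conditions become the following, assuming throughout $a_2>0$ and $a_2>q_4/q_2$.
\begin{itemize}
\item Positivity of $a_3,a_5,a_6$: the factors $q_2a_2-q_4$ and $q_4a_2-q_1q_3$ are positive once $a_2>q_4/q_2$ and $a_2>q_1q_3/q_4$.
\item Positivity of $a_4$: immediate from $u>0$ and $D>0$.
\item $a_5<1$: equivalent to $q_2a_2^2-2q_4a_2+q_1q_3>0$, which holds automatically by the same discriminant argument.
\item $a_3<1$: rearranges to $a_2(q_4-q_1q_2)>q_1(q_3-q_4)$. By \eqref{ineq:CGT-ACC.CAC} this is $a_2> q_1(q_3-q_4)/(q_4-q_1q_2)$.
\item $a_4+a_6<1$: expanding the numerator gives $a_2u+(q_2a_2-q_4)(q_4a_2-q_1q_3)=q_4(q_2a_2^2-2q_4a_2+q_1q_3)$. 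Hence $a_4+a_6=q_4/q_1$, and the condition is exactly \eqref{ineq:ACC-CGT}, independent of $a_2$.
\item $a_4<1$ and $a_6<1$: these follow from $a_4+a_6<1$ together with positivity.
\end{itemize}

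It remains to show that the lower bound
\[
L=\max\Bigl\{\tfrac{q_4}{q_2},\ \tfrac{q_1q_3}{q_4},\ \tfrac{q_1(q_3-q_4)}{q_4-q_1q_2}\Bigr\}
\]
satisfies $L<1$; then any $a_2\in(L,1)$ works. The third term is $<1$ precisely when $q_4+q_1(q_4-q_2-q_3)>0$, which is \eqref{ineq:a23}. The first two terms being $<1$ amount to $q_4<q_2$ and $q_4>q_1q_3$, which are \eqref{ineq:CAC-CGT} and \eqref{ineq:CGT-ACC.CCA}. These are not among the hypotheses, so they must be derived from \eqref{ineq:0<q<1}--\eqref{ineq:a23}. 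I expect this derivation to be the main obstacle.

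For the derivation, I would first prove $q_4<q_2$ by contradiction. If $q_2\le q_4$, then \eqref{ineq:hybrid} gives $q_4^2<q_1q_2q_3\le q_1q_3q_4$, so $q_4<q_1q_3$. Next I would combine this with \eqref{ineq:a23}, rewritten as $q_4(1+q_1)>q_1(q_2+q_3)$, together with \eqref{ineq:ACC-CGT} and $q_i<1$, and try to force a contradiction; a numerical check suggests \eqref{ineq:a23} is what fails. Once $q_4<q_2$ is known, I would get $q_4>q_1q_3$ from \eqref{ineq:a23} in the form $q_4-q_1q_3>q_1(q_2-q_4)>0$. If the first step resists a direct argument, the fallback is a case split on the sign of $q_2-q_4$, or eliminating $q_3$ using the bound $q_3>q_4^2/(q_1q_2)$ from \eqref{ineq:hybrid}.
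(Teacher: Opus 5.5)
\textbf{Gap: the step $L<1$ is not proved.} Your reductions are all correct, and they follow the paper's strategy. You fix $a_2$ in the one-parameter family, turn each constraint of $\widetilde{\Theta}_{\mathcal N_1}$ into a lower bound on $a_2$, and observe that $a_4+a_6=q_4/q_1$. The problem is the final step: everything hinges on $q_4<q_2$, and your derivation of it is only sketched (``try to force a contradiction; a numerical check suggests\ldots''). You are right that \eqref{ineq:CAC-CGT} and \eqref{ineq:CGT-ACC.CCA} cannot be taken as hypotheses. As written, though, the proposal stops short of establishing the lemma.

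\textbf{The missing idea.} You do not need these two inequalities as separate inputs, because \eqref{ineq:hybrid} puts the three thresholds in $L$ in a fixed order.
\begin{itemize}
\item $q_4/q_2<q_1q_3/q_4$ is just $q_4^2<q_1q_2q_3$.
\item Multiply by the positive quantity $q_4(q_4-q_1q_2)/q_1$ (positive by \eqref{ineq:CGT-ACC.CAC}). Then $q_1q_3/q_4<q_1(q_3-q_4)/(q_4-q_1q_2)$ becomes $q_3q_4-q_1q_2q_3<q_3q_4-q_4^2$, which is again \eqref{ineq:hybrid}.
\end{itemize}
Hence
\[0<\tfrac{q_4}{q_2}<\tfrac{q_1q_3}{q_4}<\eta:=\tfrac{q_1(q_3-q_4)}{q_4-q_1q_2},\]
so $L=\eta$. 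You have already shown that $\eta<1$ is exactly \eqref{ineq:a23}. This is the first claim in the paper's proof, and it yields $q_4<q_2$ and $q_4>q_1q_3$ as byproducts rather than prerequisites.

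\textbf{Your route can also be completed.} Substitute $q_1q_2>q_4^2/q_3$ (from \eqref{ineq:hybrid}) into \eqref{ineq:a23} and multiply by $q_3$. This gives $(q_4-q_3)(q_4-q_1q_3)<0$. Since $q_1<1$, it follows that $q_1q_3<q_4$, and then $q_4/q_2<q_1q_3/q_4<1$. Some argument of this kind has to be supplied for the proof to be complete.
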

\begin{proof} By definition of $\mathcal{M}_{1}$, we need to show there exists
  a solution $(a_{2},a_{3},a_{4},a_{5},a_{6})$ to the polynomial system
  \eqref{eq:form-of-parameterization} satisfying
  $0<a_{2},a_{3},a_{4},a_{5},a_{6}<1$ and $a_{4}+a_{6}<1$.
  
  Assume $a_{3},a_{4},a_{5},a_{6}$ are given by
  \cref{eq:colbys-a3,eq:colbys-a4,eq:colbys-a5,eq:colbys-a6}, with $a_{2}$ to be chosen later. By
  \cref{lem:existence-of-a-family-of-solutions}, 
  it will suffice to show that there exists
  $a_{2}\in (0,1)$ such that $0<a_{3},a_{4},a_{5},a_{6}<1$ and $a_{4}+a_{6}<1$. Let
  \begin{equation*}
    \eta := \frac{q_{\mathtt{ACC}}(q_{\mathtt{CCA}}-q_{\mathtt{CGT}})}{q_{\mathtt{CGT}}-q_{\mathtt{ACC}}q_{\mathtt{CAC}}}.
  \end{equation*}
  We will show that $\eta\in (0,1)$, and that
  $0<a_{3},a_{4},a_{5},a_{6}<1$ and $a_{4}+a_{6}<1$ whenever $\eta < a_{2}< 1$. The proof
  proceeds through a series of claims. Our first claim presents a series of
  useful inequalities which, among other things, will verify that
  $\eta\in (0,1)$.
  
  \begin{claim}
    \label{claim:delta-inequalities}
    $0<\frac{q_{\mathtt{CGT}}}{q_{\mathtt{CAC}}}<\frac{q_{\mathtt{ACC}}q_{\mathtt{CCA}}}{q_{\mathtt{CGT}}}<\eta<1$.
  \end{claim}
  \begin{claimproof}
    We prove the inequalities from left to right. The first inequality follows
    trivially by the assumption that $q_{\mathtt{CGT}},q_{\mathtt{CAC}}>0$, and the second
    follows directly by inequality \eqref{ineq:hybrid}.

    To prove the third inequality, observe that
    $q_{\mathtt{ACC}}q_{\mathtt{CAC}}q_{\mathtt{CCA}}>q_{\mathtt{CGT}}^{2}$ by inequality \eqref{ineq:hybrid}.
    Multiplying by $-q_{\mathtt{ACC}}$ and then adding $q_{\mathtt{ACC}}q_{\mathtt{CCA}}q_{\mathtt{CGT}}$ to both
    sides yields
    \begin{align*}
      &q_{\mathtt{ACC}}q_{\mathtt{CCA}}q_{\mathtt{CGT}}-q_{\mathtt{ACC}}^{2}q_{\mathtt{CAC}}q_{\mathtt{CCA}} < q_{\mathtt{ACC}}q_{\mathtt{CCA}}q_{\mathtt{CGT}}-q_{\mathtt{ACC}}q_{\mathtt{CGT}}^{2}.
    \end{align*}
    Factoring both sides gives
    \begin{equation*}
      q_{\mathtt{ACC}}q_{\mathtt{CCA}}(q_{\mathtt{CGT}}-q_{\mathtt{ACC}}q_{\mathtt{CAC}})< q_{\mathtt{ACC}}q_{\mathtt{CGT}}(q_{\mathtt{CCA}}-q_{\mathtt{CGT}}).
    \end{equation*}
    Dividing both sides by $q_{\mathtt{CGT}}$ and $q_{\mathtt{CGT}}-q_{\mathtt{ACC}}q_{\mathtt{CAC}}$ (which inequality \eqref{ineq:CGT-ACC.CAC} guarantees to be positive) implies
    $\frac{q_{\mathtt{ACC}}q_{\mathtt{CCA}}}{q_{\mathtt{CGT}}}<\eta$. Finally, the inequality
    $\eta<1$ is readily verified using inequalities \eqref{ineq:CGT-ACC.CAC}
    and \eqref{ineq:a23}.
    
  \end{claimproof}

  Since $\eta\in (0,1)$, it remains only to show that
  $0<a_{3},a_{4},a_{5},a_{6}<1$ whenever $\eta < a_{2}<1$. The next
  two claims verify that $a_{3},a_{4},a_{5},a_{6}>0$ whenever
  $\eta< a_{2} <1$.

  \begin{claim}
    \label{claim:a3-and-a5-are-positive}
    If $\eta \leq a_{2}< 1$ then $a_{3}>0$ and $a_{5}>0$.
  \end{claim}
  \begin{claimproof}
    By \cref{eq:colbys-a3,eq:colbys-a5},
    \begin{align*} 
      a_{3} =
      \frac{q_{\mathtt{ACC}}q_{\mathtt{CAC}}\left( a_{2}-\frac{q_{\mathtt{CGT}}}{q_{\mathtt{CAC}}}\right) }
               {q_{\mathtt{CGT}}\left( a_{2}-\frac{q_{\mathtt{ACC}}q_{\mathtt{CCA}}}{q_{\mathtt{CGT}}} \right)}
               \qquad\text{and}\qquad
      a_{5} =
      \frac{q_{\mathtt{CGT}}\left( a_{2} - \frac{q_{\mathtt{ACC}}q_{\mathtt{CCA}}}{q_{\mathtt{CGT}}} \right)}
      {q_{\mathtt{CAC}}a_{2}\left(a_{2}-\frac{q_{\mathtt{CGT}}}{q_{\mathtt{CAC}}}\right)}.
    \end{align*}
    Using the inequalities in the statement of \cref{claim:delta-inequalities}, it is
    clear by inspection of these formulas that both $a_{3}$ and $a_{5}$ are
    positive when $a_{2} \geq \eta$.
    
  \end{claimproof}

  \begin{claim}
    \label{claim:a4-and-a6-are-positive}
    If $\eta \leq a_{2}< 1$ then  $a_{4}> 0$ and $a_{6}>0$.
  \end{claim}
  \begin{claimproof}
    Assume $a_{2}\geq \eta$. To show that $a_{4}>0$, we will show that both the numerator and
    denominator of \cref{eq:colbys-a4} are positive. By completing the square, the denominator can be written as
    \begin{align}\label{eq:a4-denominator-completed-square}
      \begin{aligned}
        &q_{\mathtt{ACC}}q_{\mathtt{CAC}}a^{2}_2-2q_{\mathtt{ACC}}q_{\mathtt{CGT}}a_2+q_{\mathtt{ACC}}^{2}q_{\mathtt{CCA}} 
        = q_{\mathtt{ACC}}q_{\mathtt{CAC}} 
          \left( a_{2}-\frac{q_{\mathtt{CGT}}}{q_{\mathtt{CAC}}} \right)^{2}+ \frac{q_{\mathtt{ACC}}}{q_{\mathtt{CAC}}}u,
      \end{aligned}
    \end{align}
    which is positive since $u>0$ due to inequality \eqref{ineq:hybrid}. The numerator is also positive by \eqref{ineq:hybrid}, hence $a_{4}>0$.

    Similarly, we will show $a_{6}>0$ using the same approach, namely by showing that both the numerator and
    denominator of \cref{eq:colbys-a6} are positive. Positivity of the
    denominator follows immediately by
    \cref{eq:a4-denominator-completed-square}. To see that the numerator is
    also positive, first observe that it can be written as
    \begin{align*}
      &(q_{\mathtt{CAC}}a_{2}-q_{\mathtt{CGT}})(q_{\mathtt{CGT}}a_{2}-q_{\mathtt{ACC}}q_{\mathtt{CCA}}) 
      = q_{\mathtt{CAC}}q_{\mathtt{CGT}} \left( a_{2} - \frac{q_{\mathtt{CGT}}}{q_{\mathtt{CAC}}} \right) \left( a_{2}- \frac{q_{\mathtt{ACC}}q_{\mathtt{CCA}}}{q_{\mathtt{CGT}}} \right).
    \end{align*}
    \cref{claim:delta-inequalities} implies that the right-hand side is a product of
    positive terms when $a_{2}\geq \eta$. Hence $a_{6}>0$.
    
  \end{claimproof}

  It remains to show that $a_{3},a_{4},a_{5},a_{6}<1$ whenever $\eta <
  a_{2}< 1$, which we show in the next three claims.

  \begin{claim}
    \label{claim:a3-is-less-than-1}
    If $\eta < a_{2} < 1$ then $a_{3}<1$.
  \end{claim}
  \begin{claimproof}
    Taken together, the inequalities $\eta<a_{2} <1$ and inequality \eqref{ineq:CGT-ACC.CAC}
    imply that
    \begin{equation*}
      q_{\mathtt{ACC}}(q_{\mathtt{CCA}}-q_{\mathtt{CGT}})< (q_{\mathtt{CGT}}-q_{\mathtt{ACC}}q_{\mathtt{CAC}})a_{2}.
    \end{equation*}
    Rearranging terms gives
    \begin{equation*} 
      q_{\mathtt{ACC}}q_{\mathtt{CAC}}a_{2} - q_{\mathtt{ACC}}q_{\mathtt{CGT}}< q_{\mathtt{CGT}}a_{2} - q_{\mathtt{ACC}}q_{\mathtt{CCA}}.
    \end{equation*}
    The right-hand side is positive (by \cref{claim:delta-inequalities} and the
    assumption that $a_{2}> \eta$), and hence
    \begin{equation*}
      \frac{q_{\mathtt{ACC}}q_{\mathtt{CAC}}a_2 - q_{\mathtt{ACC}}q_{\mathtt{CGT}}}{q_{\mathtt{CGT}}a_{2} - q_{\mathtt{ACC}}q_{\mathtt{CCA}}}<1.
    \end{equation*}
    Since the left hand side is precisely $a_3$ by \cref{eq:colbys-a3}, the
    claim follows.
    
  \end{claimproof}

  \begin{claim}
    \label{claim:a4-and-a6-are-less-than-1}
    If $\eta \leq a_{2}< 1$ then $a_{4}<1$, $a_{6}<1$, and $a_{4}+a_{6}<1$.
  \end{claim}
  \begin{claimproof}
    By the formulas in \cref{eq:form-of-parameterization} and inequality
    \eqref{ineq:ACC-CGT}, $a_{4}+a_{6} = \frac{q_{\mathtt{CGT}}}{q_{\mathtt{ACC}}}<1.$ Moreover by
    \cref{claim:a4-and-a6-are-positive}, $a_{4},a_{6}>0$ when $a_{2}\geq \eta$.
    Therefore $a_{4}<1$ and $a_{6}<1$.
    
  \end{claimproof}

  \begin{claim}
    \label{claim:a5-is-less-than-1}
    If $\eta \leq a_{2} < 1$ then $a_{5}<1$.
  \end{claim}
  \begin{claimproof}
    Inequality \eqref{ineq:hybrid} states that $u > 0$, so the following inequality holds for any $a_{2}\in \mathbb{R}$:
    \begin{equation*}
      q_{\mathtt{CAC}} \left( a_{2}-\frac{q_{\mathtt{CGT}}}{q_{\mathtt{CAC}}} \right)^{2} + \frac{u}{q_{\mathtt{CAC}}}>0. 
    \end{equation*}
    Expanding and rearranging terms, we can rewrite the above inequality as
    \begin{equation}\label{eq:a5-is-less-than-1-eq1}
      q_{\mathtt{CGT}}a_{2} - q_{\mathtt{ACC}}q_{\mathtt{CCA}}< q_{\mathtt{CAC}}a_{2}^{2} -q_{\mathtt{CGT}}a_{2}.
    \end{equation}
    If $a_{2} \geq \eta$, then \cref{claim:delta-inequalities} implies
    $a_{2}>q_{\mathtt{CGT}}/q_{\mathtt{CAC}}$, which implies that the right-hand side of
    \cref{eq:a5-is-less-than-1-eq1} is positive. Therefore
    \begin{equation*}
      \frac{q_{\mathtt{CGT}}a_{2} - q_{\mathtt{ACC}}q_{\mathtt{CCA}}}{q_{\mathtt{CAC}}a_{2}^{2} -q_{\mathtt{CGT}}a_{2}}<1,
    \end{equation*}
    and hence $a_{5}<1$ by \cref{eq:colbys-a5}.
    
  \end{claimproof}

  We can now finish the proof of \cref{lem:main-theorem--proof-of-reverse-direction}. By
  \cref{claim:delta-inequalities}, $\eta\in (0,1)$, allowing us to choose $a_{2}\in (\eta,1)$. Letting $a_{3},a_{4},a_{5},a_{6}$ be given
  by the formulas in
  \cref{eq:colbys-a3}-\eqref{eq:colbys-a6},
  \cref{lem:existence-of-a-family-of-solutions} implies that
  $(a_{2},a_{3},a_{4},a_{5},a_{6})$ is a solution to the polynomial system
  \eqref{eq:form-of-parameterization}. Moreover $a_{3},a_{4},a_{5},a_{6}>0$ (by \cref{claim:a3-and-a5-are-positive,claim:a4-and-a6-are-positive}), and
  $a_{3},a_{4},a_{5},a_{6}<1$
  (by \cref{claim:a3-is-less-than-1,claim:a4-and-a6-are-less-than-1,claim:a5-is-less-than-1}).
  Hence $(a_{2},a_{3},a_{4},a_{5},a_{6})\in (0,1)^{5}$. Finally, $a_{4}+a_{6}<1$ by \cref{claim:a4-and-a6-are-less-than-1}.
\end{proof}

Thus, we have proven both directions of \Cref{thm:m1-semialgebraic-description-in-qs}, giving a complete semialgebraic description of $\mathcal{M}_1$.


\section{The Fourier Transform}
\label{app:fourier-transform}
In this section we show how to use the Fourier transform to translate between
the site pattern probabilities defined in
\cref{eq:site-pattern-probability-formula} and the simplified Jukes-Cantor
$q$-coordinates defined in \cref{eq:q-defs}. We also prove the uniform
volume-scaling property cited in \Cref{obs:intersection-volumes-in-qs}.
Accompanying code and computations for this section can be found in the
supplemental materials in the file \texttt{Fourier-transform-appendix.jl}.
For additional details on the Fourier transformation, see \citep{SS05} and \citep[Chapter~15, pp.~335--370]{sullivant2023algebraic}.

We begin by making the observation that a given site-pattern distribution
arising from a $3$-leaf network (defined in
\cref{eq:site-pattern-probability-formula}) may be encoded as a point
$\mathbf{p}:=(p_{0},p_{1},p_{2},p_{3},p_{4})^{\top}\in \Delta_{4}$, where
\begin{align*}
  p_{0} &=\Pr\left[X_{1}=X_{2}=X_{3} \right]\\
  p_{1} &=\Pr\left[X_{1}\neq X_{2}=X_{3} \right]\\
  p_{2} &=\Pr\left[X_{1}=X_{3}\neq X_{2} \right]\\
  p_{3} &=\Pr\left[X_{1}=X_{2}\neq X_{3} \right]\\
  p_{4} &=\Pr\left[X_{1},X_{2},X_{3} \text{ are all distinct} \right].
\end{align*}
Due to symmetries of the Jukes-Cantor model, there are $5$ equivalence classes
for the leaf-state assignment probabilities of
\cref{eq:site-pattern-probability-formula}, and we may take
$p_{\mathtt{AAA}},p_{\mathtt{ACC}},p_{\mathtt{CAC}},p_{\mathtt{CCA}},p_{\mathtt{CGT}}$ as our system of representatives
(see \cite{smalltrees}). Then we observe that since
$p_{\mathtt{AAA}}=p_{\mathtt{CCC}}=p_{\mathtt{GGG}}=p_{\mathtt{TTT}}$,
\begin{equation*}
  p_{0} = p_{\mathtt{AAA}}+p_{\mathtt{CCC}}+p_{\mathtt{TTT}}+p_{\mathtt{GGG}} = 4p_{\mathtt{AAA}}.
\end{equation*}
Similar considerations
yield $p_{1}=12 p_{\mathtt{ACC}}$, $p_{2} = 12p_{\mathtt{CAC}}$, $p_{3} = 12p_{\mathtt{CCA}}$, and
$p_{4} = 24p_{\mathtt{CGT}}$.

To obtain the simplified Jukes-Cantor $q$-coordinates, we apply the following linear
change of coordinates:
\begin{equation*}
  \begin{bmatrix}
    q_{\mathtt{AAA}}\\q_{\mathtt{ACC}}\\q_{\mathtt{CAC}}\\q_{\mathtt{CCA}}\\q_{\mathtt{CGT}}
  \end{bmatrix}
  =
  \begin{bmatrix}
    1&     1&     1&     1&     1\\
    1&     1&  -1/3&  -1/3&  -1/3\\
    1&  -1/3&     1&  -1/3&  -1/3\\
    1&  -1/3&  -1/3&     1&  -1/3\\
    1&  -1/3&  -1/3&  -1/3&  \phantom{-}1/3
  \end{bmatrix}
  \begin{bmatrix}
    p_{0}\\p_{1}\\p_{2}\\p_{3}\\p_{4}
  \end{bmatrix}.
\end{equation*}
The $5\times 5$ matrix in this equation is the \textit{specialized Fourier transformation} \citep{smalltrees}, which we will denote by $F$. Note that $q_{\mathtt{AAA}}=1$ since $p_{0}+p_{1}+p_{2}+p_{3}+p_{4}=1$. Thus,
to obtain the simplified Jukes-Cantor $q$-coordinates from a point
$\mathbf{p}\in \Delta_{4}$, we apply the
transformation $\mathcal{F} := \pi\circ F$,
where $\pi$ is the projection
\begin{equation*}
  \pi(x_{1},x_{2},x_{3},x_{4},x_{5})=(x_{2},x_{3},x_{4},x_{5}).
\end{equation*}
Formally, the \textit{space of simplified Jukes-Cantor q-coordinates} is
the set
$\mathcal{S} := \left\{\mathcal{F}(\mathbf{p}) : \mathbf{p}\in
  \Delta_{4}\right\}$. The site pattern probabilities can be recovered from
the simplified Fourier coordinates by inverting $F$.

The next lemma shows that $\mathcal{F}$ scales 4-dimensional volume uniformly
by the constant factor $\frac{2}{\sqrt{5}}\left( \frac{4}{3} \right)^{4}$.
\begin{lemma}[Volume scaling]
  \label{lem:volume-scaling}
  Let $E\subseteq \Delta_{4}$ be measurable. Then
  \begin{equation*}
    |\mathcal{F}(E)| 
    = \frac{2}{\sqrt{5}}\left(\frac{4}{3}\right)^{4} \cdot |E|,
  \end{equation*}
  where $|\cdot|$ denotes the $4$-dimensional volume.
\end{lemma}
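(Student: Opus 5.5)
The plan is to split $\mathcal{F}$ restricted to $\Delta_{4}$ into two affine maps. The first is a chart for the simplex, and we compare its 4-dimensional Hausdorff measure with Lebesgue measure. The second is an affine map $\mathbb{R}^{4}\to\mathbb{R}^{4}$, whose effect on volume is given by the absolute value of its determinant. The two scaling factors multiply to give the claimed constant.

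\textbf{Step 1 (chart).} Let $\rho:\mathbb{R}^{5}\to\mathbb{R}^{4}$ be the projection $\rho(p_{0},p_{1},p_{2},p_{3},p_{4})=(p_{1},p_{2},p_{3},p_{4})$. Its restriction to $\Delta_{4}$ is a bijection onto $D=\{x\in\mathbb{R}^{4}_{\ge 0}: \sum x_{i}\le 1\}$. The inverse is the graph map $g(x)=(1-\sum_i x_i,\,x)$, which is affine with gradient $(-1,-1,-1,-1)$ in its first coordinate. By the area formula (e.g.\ \citep{krantz2008geometric}), the Jacobian of $g$ is the constant $\sqrt{\det(Dg^{\top}Dg)}=\sqrt{\det(I_{4}+\mathbf{1}\mathbf{1}^{\top})}=\sqrt{1+4}=\sqrt{5}$. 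Hence $|E|=\sqrt{5}\,\lambda(\rho(E))$ for every measurable $E\subseteq\Delta_{4}$, where $\lambda$ is Lebesgue measure on $\mathbb{R}^{4}$.

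\textbf{Step 2 (affine map).} On $\Delta_{4}$ we have $\mathcal{F}=\pi\circ F\circ g\circ\rho$. Write $h=\pi\circ F\circ g:\mathbb{R}^{4}\to\mathbb{R}^{4}$. Substituting $p_{0}=1-\sum_{j\ge1}p_{j}$, the map $h$ is affine with constant term $(1,1,1,1)$. Its linear part $A$ has entries $F_{i,j}-F_{i,0}$ for $i,j=1,\dots,4$:
\begin{equation*}
A=-\tfrac{4}{3}\begin{bmatrix}0&1&1&1\\1&0&1&1\\1&1&0&1\\1&1&1&\tfrac12\end{bmatrix}
=-\tfrac{4}{3}\bigl(\mathbf{1}\mathbf{1}^{\top}-\operatorname{diag}(1,1,1,\tfrac12)\bigr).
\end{equation*}
With $D=\operatorname{diag}(1,1,1,\tfrac12)$, the matrix determinant lemma gives
\begin{equation*}
\det(\mathbf{1}\mathbf{1}^{\top}-D)=\det(-D)\bigl(1-\mathbf{1}^{\top}D^{-1}\mathbf{1}\bigr)=\tfrac12\,(1-5)=-2.
\end{equation*}
Therefore $|\det A|=2\left(\tfrac43\right)^{4}$. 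By the change of variables formula, $\lambda(h(S))=2\left(\tfrac43\right)^{4}\lambda(S)$ for every measurable $S\subseteq\mathbb{R}^{4}$.

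\textbf{Step 3 (combine).} Since $\mathcal{F}(E)=h(\rho(E))$, Steps 1 and 2 give
\begin{equation*}
|\mathcal{F}(E)|=2\left(\tfrac43\right)^{4}\lambda(\rho(E))=\frac{2}{\sqrt5}\left(\tfrac43\right)^{4}|E|.
\end{equation*}
Measurability is preserved throughout because $\rho|_{\Delta_4}$ and $h$ are affine bijections (note $\det A\neq0$).

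The only step needing care is Step 1: $|E|$ is surface measure on a 4-dimensional affine slice of $\mathbb{R}^{5}$, not Lebesgue measure on $\mathbb{R}^{4}$. This is exactly where the factor $\sqrt{5}$ enters, via the Gram determinant of the graph parametrization. The determinant in Step 2 is routine once $A$ is written as a rank-one update of a diagonal matrix.
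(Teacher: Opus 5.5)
Your proof is correct and follows the paper's route: both write $\mathcal{F}=\mathbf{1}+A\pi$ on $\Delta_4$, use the area formula with Jacobian $\sqrt{5}$ for the graph parametrization inverting the projection, and multiply by $|\det A|=2\left(\tfrac{4}{3}\right)^{4}$. The only difference is that you make explicit the derivation of $A$ and the determinant (via the matrix determinant lemma) that the paper calls a direct computation. One small notational point: you use $D$ both for the image of the simplex and for the diagonal matrix.
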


\begin{proof}
  It is straightforward to verify that for all 
  $\mathbf{p}=(p_{0},p_{1},p_{2},p_{3},p_{4})^{\top}\in \Delta_{4}$,
  \begin{equation}\label{eq:2}
    \mathcal{F}(\mathbf{p}) = \mathbf{1} + A \pi(\mathbf{p})
  \end{equation}
  where $\mathbf{1}=(1,1,1,1)^{\top}$ and
  \begin{equation*}
    A= -\frac{4}{3}
    \begin{bmatrix}
      0&1&1&1\\
      1&0&1&1\\
      1&1&0&1\\
      1&1&1&1/2
    \end{bmatrix}.
  \end{equation*}
  A direct computation gives $\det(A) = -2 \left( \frac{4}{3} \right)^{4}$.

  Further, observe that when restricted to $\Delta_{4}$, the map $\pi$ is a bijection
  onto $\pi(\Delta_{4})$, with inverse
  $T(u_{1},\ldots,u_{4})=
  (1-u_{1}-u_{2}-u_{3}-u_{4},u_{1},u_{2},u_{3},u_{4})^{\top}$.  Since the $4$-dimensional Jacobian of $T$ is $\sqrt{5}$, it follows by the
  area formula \cite[Theorem~5.1.1, Chapter 5, p.125]{krantz2008geometric} applied to $\pi^{-1}=T$ that
  \begin{equation}\label{eq:5}
    |\pi(E)| = \frac{1}{\sqrt{5}} |E|.
  \end{equation}
  for every measurable $E\subseteq \Delta_{4}$.
  Therefore, since volume is translation invariant,
  \begin{align*}
    |\mathcal{F}(E)|
    &=|\mathbf{1}+A\pi(E)| \\
    &=|A\pi(E)|\\
    &=|\det(A)|\cdot |\pi(E)|\\
    &=2\left( \frac{4}{3} \right)^{4} \cdot | \pi(E)| &&\text{since $|\det(A)| = 2\left(\frac{4}{3}\right)^{4}$}\\
    &=\frac{2}{\sqrt{5}} \left( \frac{4}{3} \right)^{4}\cdot |E| &&\text{by \cref{eq:5}}.
  \end{align*}
\end{proof}

\begin{remark}
  \label{rmk:volume-scaling}
  \cref{lem:volume-scaling} can be used to justify the volume claims made in
  \cref{obs:intersection-volumes-in-qs}. For example, since
  $|\Delta_{4}| = \frac{\sqrt{5}}{24}$, \cref{lem:volume-scaling} implies that the space of 
  simplified Fourier coordinates $\mathcal{S}$ has 4-dimensional volume
  $4^{3}/3^{5}\approx 0.263$.  Since approximately $1.76\%$ of $\Delta_{4}$ is mapped to
  $\mathcal{M}_{1}$ (as shown in \Cref{fig:venn-simplex}), and since the transformation $\mathcal{F}$ scales the $4$-dimensional volume of subsets of $\Delta_{4}$ uniformly, it follows that $\mathcal{M}_{1}$ occupies about $1.76\%$ of the volume of $\mathcal{S}$. Hence $\mathcal{M}_{1}$ has volume
  $(0.0176)|\mathcal{S}| \approx 0.00464$.
  
  Similarly, \Cref{fig:venn-simplex} shows that for any $i\neq j$, approximately $0.7\%$ of the simplex is mapped to $\mathcal{M}_{i}\cap\mathcal{M}_{j}$, and approximately $0.56\%$ is mapped to $\mathcal{M}_{1}\cap\mathcal{M}_{2}\cap\mathcal{M}_{3}$. Therefore $|\mathcal{M}_{i}\cap\mathcal{M}_{j}|=(0.007)|\mathcal{S}|\approx 0.00184$ for $i\neq j$, and $|\mathcal{M}_{1}\cap\mathcal{M}_{2}\cap\mathcal{M}_{3}|=(0.0056)|\mathcal{S}|\approx 0.00147$.
\end{remark}

\section{Details for \Cref{fig:distinguishability-a5-root} Simulation}
\label{app:details-for-fig:distinguishability-a5-root-simulation}
For each $\delta\in \left\{.01,.02,\ldots,.99\right\}$ we simulated $n=10^{7}$
random networks of the form shown in \Cref{fig:distinguishability-a5-root}. In
particular, $h_{4}$ is the height of the network, and edges $a_{6}$ and
$a_{4}$ diverge at times $h_{3}$ and $h_{2}$ respectively, and then hybridize
at time $h_{1}$. We assumed a molecular clock, so that the Fourier parameters
were taken to be
\begin{align*}
  a_{i}  &=  e^{-\frac{4}{3}h_{i}}, \ i=1,2,3,\\
  a_{4}  &= e^{-\frac{4}{3}(h_{2}-h_{1})}\\
  a_{5}  &= e^{-\frac{4}{3}(2h_{4}-h_{2}-h_{3})}\\
  a_{6}  &= e^{-\frac{4}{3}(h_{3}-h_{1})}
\end{align*}
where the intervals $h_{1},h_{2}-h_{1},h_{3}-h_{2},$ and $h_{4}-h_{3}$ were
drawn independently and uniformly at random from the interval $(0,0.5)$.
Varying the maximum of the uniform distribution had the effect of scaling the
height of the bump, with `proportion distinguishable' increasing when $.5$ was
replaced by $1$ and decreasing when smaller values were used.
Code for running this simulation and replicating the plot in
\Cref{fig:distinguishability-a5-root} can be found in the file
\texttt{delta-figure-plot.jl} in the Supplementary Materials.
\newpage

\end{document}